\title{On complex roots of the independence polynomial}
\newtheorem{theorem}{Theorem}[section]
\newtheorem{observation}[theorem]{Observation}
\newtheorem{lemma}[theorem]{Lemma}
\theoremstyle{definition}
\newtheorem{definition}[theorem]{Definition}
\newtheorem{remark}[theorem]{Remark}
\newtheorem*{question*}{Question}
\numberwithin{equation}{section}
\crefname{theorem}{Theorem}{Theorems}
\crefname{observation}{Observation}{Observations}
\crefname{claim}{Claim}{Claims}
\crefname{condition}{Condition}{Conditions}
\crefname{example}{Example}{Examples}
\crefname{fact}{Fact}{Facts}
\crefname{lemma}{Lemma}{Lemmas}
\crefname{corollary}{Corollary}{Corollaries}
\crefname{definition}{Definition}{Definitions}
\crefname{remark}{Remark}{Remarks}
\crefname{proposition}{Proposition}{Propositions}
\crefname{section}{Section}{Sections}
\newcommand{\abs}[1]{\lvert #1 \rvert}
\newcommand{\defeq}{\,:=\,}                                     \newcommand{\ceil}[1]{\left\lceil #1 \right\rceil}
\newcommand{\bC}{\mathbb{C}}
\renewcommand{\mathbf}[1]{\bm{#1}}
\renewcommand{\abs}[1]{\ensuremath{\left|#1\right|}}
\newcommand{\diff}[2]{\frac{\text{d}#1}{\text{d}#2}}
\newcommand{\inb}[1]{\left\{#1\right\}}
\newcommand{\inp}[1]{\left(#1\right)}
\newcommand{\insq}[1]{\left[#1\right]}
\newcommand{\R}[0]{\ensuremath{\mathbb{R}}}
\DeclareMathOperator*{\argmax}{arg\,max}
\newcommand{\CC}{\mathbb C}
\newcommand{\cG}{\mathcal G}
\newcommand{\im}{\ensuremath{\iota}}
\begin{document}
\newcommand{\reg}[1]{\ensuremath{\text{${#1}$-good}}}
\newcommand{\conv}[1]{\mathrm{conv}\inb{#1}}

\author{Ferenc Bencs\thanks{Korteweg de Vries Institute for Mathematics,
    University of Amsterdam. Email: \texttt{ferenc.bencs@gmail.com}.} \and Péter Csikvári\thanks{Alfréd Rényi Institute of Mathematics and Eötvös Loránd
    University. Email: \texttt{peter.csikvari@gmail.com}.}\and Piyush Srivastava\thanks{Tata Institute of Fundamental Research. Email:
    \texttt{piyush.srivastava@tifr.res.in}.}  \and Jan
  Vondr\'{a}k\thanks{Stanford University. Email: \texttt{jvondrak@stanford.edu}.}
}

\date{}
\maketitle
{\let\thefootnote\relax
  \footnotetext{FB was supported by the NKFIH (National Research,
    Development and Innovation Office, Hungary) grant KKP-133921 when the
    project started. After this, he was funded by the Netherlands Organisation
    of Scientific Research (NWO): VI.Vidi.193.068.
    PC is supported by the MTA-R\'enyi Momentum Counting in Sparse
    Graphs Research Group.
    PS acknowledges support from the DAE, Government of
    India, under project no. RTI4001, from the Ramanujan Fellowship of SERB,
    from the Infosys foundation, through its support for the
    Infosys-Chandrasekharan virtual center for Random Geometry, and from Adobe
    Systems Incorporated via a gift to TIFR.  Part of this work was performed
    when the authors were visiting the Simons Institute for the Theory of
    Computing at the University of California, Berkeley.  The contents of this
    paper do not necessarily reflect the views of the funding agencies listed
    above.}
}

 \maketitle
\begin{abstract}
  The independence polynomial of a graph is the generating polynomial of all its
  independent sets.  Formally, given a graph $G$, its independence polynomial
  $Z_G(\lambda)$ is given by $\sum_{I} \lambda^{|I|}$, where the sum is over all
  independent sets $I$ of $G$.  The independence polynomial has been an
  important object of study in both combinatorics and computer science. In
  particular, the algorithmic problem of estimating $Z_G(\lambda)$ for a fixed
  positive $\lambda$ on an input graph $G$ is a natural generalization of the
  problem of counting independent sets, and its study has led to some of the
  most striking connections between computational complexity and the theory of
  phase transitions.  More surprisingly, the independence polynomial for
  negative and complex values of $\lambda$ also turns out to be related to
  problems in statistical physics and combinatorics.  In particular, the
  locations of the complex roots of the independence polynomial of bounded
  degree graphs turn out to be very closely related to the Lovász local lemma,
  and also to the questions in the computational complexity of counting.
  Consequently, the locations of such zeros have been studied in many works.  In
  this direction, it is known from the work of
  Shearer~\cite{shearer_problem_1985} and of Scott and
  Sokal~\cite{scott_repulsive_2005} -- inspired by the study of the Lovász local
  lemma -- that the independence polynomial $Z_G(\lambda)$ of a graph $G$ of
  maximum degree at most $d+1$ does not vanish provided that
  $\abs{\lambda} \leq \frac{d^d}{(d+1)^{d+1}}$.  Significant extensions of this
  result have recently been given in the case when $\lambda$ is in the
  \emph{right} half-plane (i.e., when $\Re \lambda \geq 0$) by Peters and
  Regts~\cite{peters_conjecture_2019} and Bencs and
  Csikvári~\cite{bencs18:_note}.  In this paper, our motivation is to further
  extend these results to find new zero free regions not only in the right half
  plane, but also in the \emph{left} half-plane, that is, when
  $\Re \lambda \leq 0$.

  We give new geometric criterions for establishing zero-free regions as well as
  for carrying out semi-rigorous numerical explorations.  We then provide two
  examples of the (rigorous) use of these criterions, by establishing two new
  zero-free regions in the left-half plane.  We also extend the results of Bencs
  and Csikvári~\cite{bencs18:_note} for the right half-plane using our
  framework.  By a direct application of the interpolation method of
  Barvinok~\cite{barvinok2017combinatorics}, combined with extensions due to
  Patel and Regts~\cite{patel2017deterministic}, our results also imply
  deterministic polynomial time approximation algorithms for the independence
  polynomial of bounded degree graphs in the new zero-free regions.
\end{abstract}

 \thispagestyle{empty}

\newpage
\pagestyle{plain}
\setcounter{page}{1}

\section{Introduction}
\label{sec:introduction}
The independence polynomial, also known as the partition function of the hard
core lattice gas in the statistical physics literature, is the graph polynomial
given by
\begin{displaymath}
  Z_G(\lambda) \defeq \sum_{I:\text{independent set in $G$}}\lambda^{|I|}.
\end{displaymath}
An independent set in a graph $G$ is subset of its vertices no two of which are
adjacent in $G$.  In statistical mechanics, the polynomial arises in the
modeling of adsorption phenomena (usually with $G$ being a lattice); while in
combinatorics, it is the natural generating function of independent sets of
graphs, and offers a natural generalization to the problem of counting
independent sets in a graph.  These connections have led to the polynomial being
studied extensively in the setting $\lambda > 0$, both in statistical physics
and in computational complexity, and, in particular, has led to some very tight
connections between the two fields~\cite{sly12,Weitz}.

The setting $\lambda < 0$, and more generally, of complex $\lambda$ is also of
interest.  In particular, the problem of understanding where the complex zeros
of $Z_G$ lie for graphs $G$ in a given class is of special interest.  In
statistical mechanics, it relates to the Yang-Lee theory of phase
transitions~\cite{leeyan52}.  In the special case when $G$ is a lattice, the
work of Dobrushin and Shlosman~\cite{DS85,DS87} also related the question to
other, more probabilistic notions of phase transitions.  In combinatorics, the
behavior of $Z_G$ at negative and complex $\lambda$ plays an important role in
the study of the Lovász local lemma; for a detailed discussion of this
connection, we refer to the work of Shearer~\cite{shearer_problem_1985} as
elucidated by Scott and Sokal~\cite{scott_repulsive_2005}.  For our purposes, we start
with the following result proved in the above two papers.  We denote by
$\cG_\Delta$ the set of finite graphs with vertex degrees at most $\Delta$, for
some fixed $\Delta \geq 3$.

\begin{theorem}[\textbf{\cite{shearer_problem_1985}, see also Corollary 5.7 and
    the discussion following it in \cite{scott_repulsive_2005}}]
  Let $d \geq 2$ be an integer. If $\lambda \in \bC$ is such that
  $\abs{\lambda} \leq \lambda^*(d) \defeq \frac{d^d}{(d + 1)^{d+1}}$ then
  $Z_G(\lambda) \neq 0$ for all graphs $G \in \cG_{d+1}$. Further, for any
  negative real $\lambda_1 < -\lambda^*(d)$, there exists a graph
  $G \in G_{d+1}$ and $\lambda'$ satisfying
  $\lambda_1 < \lambda' < -\lambda^*(d)$ such that
  $Z_G(\lambda') = 0$.\label{thm:shearer}
\end{theorem}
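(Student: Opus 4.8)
The plan is to establish the two assertions separately, since they require quite different arguments.

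\emph{Zero-freeness on the disk $\abs{\lambda}\le\lambda^*(d)$.} I would argue by induction on $\abs{V(G)}$, proving the stronger statement that whenever $\abs{\lambda}\le\lambda^*(d)$ one has $Z_G(\lambda)\neq 0$ and, for every vertex $v$, the ratio $Z_{G-v}(\lambda)/Z_G(\lambda)$ lies in a fixed closed set $\cL=\cL(d)\subseteq\bC\setminus\{0\}$. The inductive step combines two ingredients. First, fixing a vertex $v$ with neighbours $u_1,\dots,u_m$ (so $m\le d+1$), the deletion identity $Z_G=Z_{G-v}+\lambda Z_{G-N[v]}$ together with the telescoping $Z_{G-N[v]}/Z_{G-v}=\prod_{i=1}^m Z_{H_{i-1}-u_i}/Z_{H_{i-1}}$, where $H_{i-1}=(G-v)-\{u_1,\dots,u_{i-1}\}$, rewrites $Z_G/Z_{G-v}$ as $1+\lambda\prod_{i=1}^m q_i$, each $q_i=Z_{H_{i-1}-u_i}/Z_{H_{i-1}}$ being a ratio in a strictly smaller member of $\cG_{d+1}$ and hence, inductively, lying in $\cL$. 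Second --- and this is the heart of the matter --- one must exhibit a set $\cL$ that (i) contains $\{(1+\lambda)^{-1}:\abs{\lambda}\le\lambda^*(d)\}$ (the single-vertex base case), (ii) is invariant, i.e.\ $(1+\lambda\prod_{i=1}^m s_i)^{-1}\in\cL$ whenever $\abs{\lambda}\le\lambda^*(d)$ and $s_1,\dots,s_m\in\cL$, and (iii) is bounded away from $0$. Such an $\cL$ exists precisely for $\abs{\lambda}\le\lambda^*(d)$: the extremal instance of the recursion is the $(d+1)$-regular tree, whose occupation-ratio recursion is the one-dimensional map $R\mapsto\lambda(1+R)^{-d}$, and at $\lambda=-\lambda^*(d)$ this map has the \emph{neutral} fixed point $R^*=-\tfrac1{d+1}$ --- indeed $R^*(1+R^*)^d=-\tfrac1{d+1}\bigl(\tfrac d{d+1}\bigr)^d=-\lambda^*(d)$, and the derivative there equals $1$ --- and this tangency both singles out the constant $d^d/(d+1)^{d+1}$ and forces the invariant region $\cL$ to collapse to a point on the boundary. (Alternatively one can run the region argument on trees and then transfer back to a general $G\in\cG_{d+1}$ via its self-avoiding-walk tree, which is again a tree of maximum degree at most $d+1$.) I expect the verification of (i)--(iii) for the optimal $\cL$ to be the main obstacle; it is exactly Shearer's computation of the Lovász-local-lemma threshold for the $(d+1)$-regular dependency graph, reinterpreted in terms of $Z_G$ as in \cite{scott_repulsive_2005}.

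\emph{Tightness on the negative axis.} For the second assertion I would give a self-contained argument with no comparable obstacle, once the first part is available to anchor a sign. Since $0<\lambda^*(d)<\tfrac1{d+1}<1$, the interval $\bigl(\max\{\lambda_1,-1\},\,-\lambda^*(d)\bigr)$ is nonempty; I fix $\lambda_0$ in it and let $T_j\in\cG_{d+1}$ be the complete rooted $d$-ary tree of depth $j$ (root $\rho$ and every non-leaf vertex having exactly $d$ children). The exact tree recursion gives the quantities $\alpha_j:=\lambda_0 Z_{T_j-N[\rho]}(\lambda_0)/Z_{T_j-\rho}(\lambda_0)$, which satisfy $\alpha_0=\lambda_0$, $\alpha_{j+1}=\lambda_0(1+\alpha_j)^{-d}$ and $1+\alpha_j=Z_{T_j}(\lambda_0)/Z_{T_j-\rho}(\lambda_0)$. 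Analysing $g(x)=\lambda_0(1+x)^{-d}$ on $(-1,\infty)$: it is increasing with image $(-\infty,0)$, its fixed points solve $x(1+x)^d=\lambda_0$, and since $x\mapsto x(1+x)^d$ attains its minimum $-\lambda^*(d)$ on $(-1,\infty)$ at $x=-\tfrac1{d+1}$, the choice $\lambda_0<-\lambda^*(d)$ leaves $g$ fixed-point-free there, so $g(x)<x$ throughout $(-1,\infty)$. Hence $(\alpha_j)$ strictly decreases as long as it stays in $(-1,\infty)$, and it cannot stay there forever --- a finite limit would be a fixed point, and a limit equal to $-1$ is impossible because then $g(\alpha_j)\to-\infty$ --- so some $\alpha_j\le-1$. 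Taking the least such $j$, one has $\alpha_0,\dots,\alpha_{j-1}\in(-1,0)$, hence $Z_{T_0}(\lambda_0),\dots,Z_{T_{j-1}}(\lambda_0)>0$ and $Z_{T_j-\rho}(\lambda_0)=Z_{T_{j-1}}(\lambda_0)^d>0$, while $Z_{T_j}(\lambda_0)=(1+\alpha_j)\,Z_{T_j-\rho}(\lambda_0)\le 0$. If this last quantity is $0$ we take $\lambda'=\lambda_0$; otherwise $Z_{T_j}(\lambda_0)<0$ whereas $Z_{T_j}(-\lambda^*(d))>0$ by the first part of the theorem together with continuity from $\lambda=0$, so the intermediate value theorem supplies a root $\lambda'\in(\lambda_0,-\lambda^*(d))\subseteq(\lambda_1,-\lambda^*(d))$ of $Z_{T_j}$, and $T_j\in\cG_{d+1}$ is the required graph.
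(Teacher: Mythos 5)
The paper does not supply its own proof of \Cref{thm:shearer}; it cites Shearer and Scott--Sokal for the full statement, and the only piece it actually establishes is the zero-freeness assertion, via the one-line trapping-region computation given just after \cref{original}: the disk $T = \{z : \abs{z} \le \tfrac{1}{d+1}\}$ contains $0$, excludes $-1$, and is closed under $(z_1,\dots,z_d) \mapsto \lambda/\prod_{i=1}^d(1+z_i)$ when $\abs{\lambda} \le \lambda^*(d)$, since then $\abs{\lambda}/\prod\abs{1+z_i} \le \lambda^*(d)\cdot\bigl(\tfrac{d+1}{d}\bigr)^d = \tfrac{1}{d+1}$.

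For the first assertion your plan is the same invariant-region strategy, transported to the ratio variable $q = Z_{G-v}/Z_G$, but it stops precisely where the work begins: you never exhibit the set $\cL$ or verify its invariance, and you explicitly label that verification ``the main obstacle,'' so this half of your write-up is a sketch rather than a proof. Moreover there is a structural issue in the sketch as stated: your condition (ii) asks $\cL$ to be invariant under $(1+\lambda\prod_{i=1}^m s_i)^{-1}$ for $m$ as large as $d+1$ (the degree of the chosen vertex $v$), and that requirement is strictly too strong --- it would force the threshold down to $\lambda^*(d+1) < \lambda^*(d)$. The correct bookkeeping, which is exactly what \cref{original} encodes by using a $d$-fold product in \eqref{eq:basic-recurr}, is that after the first deletion each $u_i$ has degree at most $d$ in $H_{i-1}$, so the \emph{propagated} invariance is the $d$-fold one, and only the single terminal comparison with $-1$ at the root involves a $(d+1)$-fold product (for which one only needs $\lambda\prod q_i \neq -1$, not membership in $\cL$). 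Your parenthetical remark about passing to the self-avoiding-walk tree is the right fix, but the proof as written does not carry it out.

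Your argument for the tightness assertion, by contrast, is complete and correct, and the paper gives no proof of this half at all. The analysis of $g(x)=\lambda_0(1+x)^{-d}$ on $(-1,\infty)$ is sound: $g$ is increasing, $x(1+x)^d$ has minimum $-\lambda^*(d)$ at $x=-\tfrac{1}{d+1}$ so $g$ is fixed-point-free for $\lambda_0 < -\lambda^*(d)$, hence $g(x)<x$ throughout $(-1,\infty)$, and the decreasing sequence $(\alpha_j)$ must exit $(-1,\infty)$ in finite time (a finite limit would be a fixed point, a limit of $-1$ is incompatible with $g\to-\infty$). The telescoping $Z_{T_i} = (1+\alpha_i)\,Z_{T_{i-1}}^d$ then gives $Z_{T_i}(\lambda_0)>0$ for $i<j$ and $Z_{T_j}(\lambda_0)\le 0$, and your intermediate-value step correctly invokes the first assertion (nonvanishing on $[-\lambda^*(d),0]$ together with $Z_{T_j}(0)=1$) to conclude $Z_{T_j}(-\lambda^*(d))>0$. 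The tree $T_j$ has maximum degree $d+1$ (internal vertices have $d$ children plus a parent), so it lies in $\cG_{d+1}$, and the produced $\lambda'$ lies in $(\lambda_1,-\lambda^*(d))$ as required.
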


It is also known that the above theorem gives a full description of the
zero-free region of $Z_G$, as $G$ varies over $\cG_{d+1}$, on the negative real
line.  The emphasis in the works leading to \cref{thm:shearer} was on obtaining
zero-free regions shaped like disks (or like product of disks --
\emph{polydisks} -- in the more general setting of the multivariate independence
polynomial that we will not consider in this paper), and in the univariate
setting, \cref{thm:shearer} essentially characterizes the radius of the largest such
zero-free disk centered at the origin.  Further, two different polynomial time
approximation algorithms for $Z_G(\lambda)$ for $G \in \cG_{d+1}$ and
$\abs{\lambda} < \lambda^*(d)$ were given by Patel and
Regts~\cite{patel2017deterministic} and Harvey, Srivastava and
Vondrák~\cite{10.5555/3174304.3175407}.

\paragraph{Zero-free regions and algorithms} It is well known by now, however,
that the actual zero-free region for $Z_G$ as $G$ varies over $\cG_{d+1}$ is not
described by a disk.  It also turns out that the work towards characterizing
this region is of importance for the algorithmic problem of approximating $Z_G$
for an input graph $G \in \cG_{d+1}$.  In order to describe this connection, we
first recall the work of Peters and Regts~\cite{peters_conjecture_2019} on
proving a conjecture of Sokal.  There, they considered $Z_G$ as $G$ varies over
spherically symmetric $d$-ary trees, and proved that it is non zero as long as
$\lambda \in U_d$, where $U_d$ is the open region (see \cref{fig:d9} for an
example drawing of this curve) containing the origin bounded by the curve
\begin{equation}
  \partial U_d \defeq \inb{\left.\kappa(\alpha) \defeq \frac{-\alpha d^d}{(d + \alpha)^{d+1}}
    \right\vert \abs{\alpha} = 1}.\label{eq:12}
\end{equation}
Using the results of Peters and Regts~\cite{peters_conjecture_2019} on the
existence of zeros near the boundary of $\partial U_d$, Bezáková, Galanis,
Goldberg, and Štefankovič~\cite{bezakova_inapproximability_2018} showed that for
every complex rational $\lambda$ outside the closure of $U_d$ that does
\emph{not} lie on the positive real line, the problem of approximating (up to
any polynomial factor) $Z_G(\lambda)$ for graphs $G$ in $\cG_{d+1}$ is
\#P-hard.\footnote{In contrast, for positive real $\lambda$ outside $U_d$, the
  same problem is NP-hard~\cite{sly12,gsv}, and is unlikely to be \#P-hard for
  all such $\lambda$ unless there is a collapse in the polynomial hierarchy.
  (The fact that approximate counting with positive weights cannot be \#P-hard
  under standard complexity theoretic assumptions is a well-known direct
  consequence of Toda's theorem~\cite{toda_pp_1991} and earlier results of
  Stockmeyer~\cite{stockmeyer_complexity_1983} and
  Sipser~\cite{sipser_complexity_1983}; see, e.g., Ex.~17.5 in
  \cite{arora_barak_2009}.)}  On the other hand, due to the results of
Barvinok~\cite{barvinok2017combinatorics} and Patel and
Regts~\cite{patel2017deterministic}, the same problem admits a fully polynomial
time approximation scheme (FPTAS) for any complex rational $\lambda$ if for some
$\epsilon > 0$, the $\epsilon$-neighborhood of the line segment $[0, \lambda]$
is zero-free for the polynomials $Z_G$ for all $G \in \cG_{d+1}$.

\paragraph{Known results} In light of the above results, the problem of
characterizing the location of the zeros of $Z_G$ for general graph in
$\cG_{d+1}$ becomes of interest. Recall that $U_d$ is the zero-free region for
spherically symmetric $d$-ary trees.  Perhaps the first natural question to ask
is whether the region $U_d$ is zero-free for $Z_G$ even as $G$ varies over
\emph{all} graphs in $\cG_{d+1}$.  The answer to this is no:
Buys~\cite{buys_location_2019} showed that one can obtain a counterexample for
$3\le d+1\le 9$ by considering spherically symmetric trees in which the arity of
each vertex depends upon the distance from the root of the tree.  Thus, the
location of zeros of $Z_G$ for $G$ in $\cG_{d+1}$ inside the region $U_d$ needs
to be studied more closely.

In preparation for stating the contributions of this paper, we now turn to
describing what is known about the zero-free region of $Z_G$ for graphs in
$\cG_{d+1}$.  Let $\lambda_c(d) \defeq \frac{d^d}{(d - 1)^{d + 1}}$ be the
unique point of intersection of the curve $\partial U_d$ with the positive real
line.  (We note in passing that this quantity, known as the \emph{uniqueness
  threshold} for the hard core model, has played a central role in the study of
the algorithmic estimation of $Z_G$ on the positive real line: in particular,
this study led to some of the tightest known connections between statistical
mechanics phase transitions and computational
complexity~\cite{Weitz,sly12,gsv,Sly2010CompTransition}.)  Peters and
Regts~\cite{peters_conjecture_2019} showed that for any positive
$\lambda' < \lambda_c(d)$, there is an $\epsilon' = \epsilon'(\lambda') > 0$
such that for any $z$ satisfying $\abs{\Im z} \leq \epsilon'$ and
$\Re z = \lambda'$, $Z_G(z) \neq 0$ for all $G \in \cG_{d+1}$.  They also gave
explicit lower bounds on $\epsilon'(\lambda')$ for
$\lambda' \in (0, \tan (\pi/(2d)))$ (note that $\tan (\pi/(2d)) > \lambda^*(d)$
for $d \geq 2$, so these results are not implied by \Cref{thm:shearer}).  Bencs
and Csikvári~\cite{bencs18:_note}, using different methods, improved on the
latter lower bounds, and thereby significantly extended the known zero-free
region inside $U_d$ in the right half-plane.  In \cref{sec:related-work} below,
we describe some more recent papers that study phenomena such as the limit shape
(after appropriate scaling) of the zero-free region as $d \uparrow \infty$, and
that explore further connections between zero-freeness and other aspects of the
independent set model.  However, for specific finite $d$, none of these results
seem to provide any new zero-free regions in the left half-plane beyond the
half-disk implied by \Cref{thm:shearer}.

\subsection{Contributions} In this paper, we give two geometric criterions
(\Cref{thm:Simons-result,thm:init-curve}) which together give a framework for
rigorously establishing  (connected) zero-free regions as well as a way to carry out
semi-rigorous numerical explorations. 

We provide several examples of the (rigorous) use of these criterions.  We
establish two new zero-free regions in the left half plane:
\Cref{thm:critical-vicinity} gives a better result in the vicinity of the
negative real line, while \Cref{thm:lhp} gives a better result near the
imaginary line.  When restricted to the imaginary axis, the latter region agrees
with the result of Bencs and Csikvári~\cite{bencs18:_note} for the right
half-plane.  We also extend the previous zero-freeness results of Bencs and
Csikvári~\cite{bencs18:_note} for the right-half plane using the geometric
criterions developed in this paper (\cref{lem:righthalfplane_better}, see also
\cref{rem:right-improvement}).  We also show that our framework gives a new
proof of the Sokal conjecture, which was first proved by Peters and Regts via a
potential function argument~\cite{peters_conjecture_2019}
(\Cref{thm:sokal-conj}).  See \cref{fig:d9} for a graphical illustration of
these new zero-freeness results.

\paragraph{Algorithmic implications} Following the template provided by the results
of Barvinok~\cite{barvinok2017combinatorics} and Patel and
Regts~\cite{patel2017deterministic}, these new zero-freeness results also
immediately lead to new polynomial-time algorithms for the approximation of
$Z_G(\lambda)$ for $\lambda$ lying in the interior of these regions on graphs
$G \in \cG_{d+1}$ of maximum degree at most $d + 1$. Given complex numbers $Z$ and $\hat{Z}$ we say that $\hat{Z}$ is a multiplicative  $\varepsilon$-approximation of $Z$ if $e^{-\varepsilon}<\frac{|\hat{Z}|}{|Z|}<e^{-\varepsilon}$ and the angle between $\hat{Z}$ and $Z$
 considered as vectors in $\mathbb{C}=\mathbb{R}^2$ is at most $\varepsilon$. Then the algorithmic framework of Barvinok~\cite{barvinok2017combinatorics} and Patel and
Regts~\cite{patel2017deterministic} combined with our zero-free regions provides a deterministic algorithm  of running time  $\left(\frac{|V|}{\varepsilon}\right)^{O_{d,\lambda}(1)}$ for obtaining a multiplicative $\varepsilon$-approximation of $Z_G(\lambda)$ whenever $G=(V,E)$ has maximum degree $d+1$ and $\lambda$ is in the interior of the zero-free region provided by this paper.

\paragraph{Numerical explorations} We now comment briefly on the connections to
numerical explorations -- alluded to above -- of our work.  The naive method to
numerically check whether a point $\lambda$ is in the zero free region of $Z_G$
for all graphs in $\cG_{d+1}$ would be to evaluate $Z_G(\lambda)$ for all such
graphs, and to check if it evaluates to $0$.  Known results allow one to
restrict the set of graphs one has to explore to trees in $\cG_{d+1}$ (see
\cref{original} below), but the resulting procedure is still computationally
infeasible.  In contrast, the geometric criterions in
\cref{thm:Simons-result,thm:init-curve} allow one to do the following. Given $d$
and $\lambda$, one tries to construct a curve in the complex plane with certain
prescribed properties.  The existence of such a curve then certifies that no
graph in $\cG_{d+1}$ has $Z_G(\lambda) = 0$.  What curves would ``work'' for a
given $d$ and $\lambda$ can then be explored numerically: in fact, many of our
zero-freeness results listed above were obtained by first conjecturing the form
of such a curve guided by numerical experiments, and then rigorously verifying
-- as done in the proofs of the theorems listed above -- that the curve has the
prescribed properties.  We want to highlight, however, that although this method
is ``sound'' -- in the sense that producing such a curve as a certificate
guarantees zero-freeness -- it is not necessarily ``complete'' -- one may
not be able to construct such a curve as a certificate even though $\lambda$ is
in the zero-free region.

 \begin{figure}[!ht]
   \centering \includegraphics[width=0.8\textwidth]{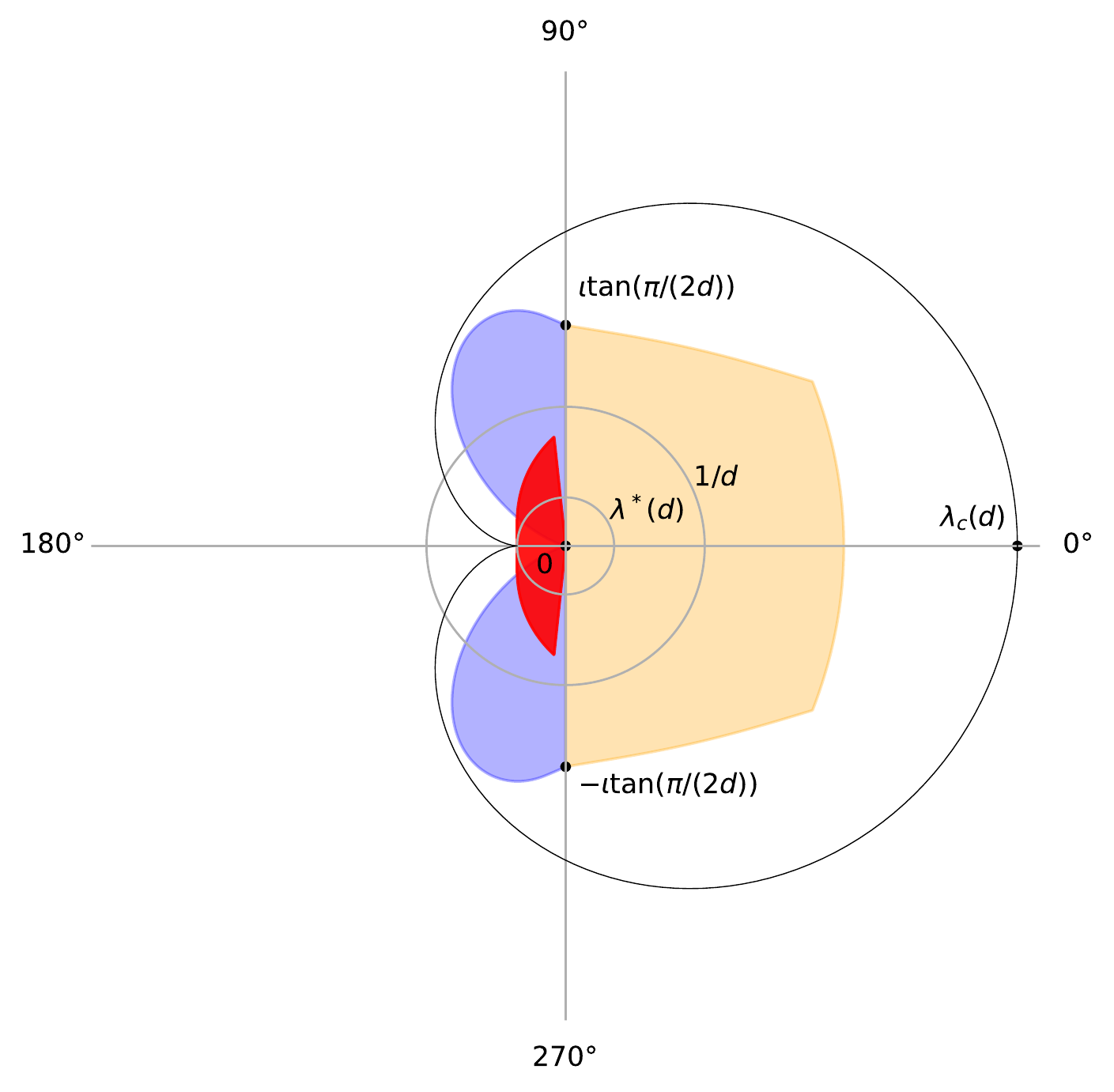}
   \caption{New zero free regions for $d=9$ (graphs of degree at most $10$). In
     the left half plane, the red region corresponds to
     \cref{thm:critical-vicinity}, while the blue region corresponds to
     \cref{thm:lhp}. In the right half plane, the yellow region corresponds to
     \cref{lem:righthalfplane_better}. The smaller grey circle around the origin
     has radius $\lambda^*(d)$ (the ``Shearer radius'' from \cref{thm:shearer}),
     and the points $\pm\iota\tan(\pi/(2d))$ are marked on the imaginary axis.
     The outer black ``cardioid-shaped'' curve is the boundary $\partial U_d$ as
     defined in \cref{eq:12}.  A magnified version of the red region
     (corresponding to \cref{thm:critical-vicinity}) is given in \Cref{fig:1}.}
     \label{fig:d9}
 \end{figure}
 
 \subsection{Organization of the paper} After a short section of preliminaries,
 we introduce in \cref{relaxed_recurrence} various simple criterions to prove
 that a $\lambda \in \bC$ is in the zero-free region of independence polynomial
 of graphs of bounded degree. Building on this work, we introduce in
 \cref{sec:crit-orig-compl} two new criterions
 (\Cref{thm:Simons-result,thm:init-curve}) which use constructions of certain
 curves in order to prove zero-free regions.  The remaining sections are direct
 applications of these two criterions, and are independent of each other. Since
 some of the proofs are somewhat technical, these sections are arranged in
 increasing order of difficulty. In \cref{sec:sokal-conjecture} we give a
 new proof of Sokal's conjecture originally proven by Peters and Regts.  In
 \cref{sec:zero-free-region-negative-real-line} we give a new zero-free
 region in the vicinity of the critical point $\frac{d^d}{(d+1)^{d+1}}$. In
 \cref{sec:zero-free-region-imaginary-axis} we provide a zero-free region
 close the imaginary axis. Finally, in \cref{sec:right-half-plane} we
 prove a zero-free region in the right half plane.
 
 For those readers who are interested in the ideas in general, but want to avoid
 technical difficulties we recommend reading the paper  till the end of
 \cref{sec:sokal-conjecture} and omitting \cref{thm:Simons-result}
 and its proof.

 \subsection{Related work}
 \label{sec:related-work}
 The work of Barvinok~\cite{barvinok_computing_2015} (see also
 \cite{barvinok2017combinatorics}) pioneered the direct use of zero-free regions
 for designing algorithms for approximate counting.  However, in most examples,
 a direct application of Barvinok's method gives a quasi-polynomial time
 algorithm: Patel and Regts~\cite{patel2017deterministic} showed how to use
 various combinatorial tools in order to reduce this quasi-polynomial runtime to
 a polynomial runtime in various ``bounded-degree'' settings.  The method has
 since then been used to attack a wide variety of approximate counting problems:
 see, e.g., \cite{barvinok_computing_2015,
   barvinok15:_comput_perman_some_compl_matric, BarvinokSoberon16a,
   eldar2018approximating, harrow_classical_2020,
   liu19:_deter_algor_count_color_delta_color, bencs2018zero}.  How this method
 relates to other methods of approximate counting, such as Markov chain Monte
 Carlo, or the method of reduction to tree-recurrences and ``correlation decay''
 (first used by Bandyopadhyay and Gamarnik~\cite{bandyopadhyay_counting_2008}
 and Weitz~\cite{Weitz}), has also been explored in several papers, see, e.g.,
 \cite{shaosun19, JingchengThesis, li_complex_2021, liu2018fisher}.  In the
 statistical physics literature, very strong connections between Markov chain
 Monte Carlo and zero-freeness are known in the special case of integer lattices
 through the work of Dobrushin and Shlosman~\cite{DS85,DS87}.

 As already discussed in detail in the previous subsections, the complex zeros
 of the independence polynomial have also been studied extensively in the
 context of its connections to the Lovász local lemma and also in the context of
 its computational complexity~\cite{shearer_problem_1985, scott_repulsive_2005,
   patel2017deterministic, 10.5555/3174304.3175407,
   bezakova_inapproximability_2018}.  Here we describe a few more recent works
 in this direction.  Recent work of de Boer, Buys, Guerini, Peters, and
 Regts~\cite{de2021zeros} establishes strong formal connections between the
 computational complexity of the hard core model, complex dynamics, and
 zero-freeness of the partition function (see Main Theorem
 of~\cite{de2021zeros}): in particular they prove that the zeros of $Z_G$ for
 graphs in $\cG_{d+1}$ are dense in the complement of $U_d$.  The limit shapes
 of the zero-free regions have also been studied: Bencs, Buys, and
 Peters~\cite{bencs2021limit} show that in the $d\to \infty$ limit, a rescaled
 version of zero-free region tends to a bounded 0-star shaped region, whose
 boundary intersects $\lim_{d\to \infty}d\cdot \partial U_d$ only at real
 parameters. In particular, the results of~\cite{bencs2021limit} show that for
 large enough $d$ the zero-free region is strictly contained in $U_d$ except for
 the two real parameters.  The problem of fully characterizing the zero-free
 region of $Z_G$ for $G\in\cG_{d+1}$ in the complex plane, however, still
 remains open.

\section{Preliminaries}

\paragraph*{Branch cuts}
We adopt the following convention for defining fractional powers and complex
logarithms.  Given $z = re^{\iota \theta}$ with $r > 0$ and
$\theta \in (-\pi, \pi]$, we define
\begin{align}
  \log z &\defeq \log r + \iota \theta\text{, and}\label{eq:5}\\
  z^\delta &\defeq r^{\delta}\exp(\iota \delta \theta)\text{, for any $\delta > 0$}.\label{eq:6}
\end{align}
We leave the functions undefined when $z = 0$ (except that we adopt the usual
convention that $0^0 = 1$).  Note that with the above definition, $\log$ and
$z^\delta$ for non-integral $\delta$ are defined but discontinuous on the
negative real line.  However, we do have the following identity for all $z \neq
0$ and $\delta \geq 0$:
\begin{displaymath}
  z^\delta = \exp(\delta \log z).
\end{displaymath}
Further, for $z \neq 0$, we use the convention $\arg z = \Im (\log z)$.

\paragraph*{Graphs and independence polynomials}
For the sake of providing a quick reference, we recollect here some basic
notation and terminology about graphs and their independence polynomials that
was introduced in the introduction above.  We denote the set of all graphs of
degree at most $d + 1$ by $\cG_{d+1}$.  The \emph{independence polynomial}
$Z_G(\lambda)$ of a graph $G$ is given by
\begin{equation}
  Z_G(\lambda) \defeq \sum_{I:\text{independent set in $G$}}\lambda^{|I|}.\label{eq:4}
\end{equation}
Two quantities of interest with respect to the independence polynomial are the
\emph{Shearer radius} $\lambda^*(d) \defeq \frac{d^d}{(d + 1)^{d+1}}$, and the
\emph{uniqueness threshold} $\lambda_c(d) \defeq \frac{d^d}{(d - 1)^{d + 1}}$.
The former, $\lambda^*(d)$, is specially connected to the Lov\'{a}sz local
lemma, and also the radius of the largest circular disk around the origin in
which $Z_G$ is zero-free for all graphs in $\cG_{d+1}$ (see \cref{thm:shearer}
above).  The latter, $\lambda_c(d)$, is intimately connected to the complexity
of approximating $Z_G(\lambda)$ for $G \in \cG_{d+1}$, for $\lambda$ on the
positive real line: in particular, Weitz~\cite{Weitz} gave a deterministic fully
polynomial approximation scheme (FPTAS) for $Z_G(\lambda)$ for
$G \in \cG_{d+1}$, provided $\lambda < \lambda_c(d)$, while in a series of
works~\cite{Sly2010CompTransition,sly12,Vigoda-hard-core-11,gsv} starting with a
paper of Sly, it was shown that a randomized fully polynomial approximation
scheme (FPRAS) for the same problem in the regime $\lambda > \lambda_c(d)$ would
imply NP = RP.  Further, in terms of the curve $\partial{}U_d$ of Peters and
Regts~\cite{peters_conjecture_2019} (see \cref{eq:12} above), $\lambda_c(d)$ is the
unique point of intersection of $\partial{}U_d$ with the positive real line,
while $-\lambda^*(d)$ is the unique point of intersection of $\partial{}U_d$
with the negative real line.

\section{A relaxed recurrence for the independence polynomial} \label{relaxed_recurrence}

As stated in the introduction, our goal is to study the location of the complex
zeros of the (univariate) independence polynomial
$$ Z_G(\lambda) \defeq \sum_{I \subseteq V(G) \mbox{ independent}} \lambda^{|I|}.$$
Recall that we focus on the class of graphs $\cG_\Delta$ with degrees at most
$\Delta$ for some fixed $\Delta \geq 3$.  It is more convenient, however, to
work in terms of the notation $d \defeq \Delta-1$.  We now proceed to describe a
known characterization of zero-free regions for the independence polynomial of
bounded degree graphs, in preparation for which we introduce the following
definition.

\begin{definition}[\textbf{The set $S_\lambda = S_\lambda(d)$}]
  For $\lambda \in \CC$, define $S_\lambda \subseteq \CC$ as the set of points
  that can be generated by the following rules:
\begin{itemize}
\item $0 \in S_\lambda(d)$,
\item If $z_1,\ldots,z_d \in S_\lambda(d)$ are such that $z_i \neq -1$ for
  $1\leq i \leq d$, then
\begin{equation}
\label{eq:basic-recurr}
f(z_1,\ldots,z_d) = \frac{\lambda}{\prod_{i=1}^{d} (1+z_i)}
\end{equation}
  is also in $S_\lambda(d)$.
\end{itemize}
\end{definition}
(Although the definition of $S_\lambda$ depends on $d$, we will often omit this
dependence from our notation when the value of $d$ is clear from the context.)

The following theorem is well known~\cite{Weitz,scott_repulsive_2005},
and has been used in previous work on the subject
(e.g. in~\cite{peters_conjecture_2019,bencs18:_note}). It can most directly be
obtained from a result of Bencs~\cite{bencs_trees_2018}, who showed that the
independence polynomial of a graph divides (as a polynomial) the independence
polynomial of the so-called ``self-avoiding walk tree'' of the graph.  The
zero-free regions of the independence polynomial of a tree can in turn be
analyzed in terms of the ``tree recurrences'' described in
\cref{eq:basic-recurr}~\cite{Weitz,scott_repulsive_2005}.
\begin{theorem}[\textbf{see, e.g., Proposition 2.7 (1) of
    \cite{bencs_trees_2018}, and Lemma 2.1 of
    \cite{bencs18:_note}}] \label{original} Fix $d \geq 2$.  $Z_G(\lambda) = 0$
  for some graph $G \in \cG_{d+1}$ if and only if $-1 \in S_\lambda(d)$.
\end{theorem}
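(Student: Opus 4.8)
The plan is to translate membership in $S_\lambda(d)$ into a statement about rooted trees and their \emph{occupation ratios}, and then read off both directions. For a finite rooted tree $T$ with root $r$ whose children carry the subtrees $T_1,\dots,T_k$, split $Z_T(\lambda)=Z_T^{\mathrm{in}}+Z_T^{\mathrm{out}}$ according to whether $r$ lies in the independent set; a one-line computation gives $Z_T^{\mathrm{out}}=\prod_{i=1}^k Z_{T_i}$ and $Z_T^{\mathrm{in}}=\lambda\prod_{i=1}^k Z_{T_i}^{\mathrm{out}}$, so whenever $Z_{T_i}(\lambda)\ne 0$ for all $i$ the ratio $R_T\defeq Z_T^{\mathrm{in}}/Z_T^{\mathrm{out}}$ is well defined and
\[
  R_T=\frac{\lambda}{\prod_{i=1}^k(1+R_{T_i})}=f\bigl(R_{T_1},\dots,R_{T_k},0,\dots,0\bigr),
\]
where one pads with $d-k$ copies of $0$; this is exactly why $0$ is allowed as a generator of $S_\lambda(d)$ --- it marks an empty child slot. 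A straightforward induction on the recursive definition then shows that $S_\lambda(d)\setminus\{0\}$ is precisely the set of ratios $R_T$ as $T$ ranges over rooted trees whose root has at most $d$ children and all of whose other vertices have at most $d$ children (equivalently, trees in $\cG_{d+1}$, once one adds the parent edge), subject to $Z_{T'}(\lambda)\ne 0$ for every proper subtree $T'$ of $T$; the condition $z_i\ne -1$ in the definition of $S_\lambda$ corresponds to the fact that a subtree with vanishing independence polynomial is never reused as an input.

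Granting this dictionary, the implication ``$-1\in S_\lambda(d)\Rightarrow$ some $G\in\cG_{d+1}$ has $Z_G(\lambda)=0$'' is almost immediate: unfolding a derivation of $-1$ (taking a fresh copy of a subtree each time a value is reused, so that one really obtains a tree) yields some $T\in\cG_{d+1}$ with $R_T=-1$; since $R_T$ being defined already forces $Z_T^{\mathrm{out}}(\lambda)\ne 0$, we get $Z_T(\lambda)=Z_T^{\mathrm{out}}(\lambda)(1+R_T)=0$, so $G=T$ works.

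For the converse, suppose $Z_G(\lambda)=0$ for some $G\in\cG_{d+1}$. First I would reduce to a tree via the self-avoiding walk tree: by the divisibility result of Bencs (Proposition~2.7(1) of \cite{bencs_trees_2018}; see also Lemma~2.1 of \cite{bencs18:_note}), $Z_G(\lambda)$ divides the independence polynomial of a tree $T\in\cG_{d+1}$, so $Z_T(\lambda)=0$ too. It then remains to show that a tree $T\in\cG_{d+1}$ with $Z_T(\lambda)=0$ forces $-1\in S_\lambda(d)$, which I would prove by induction on $|V(T)|$ (the case $|V(T)|=1$ being immediate, since then $Z_T(\lambda)=1+\lambda$, so $\lambda=-1=f(0,\dots,0)\in S_\lambda(d)$). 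Root $T$ at a leaf --- possible, and because $d\ge 2$ this makes the root have $1\le d$ children, so every vertex of $T$ has at most $d$ children. If some proper subtree $T'$ already satisfies $Z_{T'}(\lambda)=0$, then $T'\in\cG_{d+1}$ is strictly smaller and we are done by induction. Otherwise $Z_{T'}(\lambda)\ne 0$ for every proper subtree $T'$, hence every ratio $R_{T'}$ is well defined and $\ne -1$; processing from the leaves upward, the recursion $R_{T'}=f(\dots)$ puts each such $R_{T'}$ in $S_\lambda(d)$. Writing $v_1$ for the unique child of the root and $T_{v_1}$ for the subtree it carries, we have $Z_T(\lambda)=Z_{T_{v_1}}(\lambda)\bigl(1+R_T\bigr)$ with $Z_{T_{v_1}}(\lambda)\ne 0$, so $Z_T(\lambda)=0$ forces $R_T=-1$; and $R_T=f\bigl(R_{T_{v_1}},0,\dots,0\bigr)\in S_\lambda(d)$, which completes the induction.

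The step I expect to be the real obstacle --- and the one I would simply quote rather than reprove --- is the reduction to a tree: Weitz's self-avoiding walk tree carries boundary vertices pinned to ``occupied'' or ``unoccupied'', and one must check that the ``unoccupied'' pins behave exactly like the generator $0$ (a second reason $0$ enters the definition of $S_\lambda$), while the ``occupied'' pins can be removed by deleting closed neighborhoods, all without leaving $\cG_{d+1}$ and while preserving the divisibility $Z_G\mid Z_T$. Once this is granted, everything else is the routine induction sketched above, built on the tree recurrence \eqref{eq:basic-recurr}.
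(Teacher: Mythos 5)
Your argument is correct and follows exactly the route the paper indicates: the paper does not prove this statement but cites Bencs's divisibility of $Z_G$ by the independence polynomial of a tree in $\cG_{d+1}$, plus the standard analysis of the tree recurrence \eqref{eq:basic-recurr}, which is precisely the two-step reduction you carry out. Your elaboration---identifying $S_\lambda(d)\setminus\{0\}$ with occupation ratios of rooted trees with $0$ marking an empty child slot, unfolding a derivation of $-1$ into a concrete tree for the forward direction, and re-rooting at a leaf and inducting on $|V(T)|$ for the converse---correctly fills in the details the paper leaves to the cited references.
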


A standard application of \cref{original} is to define a ``trapping
region'' $T$ such that $0\in T$, $-1\notin T$ and
$f$ maps $T$ to $T$. For instance, if
$|\lambda|\leq \frac{d^d}{(d+1)^{d+1}}$, then
$T=\{z \in \CC\ |\ |z|\leq \frac{1}{d+1}\}$ is such a region:
$$\left|\frac{\lambda}{\prod_{i=1}^d(1+z_i)}\right|\leq \frac{d^d}{(d+1)^{d+1}}\prod_{i=1}^d\frac{1}{1-\frac{1}{d+1}}=\frac{1}{d+1}$$
showing Shearer's result.  In general, it is not easy to handle $d$ variables at
the same time. Therefore, in what follows we try to find sufficient conditions
that only require understanding the behaviour of a univariate map.

In the following, we relax the recurrence $f(z_1,\ldots,z_d)$ to allow fractional powers and more than $d$ arguments. As we will see, this in fact leads to a simplification of the problem.

\begin{definition}[\textbf{The set $\tilde{S}_\lambda = \tilde{S}_\lambda(d)$}]
  For $\lambda \in \CC$, define $\tilde{S}_\lambda \subseteq \CC$ as the set of
  points that can be generated by the following rules:
\begin{itemize}
\item $0 \in \tilde{S}_\lambda(d)$,
\item If $z_1,\ldots,z_k \in \tilde{S}_\lambda(d)$ and
  $\delta_1,\ldots,\delta_k \geq 0$ are such that
  $\sum_{i=1}^{k} \delta_i \leq d$ and $z_i \neq -1$ for $1 \leq i \leq k$, then
\begin{equation}
\label{eq:ext-recurr}
 f_{\delta_1,\ldots,\delta_k}(z_1,\ldots,z_k) = \frac{\lambda}{\prod_{i=1}^{k} (1+z_i)^{\delta_i}}
\end{equation}
  is also in $\tilde{S}_\lambda(d)$.
\end{itemize}
\end{definition}
(As with $S_\lambda$, although the definition of $S_\lambda$ depends on $d$, we
will often omit this dependence from our notation when the value of $d$ is clear
from the context.)

Clearly, we have $S_\lambda(d) \subseteq \tilde{S}_\lambda(d)$, since the new
generation rule subsumes $f(z_1,\ldots,z_d)$. Hence, from \cref{original}, we
directly obtain the following.

\begin{lemma}
  Fix $d \geq 2$.  If $-1 \notin \tilde{S}_\lambda(d)$, then
  $Z_G(\lambda) \neq 0$ for every $G \in \cG_{d+1}$.
\end{lemma}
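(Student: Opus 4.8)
The plan is to read off the statement directly from \cref{original} together with the inclusion $S_\lambda(d) \subseteq \tilde{S}_\lambda(d)$ flagged in the paragraph just above the lemma. The only content that needs to be spelled out is a careful verification of that inclusion, which goes by induction on the recursive construction of $S_\lambda(d)$.

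First I would set up the induction on the length of a derivation of an element of $S_\lambda(d)$. The base case is $0 \in S_\lambda(d)$, and $0 \in \tilde{S}_\lambda(d)$ by the first generation rule for $\tilde{S}_\lambda(d)$, so the base case holds. For the inductive step, suppose $z_1,\ldots,z_d \in S_\lambda(d)$ with $z_i \neq -1$ for all $i$, and assume inductively that each $z_i \in \tilde{S}_\lambda(d)$. Apply the generation rule for $\tilde{S}_\lambda(d)$ with $k = d$ and $\delta_1 = \cdots = \delta_d = 1$: the side condition $\sum_{i=1}^{d}\delta_i = d \leq d$ is satisfied and $z_i \neq -1$, so
\[
 f(z_1,\ldots,z_d) = \frac{\lambda}{\prod_{i=1}^{d}(1+z_i)} = f_{1,\ldots,1}(z_1,\ldots,z_d) \in \tilde{S}_\lambda(d).
\]
Since every element of $S_\lambda(d)$ is obtained from $0$ by finitely many applications of the rule in \cref{eq:basic-recurr}, this yields $S_\lambda(d) \subseteq \tilde{S}_\lambda(d)$. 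Finally I would combine this with \cref{original}: the hypothesis $-1 \notin \tilde{S}_\lambda(d)$ forces $-1 \notin S_\lambda(d)$, and then the ``only if'' direction of \cref{original} gives $Z_G(\lambda) \neq 0$ for every $G \in \cG_{d+1}$.

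The main — and essentially only — point requiring care is making the induction on the recursively defined set $S_\lambda(d)$ precise, i.e.\ interpreting ``generated by the following rules'' as saying that each element admits a finite derivation, so that one may induct on derivation length (equivalently, that $S_\lambda(d)$ is the smallest subset of $\CC$ containing $0$ and closed under \cref{eq:basic-recurr}). There is no analytic difficulty: the lemma is a bookkeeping consequence of the two definitions and \cref{original}.
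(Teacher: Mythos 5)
Your argument is exactly the paper's: the paper dispenses with the lemma in one line by noting $S_\lambda(d) \subseteq \tilde{S}_\lambda(d)$ (since the extended rule with $k=d$, $\delta_i = 1$ recovers \cref{eq:basic-recurr}) and then invoking \cref{original}. You simply spell out the induction behind that inclusion, which is correct and fills in the only detail the paper leaves implicit.
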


The main advantage of the relaxed recurrence is that it allows us to replace the
multivariate recurrence by a univariate one. We do this as follows: Consider the
set $\{ \log (1+z): z \in \tilde{S}_\lambda \}.$ Note that this is well defined
if $-1 \not \in \tilde{S}_\lambda$. If we write $w_i = \log (1+z_i)$, then the
recurrence
$$f_{\delta_1,\ldots,\delta_k}(z_1,\ldots,z_k) = \frac{\lambda}{\prod_{i=1}^{k} (1+z_i)^{\delta_i}} $$ can be rewritten by substitution as
$$ g_{\delta_1,\ldots,\delta_k}(w_1,\ldots,w_k) = \log (1 + f_{\delta_1,\ldots,\delta_k}(e^{w_1}-1,\ldots,e^{w_k}-1)) $$
 $$ = \log \left(1 + \lambda \prod_{i=1}^{k} e^{-\delta_i w_i} \right) = \log \left( 1 + \lambda e^{-\sum_{i=1}^{k} \delta_i w_i} \right).$$

Hence, a combination of fractional powers in $f_{\delta_1,\ldots,\delta_k}$ corresponds to a linear combination of the points $w_i = \log (1+z_i)$.
If we normalize the linear combination by $\frac{1}{d}$, and use the fact that
$0$ is always a possible choice for $w_i$, we obtain a convex linear combination
of $w_1,\ldots,w_k$ in the exponent. This motivates the following
characterization.  Note that the characterization is in terms of the behavior of
a function of only one complex variable.

\begin{theorem}
\label{thm:w-criterion}
Fix $d \geq 2$.  The number $-1$ is not contained in $\tilde{S}_\lambda(d)$ if
and only if there is a convex set $T \subset \CC$ containing $0$ such that for
every $w \in T$,
$$ g(w) = \log (1 + \lambda e^{-dw}) $$
is well-defined and $g(w) \in T$.
\end{theorem}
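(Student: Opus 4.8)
The plan is to prove the two directions separately, exploiting the substitution $w = \log(1+z)$ that was set up in the discussion immediately preceding the statement. Throughout, I write $g(w) = \log(1 + \lambda e^{-dw})$ and I keep in mind the generalized recurrence $g_{\delta_1,\dots,\delta_k}(w_1,\dots,w_k) = \log(1 + \lambda e^{-\sum_i \delta_i w_i})$ with $\sum_i \delta_i \le d$.

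\medskip
\noindent\textbf{($\Leftarrow$) Suppose such a convex $T$ exists.} First I would argue by structural induction on the generation rules for $\tilde S_\lambda(d)$ that every $z \in \tilde S_\lambda(d)$ satisfies $z \ne -1$ and $\log(1+z) \in T$; since $0 \notin -1 + e^{T}$ would then be violated, this gives $-1 \notin \tilde S_\lambda(d)$. The base case is $z = 0$, where $\log(1+0) = 0 \in T$. For the inductive step, suppose $z_1,\dots,z_k \in \tilde S_\lambda(d)$ are admissible with weights $\delta_1,\dots,\delta_k \ge 0$, $\sum \delta_i \le d$, and by induction $w_i := \log(1+z_i) \in T$. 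Set $\delta_0 := d - \sum_{i=1}^k \delta_i \ge 0$ and $w_0 := 0 \in T$; then $\frac{1}{d}\sum_{i=0}^k \delta_i w_i$ is a genuine convex combination of points of $T$, hence lies in $T$ by convexity, and it equals $\frac{1}{d}\sum_{i=1}^k \delta_i w_i$. Applying the hypothesis on $g$ to this point $w$, we get $g(w) = \log(1 + \lambda e^{-d w}) = \log(1 + \lambda e^{-\sum_i \delta_i w_i}) = \log(1 + f_{\delta_1,\dots,\delta_k}(z_1,\dots,z_k)) \in T$. In particular $1 + f_{\delta_1,\dots,\delta_k}(z_1,\dots,z_k) \ne 0$ (otherwise the logarithm is undefined), so the new point $f_{\delta_1,\dots,\delta_k}(z_1,\dots,z_k) \ne -1$, and its image under $\log(1+\cdot)$ is in $T$, completing the induction.

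\medskip
\noindent\textbf{($\Rightarrow$) Suppose $-1 \notin \tilde S_\lambda(d)$.} Here I would construct the witness set explicitly. Let $W := \{\log(1+z) : z \in \tilde S_\lambda(d)\}$ (well-defined since $-1 \notin \tilde S_\lambda$, so $1+z$ avoids $0$ and lies in the domain of our branch of $\log$), and put $T := \conv{W}$, the convex hull. Clearly $0 = \log(1+0) \in W \subseteq T$, and $T$ is convex by construction. It remains to check that $g$ maps $T$ into $T$. Fix $w \in T$; by Carathéodory's theorem write $w = \sum_{i=1}^k c_i u_i$ with $u_i \in W$, $c_i \ge 0$, $\sum c_i = 1$, so each $u_i = \log(1+z_i)$ for some $z_i \in \tilde S_\lambda(d)$. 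Set $\delta_i := d\,c_i \ge 0$; then $\sum_i \delta_i = d$, so $f_{\delta_1,\dots,\delta_k}(z_1,\dots,z_k) = \lambda e^{-\sum_i \delta_i u_i} = \lambda e^{-d w}$ is a legitimate generator and thus lies in $\tilde S_\lambda(d)$. Since $-1 \notin \tilde S_\lambda(d)$, we have $\lambda e^{-dw} \ne -1$, so $g(w) = \log(1+\lambda e^{-dw})$ is well-defined and moreover $g(w) \in W \subseteq T$. This proves $g(T) \subseteq T$.

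\medskip
\noindent\textbf{Main obstacle.} The delicate point is the interaction between the branch-cut conventions for $\log$ and $(\cdot)^\delta$ fixed in the preliminaries and the informal identity ``$\log(1 + f_{\delta_1,\dots,\delta_k}(e^{w_1}-1,\dots)) = \log(1 + \lambda e^{-\sum \delta_i w_i})$'' used in the discussion. One must make sure that, in both directions, the exponent $\sum_i \delta_i w_i$ and the quantity $\lambda e^{-\sum_i \delta_i w_i}$ are being handled so that $e^{-\delta_i w_i} = (1+z_i)^{-\delta_i}$ really holds with the chosen branches — i.e., that the substitution $z_i \mapsto w_i = \log(1+z_i)$ followed by exponentiation recovers $(1+z_i)^{\delta_i}$. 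This is automatic because $e^{\delta w}$ (the entire exponential) depends only on $w$ and not on any branch choice, whereas $(1+z)^\delta$ was \emph{defined} via $\exp(\delta \log(1+z))$; so as long as we consistently pass through $w$-coordinates the identities are exact, and no branch ambiguity arises. I would state this compatibility as a one-line remark rather than belabor it. A secondary, purely bookkeeping point is the ``padding'' trick $\delta_0 = d - \sum \delta_i$, $w_0 = 0$ in the ($\Leftarrow$) direction, which converts the constraint $\sum \delta_i \le d$ into an honest convex combination; this is what makes convexity (rather than some star-shapedness) exactly the right hypothesis.
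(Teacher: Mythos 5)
Your proposal is correct and takes essentially the same approach as the paper: in the $(\Rightarrow)$ direction you construct $T = \conv{\log(1+z) : z \in \tilde{S}_\lambda}$ and verify closure under $g$ just as the paper does, and in the $(\Leftarrow)$ direction you use the same padding-and-convexity argument to reduce the extended recurrence to a single application of $g$ on a convex combination. The only cosmetic difference is that you run $(\Leftarrow)$ as a direct structural induction on derivations (with $z \neq -1$ carried as part of the inductive invariant), whereas the paper introduces an explicit depth function and argues by contradiction assuming $-1 \in \tilde{S}_\lambda$; the two are interchangeable.
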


\begin{proof}
  Suppose first that $-1 \not\in \tilde{S}_\lambda = \tilde{S}_\lambda(d)$.  We
  define
  \begin{displaymath}
    T = \conv{\log(1+z) | z \in \tilde{S}_\lambda}.
  \end{displaymath}
  Note that since $-1 \not\in \tilde{S}_\lambda$, $T$ is well-defined, and
  further, is convex by definition.  Also, $0 \in T$, since
  $0 \in \tilde{S}_\lambda$.  Now consider $w \in T$.  By Caratheodory's
  theorem, there exist $\delta_1, \delta_2, \delta_3 \geq 0$ summing up to $d$,
  and $z_1, z_2, z_3 \in \tilde{S}_\lambda$, such that
  $w = \frac{1}{d}\sum_{i=1}^3\delta_i\log(1+z_i)$.  We thus have
  $\lambda\exp(-dw) = f_{\delta_1, \delta_2, \delta_3}(z_1, z_2, z_3) \in
  \tilde{S}_\lambda$.  Thus, $\lambda\exp(-dw) \neq -1$ and hence
  $g(w) = \log (1 + \lambda\exp(-dw))=\log(1+f_{\delta_1, \delta_2, \delta_3}(z_1, z_2, z_3))$ is well-defined and lies in $T$.

  Conversely, suppose that $T$ is any arbitrary convex set containing $0$, on
  which the map $g(w) = \log (1 + \lambda \exp(-dw))$ is well defined, and
  satisfies $g(w) \in T$ for all $w \in T$.  We claim that if
  $-1 \in \tilde{S}_\lambda$, then there exists $w \in T$ such that
  $-1 = \lambda\exp(-dw)$.

  To see this, define the \emph{depth} of every
  $z\in \tilde{S}_\lambda$ as follows: $\mathrm{depth}(0) = 0$, and for
  $z \neq 0$, $\mathrm{depth}(z)$ is the smallest integer $D$ such that $z$ can
  be written as $f_{\delta_1, \delta_2, \dots, \delta_k}(z_1, z_2, \dots, z_k)$
  where $k$ is a positive integer, $\delta_i \geq 0$ sum to at most $d$, and
  $z_i \in \tilde{S}_\lambda$ have depth at most $D-1$.  Note that
  $\mathrm{depth}(z) \geq 1$ for $z \neq 0$.  Now, if
  $-1 \in \tilde{S}(\lambda)$, let $D_{-1} = \mathrm{depth}(-1)$.

  We claim now that for all $z \in \tilde{S}_\lambda$ of depth at most $D_{-1} - 1$,
  $\log(1+z) \in T$.  This is proved by induction on the depth of $z$: it is
  true in the base case $\mathrm{depth}(z) = 0$ (so that $z = 0$), since
  $0 \in T$.  Otherwise, from the definition of depth, we can find
  $z_1, z_2, \dots z_k$ of depth strictly smaller than $z$, and
  $\delta_i \geq 0$ summing up to at most $d$, such that
  \begin{equation}
    z = f_{\delta_1, \delta_2, \dots, \delta_k}(z_1, z_2, \dots, z_k) = \lambda\exp\inp{-d\sum_{i=1}^k\frac{\delta_i}{d}\log(1+z_i)}.\label{eq:7}
  \end{equation}
  Thus, we have $\log(1 + z) = g(w)$ where $w$ is a convex combination of $0$
  and the quantities $\log(1 + z_i)$.  The latter quantities are all inductively
  in $T$, so that $w$ is also in $T$ (as $T$ is convex).  But since $T$ is
  closed under applications of $g$, this implies that $g(w) = \log(1+z)$ is also
  in $T$.  This establishes the claim that for every $z \in \tilde{S}_\lambda$
  of depth at most $D_1 - 1$, $\log(1 + z)$ is an element of $T$.

  Now, applying the argument leading to \cref{eq:7} with $z = -1$ (which by
  assumption has depth $D_{-1}$), we conclude that there exists a $w \in T$ such
  that $-1 = \lambda\exp(-dw)$.  But this contradicts the hypothesis that
  $g(w) = \log(1 + \lambda\exp(-dw))$ is well-defined on $T$.  Thus, it cannot
  be the case that $-1 \in \tilde{S}_\lambda$.
\end{proof}

For natural reasons, we call a $T$ as in the statement of the above theorem a
{\em trapping region} for $\lambda$.

\

{\em Remark:} Sometimes it is desirable to avoid $-1$ even in the closure of $S_\lambda$, or $\tilde{S}_\lambda$. By our transformation, this corresponds to the property that there is a convex set $T$ containing $0$ and closed under $g(w) = \log (1 + \lambda e^{-dw})$, such that $\Re(w) \geq -K$ for every $w \in T$ and some constant $K>0$. This is equivalent to saying that every point $z \in \tilde{S}_\lambda$ satisfies $|1+z| \geq e^{-K}$.

\section{Criterion in the original complex plane}
\label{sec:crit-orig-compl}

The previous section shows that we get a rather clean picture when we study the behavior of the extended recurrence (\cref{eq:ext-recurr}) after a change of variable, $w = \log (1+z)$.
However, we can also formulate a criterion using trapping regions in the original variable $z$. This criterion looks more intuitive, but it seems we lose a bit in the transition (in particular, we do not get an equivalence here).

\begin{theorem}
\label{thm:z-criterion}
If there is a convex set $S \subset \CC$ containing $0$, not containing $-1$, such that $f(z) = \frac{\lambda}{(1+z)^d} \in S$ for every $z \in S$, then $Z_G(\lambda) \neq 0$ for every $G \in \cG_\Delta$.
\end{theorem}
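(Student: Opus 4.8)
The plan is to reduce the statement to \cref{original}. Writing $d \defeq \Delta - 1$ (so that $\cG_\Delta = \cG_{d+1}$), that theorem says $Z_G(\lambda) = 0$ for some $G \in \cG_\Delta$ if and only if $-1 \in S_\lambda(d)$, so it suffices to prove $-1 \notin S_\lambda(d)$; in fact I will prove the stronger inclusion $S_\lambda(d) \subseteq S$. Since $S_\lambda(d)$ is the smallest subset of $\CC$ that contains $0$ and is closed under the map $(z_1,\dots,z_d) \mapsto \lambda/\prod_{i=1}^d(1+z_i)$ (applied to arguments different from $-1$), it is enough to check these two closure properties for $S$ itself. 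Containment of $0$ is assumed, and any $z_i \in S$ automatically satisfies $z_i \ne -1$ because $-1 \notin S$, so the entire content is to prove the implication
\[
  z_1,\dots,z_d \in S \quad\Longrightarrow\quad \frac{\lambda}{\prod_{i=1}^d(1+z_i)} \in S .
\]

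The key idea --- which I expect to be the crux of the proof --- is to write this product as a \emph{continuous} convex combination of values of the \emph{univariate} map $f(z) = \lambda/(1+z)^d$ appearing in the statement, by means of the Feynman--Dirichlet identity
\[
  \frac{1}{A_1 \cdots A_d} = \int_{\Pi_d} \frac{d\mu(t)}{\bigl(\sum_{i=1}^d t_i A_i\bigr)^{d}},
  \qquad \Pi_d \defeq \set{(t_1,\dots,t_d) : t_i \ge 0,\ \textstyle\sum_{i} t_i = 1},
\]
where $\mu$ denotes the uniform probability measure on the simplex $\Pi_d$. Taking $A_i = 1 + z_i$ and using $\sum_i t_i = 1$ we have $\sum_i t_i A_i = 1 + \zeta_t$ with $\zeta_t \defeq \sum_i t_i z_i$, so, $\mu$ being a probability measure,
\[
  \frac{\lambda}{\prod_{i=1}^d(1+z_i)} = \int_{\Pi_d} f(\zeta_t)\, d\mu(t) .
\]
For each $t \in \Pi_d$ the point $\zeta_t$ lies in $\conv{z_1,\dots,z_d} \subseteq S$ because $S$ is convex, so $f(\zeta_t) \in S$ by hypothesis. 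Hence the left-hand side is a limit of finite convex combinations of points of the compact set $f(\conv{z_1,\dots,z_d}) \subseteq S$; since the convex hull of a compact subset of $\CC$ is itself compact (Carath\'eodory), hence closed, and is contained in $S$ by convexity, the left-hand side belongs to $S$. This proves the implication, so $S_\lambda(d) \subseteq S$, so $-1 \notin S_\lambda(d)$, and \cref{original} completes the argument.

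Two routine points have to be handled with some care. First, the Feynman--Dirichlet identity must be justified for \emph{complex} $A_i$: its right-hand side is holomorphic in $(A_1,\dots,A_d)$ on the open set where $\sum_i t_i A_i$ is nonzero for every $t \in \Pi_d$ (on compact subsets of that set the integrand is holomorphic in the $A_i$ and bounded uniformly in $t$), it agrees with $1/\prod_i A_i$ on the classical region where $\Re A_i > 0$ for all $i$, and the parameter region $\set{(A_1,\dots,A_d) : 0 \notin \conv{A_1,\dots,A_d}}$ is connected and contains the classical one, so the identity propagates there by analytic continuation. In the application $\sum_i t_i(1+z_i) = 1 + \zeta_t \ne 0$ precisely because $\zeta_t \in \conv{z_i} \subseteq S$ while $-1 \notin S$, so we are inside this region; and no branch-cut issue arises because $d$ is an integer, so $w \mapsto w^d$ is entire. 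Second, one should note that once $S_\lambda(d) \subseteq S$ is established, the ``$z_i \ne -1$'' hypotheses implicit in the recurrence defining $S_\lambda(d)$ are automatically consistent.

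In summary, the only genuine work should be expository --- setting up the complex Feynman--Dirichlet identity (non-vanishing of the denominator on $\Pi_d$, connectedness of the parameter region, normalization of $\mu$) --- after which the convexity conclusion is immediate. Incidentally, this route also clarifies the remark in the text that one ``loses a bit'' here relative to \cref{thm:w-criterion}: the present criterion only exploits stability of $S$ under the single map $z \mapsto \lambda/(1+z)^d$, not under the whole family of fractional-power recurrences that defines $\tilde S_\lambda(d)$.
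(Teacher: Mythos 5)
Your proof is correct, and it takes a genuinely different route from the paper's. The paper's proof never works directly with $S_\lambda$: it passes to $w$-space via $T=\conv{\log(1+z):z\in S}$, invokes \cref{lem:AMGM} (the ``geometric averages dominate arithmetic averages'' lemma, which the authors prove from scratch by a convexity computation on $\log|y(x)|/|z^x|$) to replace a product $\prod(1+z_i)^{\delta_i}$ by a single term $c\cdot(1+\tilde z)^d$ with $\tilde z\in\conv{z_1,\ldots,z_k}$ and $c\in[0,1]$, and then verifies the hypotheses of \cref{thm:w-criterion}; as a by-product it actually establishes $-1\notin\tilde S_\lambda$, i.e.\ stability under the \emph{fractional-power} recurrence. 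You instead stay entirely in $z$-space, prove the stronger inclusion $S_\lambda(d)\subseteq S$ directly, and for the closure step replace \cref{lem:AMGM} by the Feynman--Dirichlet identity $\tfrac{1}{A_1\cdots A_d}=\int_{\Pi_d}(\sum t_iA_i)^{-d}\,d\mu(t)$, which realizes the multivariate value $\lambda/\prod(1+z_i)$ as the barycenter of the probability measure $t\mapsto f(\zeta_t)$ over $\conv{z_1,\ldots,z_d}$; the conclusion then follows from the standard fact that a barycenter of a probability measure on a compact set lies in the (closed, by Carath\'eodory) convex hull of that set, which is inside $S$. Your handling of the two delicate points is sound: the normalization (the uniform probability measure on $\Pi_d$ absorbs the $(d-1)!$), and the analytic continuation of the identity from $\{\Re A_i>0\}$ to the connected open region $\{0\notin\conv{A_1,\ldots,A_d}\}$, which is exactly what $-1\notin\conv{z_1,\ldots,z_d}\subseteq S$ guarantees. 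What each route buys: yours is shorter and reduces the key step to a classical identity, at the cost of needing the complex Feynman parametrization; the paper's route, while longer, is elementary (no analytic continuation), feeds naturally into \cref{thm:w-criterion} and the $\tilde S_\lambda$ picture that the rest of the paper relies on, and its \cref{lem:AMGM} is reused implicitly via the $-1$-covering language in \cref{thm:init-curve}. Your closing remark that this criterion only exploits the single map $z\mapsto\lambda/(1+z)^d$ is correct, though note that the generalized (Dirichlet-weighted) Feynman identity $\tfrac{1}{\prod A_i^{\delta_i}}=\tfrac{\Gamma(\sum\delta_i)}{\prod\Gamma(\delta_i)}\int\prod t_i^{\delta_i-1}(\sum t_iA_i)^{-\sum\delta_i}\,d\mu(t)$ would extend your argument to $\tilde S_\lambda$ as well, recovering the paper's stronger conclusion without \cref{lem:AMGM}.
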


In order to prove this statement, we need the following fact about the behavior
of arithmetic vs.~geometric averages in the complex plane.  While we believe
this fact to be standard, we are unable to find an exact reference, and hence
provide a proof for completeness.\footnote{Note that, despite the title, the
  lemma does not contradict the usual inequality between the arithmetic and the
  geometric means of positive reals.  The lemma is in fact a trivial statement
  for the case of positive reals.}

\begin{lemma}[``geometric averages dominate arithmetic averages'']
\label{lem:AMGM}
For any two points $z_1, z_2 \in \CC \setminus \{0\}$ satisfying
$\abs{\arg(z_1) - \arg (z_2)} \leq \pi$, and $\alpha \in [0,1]$, there exist
$\beta \in [0,1]$ and $t \in [0,1]$ such that
$$ t z_1^\alpha z_2^{1-\alpha} = \beta z_1 + (1-\beta) z_2.$$
\end{lemma}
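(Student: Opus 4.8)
The plan is to reduce the statement to a purely planar-geometry fact about the image of a circular arc under the logarithm, and then verify that fact by an elementary argument.

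First I would set up coordinates. Write $z_1 = r_1 e^{\iota\theta_1}$ and $z_2 = r_2 e^{\iota\theta_2}$ with $\theta_j \in (-\pi,\pi]$, and by swapping the two points if necessary assume $\theta_1 \le \theta_2$; the hypothesis then reads $\theta_2 - \theta_1 \le \pi$. The left-hand side $z_1^\alpha z_2^{1-\alpha}$ has modulus $r_1^\alpha r_2^{1-\alpha}$ and argument $\alpha\theta_1 + (1-\alpha)\theta_2$ (this is exactly where the branch-cut convention from the preliminaries is used, together with the bound $|\theta_1-\theta_2|\le\pi$, which guarantees the weighted average of the arguments is still a legitimate argument and that no wrap-around occurs). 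So $z_1^\alpha z_2^{1-\alpha}$ lies on the ray through the origin at angle $\phi := \alpha\theta_1 + (1-\alpha)\theta_2$, which is a convex combination of $\theta_1$ and $\theta_2$, hence lies in $[\theta_1,\theta_2]$. The right-hand side $\beta z_1 + (1-\beta)z_2$, as $\beta$ ranges over $[0,1]$, traces the straight segment $[z_2, z_1]$.

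The key geometric step is then: the segment $[z_1,z_2]$ meets the open ray $\{\rho e^{\iota\phi} : \rho > 0\}$ in exactly one point, call it $p$; and one has $|p| \le r_1^\alpha r_2^{1-\alpha}$. The existence of the intersection point is elementary: the segment joins a point on the ray at angle $\theta_1$ to one at angle $\theta_2$, and $\phi$ lies (weakly) between them, while the whole segment stays in the closed angular sector $\{\theta_1 \le \arg \le \theta_2\}$ (here again $\theta_2-\theta_1\le\pi$ is what keeps this sector convex, so the segment does not escape it); a continuity/intermediate-value argument on $\arg(\beta z_1 + (1-\beta)z_2)$ gives a $\beta^\star$ with the right argument, and then $p = \beta^\star z_1 + (1-\beta^\star)z_2$. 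For that $\beta^\star$ we get $\beta^\star z_1 + (1-\beta^\star)z_2 = |p| e^{\iota\phi}$, so it only remains to produce $t\in[0,1]$ with $t\, r_1^\alpha r_2^{1-\alpha} = |p|$, i.e.\ to check $|p| \le r_1^\alpha r_2^{1-\alpha}$.

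The inequality $|p| \le r_1^\alpha r_2^{1-\alpha}$ is the heart of the matter and is where I expect to spend the real effort. The cleanest route is to compute $|p|$ explicitly. Solving $\arg(\beta z_1 + (1-\beta)z_2) = \phi$ and using the law of sines in the triangle with vertices $0, z_1, z_2$: writing $a = \theta_2 - \phi \ge 0$ and $b = \phi - \theta_1 \ge 0$ (so $a+b = \theta_2-\theta_1 \le \pi$), one finds
\begin{displaymath}
  |p| = \frac{r_1 r_2 \sin(a+b)}{r_1 \sin a + r_2 \sin b},
\end{displaymath}
the reciprocal-type ``harmonic'' combination that one expects. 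Meanwhile $\phi = \alpha\theta_1 + (1-\alpha)\theta_2$ forces $a = \alpha(\theta_2-\theta_1)$ and $b=(1-\alpha)(\theta_2-\theta_1)$, so with $c := \theta_2-\theta_1 \in [0,\pi]$ the claim becomes
\begin{displaymath}
  \frac{r_1 r_2 \sin c}{r_1 \sin(\alpha c) + r_2 \sin((1-\alpha)c)} \;\le\; r_1^\alpha r_2^{1-\alpha}.
\end{displaymath}
Dividing through, this is equivalent to
\begin{displaymath}
  \sin c \;\le\; \Bigl(\tfrac{r_1}{r_2}\Bigr)^{1-\alpha}\sin(\alpha c) + \Bigl(\tfrac{r_2}{r_1}\Bigr)^{\alpha}\sin((1-\alpha)c),
\end{displaymath}
and since $s \mapsto s^{1-\alpha}$, $s\mapsto s^{-\alpha}$ are handled by the weighted AM--GM inequality applied to the coefficients, it suffices to prove the $r$-free statement $\sin c \le \sin(\alpha c) + \sin((1-\alpha) c)$ for $c\in[0,\pi]$, $\alpha\in[0,1]$ — which follows from the sum-to-product identity $\sin(\alpha c)+\sin((1-\alpha)c) = 2\sin(c/2)\cos((\alpha-\tfrac12)c) \ge 2\sin(c/2)\cos(c/2) = \sin c$ since $|(\alpha-\tfrac12)c| \le c/2 \le \pi/2$. (One must separately dispatch the degenerate cases $c=0$, $\alpha\in\{0,1\}$, and $r_1\sin(\alpha c)+r_2\sin((1-\alpha)c)=0$, but these are immediate.) The main obstacle is thus not conceptual but bookkeeping: making the law-of-sines computation of $|p|$ rigorous (including the sign/orientation conventions and the degenerate configurations where $0,z_1,z_2$ are collinear), and correctly threading the branch-cut convention so that ``$\arg$'' behaves additively under the weighted geometric mean.
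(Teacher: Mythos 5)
Your overall strategy — reduce to one ray, compute the modulus $\abs{p}$ of the intersection of the segment $[z_1,z_2]$ with the ray at angle $\phi$ via the law of sines, and then prove a resulting trigonometric inequality — is sound and is in fact a legitimate alternative to the paper's route (the paper instead shows that $x\mapsto \log\bigl(\abs{y(x)}/\abs{z^x}\bigr)$ is convex with zero boundary values). There is a minor index slip in the law-of-sines step: with $a=\theta_2-\phi$, $b=\phi-\theta_1$, one gets
\begin{equation*}
  \abs{p} \;=\; \frac{r_1 r_2 \sin(a+b)}{r_2\sin a + r_1\sin b},
\end{equation*}
not $\tfrac{r_1 r_2\sin(a+b)}{r_1\sin a + r_2\sin b}$ as written (compare $b\to 0$, where $p\to z_1$); this is cosmetic and easily fixed, but it cascades into the exponents you report after ``dividing through.''

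The genuine gap is the last step: the claim that weighted AM--GM ``applied to the coefficients'' reduces the statement to $\sin c \le \sin(\alpha c)+\sin((1-\alpha)c)$. Setting $v=r_2/r_1$, the inequality to be proved is of the form $\sin c \le v^{1-\alpha}\sin(\alpha c)+v^{-\alpha}\sin((1-\alpha)c)$ for \emph{all} $v>0$. Your inequality $\sin c \le \sin(\alpha c)+\sin((1-\alpha)c)$ is precisely the special case $v=1$. Minimizing the right-hand side over $v>0$ (either by calculus or by applying weighted AM--GM with weights $\alpha,1-\alpha$) shows that the statement one actually needs is the strictly stronger
\begin{equation*}
  \alpha^\alpha(1-\alpha)^{1-\alpha}\,\sin c \;\le\; \sin^\alpha(\alpha c)\,\sin^{1-\alpha}\bigl((1-\alpha)c\bigr),
\end{equation*}
and by weighted AM--GM applied to $\sin(\alpha c)/\alpha$ and $\sin((1-\alpha)c)/(1-\alpha)$ this inequality \emph{implies} yours, not the other way around. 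So the sum-to-product trick is attacking a strictly weaker consequence and cannot close the argument. The correct $r$-free inequality does hold (for instance, one can check that $\Phi(x)\defeq x\log\bigl(\sin(xc)/(xc)\bigr)$ is concave on $(0,1)$, which amounts to the elementary bound $u^2+\sin^2 u\ge u\sin 2u$, and concavity of $\Phi$ gives $\Phi(\alpha)+\Phi(1-\alpha)\ge\Phi(1)$), but that requires an argument essentially of the same flavor as the paper's convexity computation rather than the sum-to-product identity you invoked.
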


\begin{proof}
  We reduce to the case where $z_2 = 1$, by dividing by $z_2$ and substituting
  $z = z_1/z_2$.  Our goal then is to find $t, \beta \in [0,1]$ such that
  $$ t z^\alpha = \beta z + (1-\beta).$$
  We can also assume that $\arg z \in [0, \pi]$, by complex conjugation if this
  is not the case.

  Now, if $\arg z = 0$ the claim is trivially true because, then, if
  $z = \Re z \geq 1$, we can take $\beta = 0, t = z^{-\alpha}$, while when
  $0 < z = \Re z < 1$, we can take $\beta = 1, t = z^{1-\alpha}$.  Similarly,
  when $\arg z = \pi$, $z$ is a strictly negative real number, so that we can
  choose $t = 0$ and $\beta = \frac{1}{1-z} \in [0, 1]$.

  We can thus assume that $ \theta \defeq \arg(z) \in (0,\pi)$ and
  $r \defeq \abs{z} > 0$.  Note that $\arg(z^\alpha) = \alpha\theta$.  Let
  $y(\alpha)$ be the unique point with argument $\alpha\theta$ on the line
  segment joining $1$ and $z$. From an elementary geometric argument, we then
  have
\begin{displaymath}
  \abs{y(\alpha)} = \frac{r \sin \theta}{r\sin((1-\alpha)\theta) + \sin \alpha\theta}.
\end{displaymath}
We now define the function
$f(x): [0, 1] \rightarrow \R$ as
\[f(x) \defeq \log \frac{\abs{y(x)}}{\abs{z^x}} = \log \frac{r^{1-x}\sin
    \theta}{r\sin((1-x)\theta) + \sin x\theta}. \] Note that the claim of the
lemma is equivalent to showing that $f(x) \leq 0$ for all $x \in [0, 1]$ (the
quantity $t$ can then be taken to $e^{f(\alpha)} \in [0,1]$ and
$\beta \in [0,1]$ is such that $\beta z + 1 - \beta = y(\alpha)$).

To this end, we first note that $f(0) = f(1) = 0$, so the claim would follow if
$f$ is convex on $[0, 1]$.  We verify this by directly computing the second
derivative of $f$ and checking that it is non-negative in $[0, 1]$:
\begin{displaymath}
  f''(x) = \frac{\theta^2 (1 + r^2 - 2r \cos \theta)}{\inp{r\sin((1-x)\theta) +
      \sin x\theta}^2} \geq 0, \text{ when $x \in [0, 1]$}.\qedhere
\end{displaymath}
\end{proof}

\begin{proof}[Proof of \cref{thm:z-criterion}]
Suppose that there is a convex set $S$ containing $0$, not containing $-1$, and closed under the map $f(z) = \frac{\lambda}{(1+z)^d}$.
We will transform $S$ into a convex set $T$ satisfying the assumptions of \cref{thm:w-criterion}. Define
$$ T = \conv{\log (1+z): z \in S}.$$
By construction, $T$ is convex and it contains $0$ (since $0 \in S$). We need to prove that $g(w) = \log (1+\lambda e^{-dw})$ is well-defined on $T$ and preserves membership in $T$.
Consider $w \in T$, i.e., $w = \sum_{i=1}^{k} \alpha_i \log (1+z_i)$, a convex combination of points $\log (1+z_i)$ such that $z_i \in S$.
We can set $\delta_i = d \alpha_i$, hence $w = \frac{1}{d} \sum_{i=1}^{k} \delta_i \log (1+z_i)$. Then,
$$ g(w) = \log (1 + \lambda e^{-dw}) = \log \left(1 + \frac{\lambda}{\prod_{i=1}^{k} (1+z_i)^{\delta_i}} \right).$$
Now, we appeal to \cref{lem:AMGM}. Note first that
$1 + S \defeq \inb{1 + z \vert z \in S}$ is a convex set containing $1$ and not
containing $0$, so that by the separating hyperplane theorem, all of $1 + S$
lies in a halfplane defined by a line passing through $0$, and therefore
$\abs{\arg(1 + u) - \arg(1 + v)} \leq \pi$ is true for all $u, v \in S$. Now, we
take one pair $z_i, z_j$ of points at a time, and consider
$u_{ij} = (1+z_i)^{\frac{\delta_i}{\delta_i+\delta_j}}
(1+z_j)^{\frac{\delta_j}{\delta_i+\delta_j}}$. By \cref{lem:AMGM}, there is
$\beta \in [0,1]$ such that $1+z_{ij} = \beta (1+z_i) + (1-\beta) (1+z_j)$ is a
point of the same argument and smaller-or-equal modulus as $u_{ij}$. Hence we
can replace both $z_i$ and $z_j$ by $z_{ij}$ and continue. We maintain the
property that the argument of $\prod (1+z_i)^{\delta_i}$ remains preserved and
the modulus can only decrease. Eventually, we obtain a point
$\tilde{z} \in \conv{z_1,\ldots,z_k} \subseteq S$ such that
$\arg((1+\tilde{z})^d) = \arg(\prod (1+z_i))$ and
$|(1+\tilde{z})^d| \leq | \prod (1+z_i) |$. Hence, we can write
$$ g(w) = \log \left( 1 + \frac{c\cdot \lambda}{(1+\tilde{z})^d} \right),$$
where $c \leq 1$ is a non-negative real number.  Now, note that since
$\tilde{z} \in S$, we have
$f(\tilde{z}) = \frac{\lambda}{(1 + \tilde{z})^d} \in S$, as $S$ is closed under
applications of $f$.  Then, since $0 \in S$, and $S$ is convex, we get
$y \defeq \frac{c\cdot \lambda}{(1+\tilde{z})^d} \in S$, and further that
$y \neq -1$, as $-1 \not\in S$.  Thus, by definition of $T$,
$g(w) = \log (1 + y) \in T$, as $y \in S$.  This proves that $T$ satisfies the
assumptions of \cref{thm:w-criterion} and hence $-1$ is not contained in
$\tilde{S}_\lambda$, which implies that $Z_G(\lambda) \neq 0$.
\end{proof}

Next, we present a more abstract extended version of this criterion, where we
allow a ``convex'' initial segment $h(t), t \in [0,1]$ rather than a line
segment. First we define the following notion:

\begin{definition}[\textbf{$-1$-covered points}]
\label{def:-1-cover}
A point $z \in \CC$ is \emph{$-1$-covered} by $z' \in \CC$ if
$\arg(1+z) = \arg(1+z')$ and $|1+z| \geq |1+z'|$.  More generally, a set $T$ is
$-1$ covered by a set $S$ if for every $z \in T$, there is a point $z' \in S$
such that $z$ is $-1$-covered by $z'$.
\end{definition}
\noindent Geometrically, the above notion captures $z$ being ``covered'' by $z'$
when ``viewed'' from the point $-1$.  The utility of this definition for our
purposes comes from the following simple observation.

\begin{observation}
  \label{obv:covering}
  Fix an integer $d \geq 2$ and a $\lambda \in \bC$, and consider
  $f(z) \defeq \frac{\lambda}{(1 + z)^d}$.  If $z\in \bC$ is $-1$-covered by
  $w \in \bC$, then $f(z) = \alpha f(w)$ for some $\alpha \in [0, 1]$.
\end{observation}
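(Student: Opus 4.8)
The plan is to reduce the statement to elementary polar-coordinate bookkeeping, exploiting the fact that $d$ is an \emph{integer}, so that the $d$-th power is unambiguous and the branch-cut convention of \cref{eq:6} plays no role. First I would use the covering hypothesis to write $1 + z = r\,e^{\iota\theta}$ and $1 + w = s\,e^{\iota\theta}$ with a common argument $\theta \in (-\pi,\pi]$; this is legitimate precisely because $\arg(1+z) = \arg(1+w)$, and we may tacitly assume $z \neq -1$ and $w \neq -1$ (otherwise the arguments, and $f(w)$, are undefined). The remaining part of the covering hypothesis, $|1+z| \geq |1+w|$, then says exactly that $r \geq s > 0$.

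Next, since $d \in \bZ$, we have $(1+z)^d = r^d e^{\iota d\theta}$ and $(1+w)^d = s^d e^{\iota d\theta}$ as ordinary integer powers, with no dependence on any branch choice. Plugging these into $f(z) = \lambda/(1+z)^d$ gives $f(z) = \lambda r^{-d} e^{-\iota d\theta}$ and $f(w) = \lambda s^{-d} e^{-\iota d\theta}$, whence $f(z) = (s/r)^d\, f(w)$. Setting $\alpha \defeq (s/r)^d$ and recalling $0 < s \leq r$ yields $\alpha \in (0,1] \subseteq [0,1]$, which is exactly the assertion.

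I do not anticipate any real obstacle here: the only points needing attention are (i) recording that integrality of the exponent $d$ removes the branch ambiguity in $(1+z)^d$, and (ii) noting the implicit non-degeneracy assumptions $z, w \neq -1$ that make the statement meaningful. The extreme cases are handled automatically — $r = s$ forces $z = w$ and $\alpha = 1$, while $\alpha$ tends to $0$ only in the limit $r \to \infty$, so $\alpha$ stays strictly positive for every legitimate pair $(z,w)$; the case $\lambda = 0$ is also trivially consistent with the identity $f(z) = (s/r)^d f(w)$.
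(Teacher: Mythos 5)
Your proof is correct and follows essentially the same route as the paper's: both reduce to the observation that $f(z)$ and $f(w)$ share an argument while $|f(z)| \leq |f(w)|$, so $f(z)$ lies on the segment from the origin to $f(w)$. You are a bit more explicit (computing $\alpha = (s/r)^d$ rather than merely asserting collinearity), and your remarks about integrality of $d$ removing branch ambiguity and about the degenerate cases $z,w = -1$ and $\lambda = 0$ are all sound.
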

\begin{proof}
  If $\lambda = 0$, there is nothing to prove, so assume $\lambda \neq 0$.
  Since $z$ is $-1$-covered by $w$, we have $\abs{1 + z} \geq \abs{1 + w}$ and
  $\arg (1+w) = \arg(1+z)$.  It follows that $\arg(f(z)) = \arg(f(w))$, and
  $\abs{f(z)} \leq \abs{f(w)}$.  Thus, $f(z)$ lies on the segment joining the
  origin to $f(w)$, and the claim follows.
\end{proof}

Next we state and prove our first main geometric criterion for zero-freeness,
which will be applied multiple times in the subsequent sections.

\begin{theorem}
\label{thm:init-curve}
For $\lambda \in \CC$, assume that there is a curve $\{ h(t): t \in [a,b]\}$,
where $a < b$ are real numbers, such that
\begin{itemize}
\item $h(t) = 0$ for some $t \in [a,b]$,
\item $\arg(1+h(t))$ is strictly increasing for $t \in [a,b)$,
\item $h(t)$ is ``convex'' in the sense that for any $t_1,t_2 \in [a,b], \alpha \in [0,1]$, $\alpha h(t_1) + (1-\alpha) h(t_2)$ is $-1$-covered by $h(t)$ for some $t \in [t_1,t_2]$.
\item for every $t \in [a,b]$, $f(h(t)) \defeq \frac{\lambda}{(1+h(t))^d}$ is $-1$-covered by $h(t')$ for some $t' \in [a,b]$.
\end{itemize}
Then $Z_G(\lambda) \neq 0$ for any $G \in \cG_{d+1}$.
\end{theorem}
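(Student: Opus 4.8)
The plan is to reduce this to \cref{thm:z-criterion}, or rather to mimic its proof, by building a convex set $S$ out of the curve $\{h(t):t\in[a,b]\}$ and verifying that $S$ is closed under $f(z)=\frac{\lambda}{(1+z)^d}$ in the weak sense needed (namely, up to $-1$-covering, which by \cref{obv:covering} is enough since $0\in S$ and $S$ is convex). Concretely, I would set $S \defeq \conv{\{h(t):t\in[a,b]\}}$. This is convex by construction, and it contains $0$ because $h(t)=0$ for some $t$. The two things to check are: (i) $-1\notin S$; and (ii) for every $z\in S$, $f(z)\in S$.

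For (i): the hypothesis that $\arg(1+h(t))$ is strictly increasing on $[a,b)$ means the curve $1+h(t)$ stays on one side, angularly, and in particular never passes through $0$; more carefully, strict monotonicity of the argument over $[a,b]$ forces the total angular variation to be less than $\pi$ would be too strong, so instead I would argue that the "convexity" hypothesis is exactly what lets one push this through: any convex combination of curve points is $-1$-covered by a curve point, hence has the same argument (viewed from $-1$) as some $1+h(t)\neq 0$, hence is itself nonzero, i.e. not equal to $-1$. So $-1\notin S$ follows directly from the third bullet together with $-1\notin\{1+h(t):t\}$, which in turn follows since $\arg(1+h(t))$ is defined (hence $1+h(t)\neq0$) for all $t$.

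For (ii): take $z\in S$, so $z=\sum_i\alpha_i h(t_i)$ a convex combination. The "convexity" hypothesis (third bullet), applied repeatedly two points at a time exactly as in the proof of \cref{thm:z-criterion}, lets me replace $z$ by a single curve point $h(t^\ast)$ that $-1$-covers $z$ — i.e. $\arg(1+z)=\arg(1+h(t^\ast))$ and $|1+z|\geq|1+h(t^\ast)|$. Hmm, wait: the direction of covering matters. The third bullet says the convex combination is $-1$-covered by a curve point, meaning $|1+z|\geq|1+h(t^\ast)|$; by \cref{obv:covering} this gives $f(z)=\alpha\,f(h(t^\ast))$ for some $\alpha\in[0,1]$. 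Hmm, actually \cref{obv:covering} requires $z$ $-1$-covered by $w$ and concludes $f(z)=\alpha f(w)$; here $z$ is $-1$-covered by $h(t^\ast)$, so $f(z)=\alpha f(h(t^\ast))$ with $\alpha\in[0,1]$. Now the fourth bullet says $f(h(t^\ast))$ is $-1$-covered by some $h(t')$, so again by \cref{obv:covering}, $f(f(h(t^\ast)))$ relates — no, I only need: $f(h(t^\ast))$ is $-1$-covered by $h(t')\in S$, hence $f(h(t^\ast))$ lies on the segment from (a point with the same argument-from-$-1$, bigger modulus) — this does not immediately put $f(h(t^\ast))$ in $S$. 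Let me instead directly say: $f(h(t^\ast))=\beta f(h(t'))$? No. The clean way: $f(h(t^\ast))$ being $-1$-covered by $h(t')$ does not mean $f(h(t^\ast))\in S$; rather I should observe that $f(z)=\alpha f(h(t^\ast))$ and $f(h(t^\ast))$ is a point that is $-1$-covered by $h(t')\in S$. I need $f(z)\in S$. The robust fix is to redefine $S$ so that it is closed under the operation "replace a point by any point it $-1$-covers that lies closer to $-1$" — i.e. take the \emph{star-shaped closure from $0$* or intersect with the region $-1$-behind the curve. The cleanest: let $S$ be the convex hull of $\{h(t):t\}\cup\{0\}$ together with all points $-1$-covered by curve points; one checks this is still convex (this is the real content) and then closure under $f$ is immediate from the two bullets and \cref{obv:covering}.

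The main obstacle is precisely this last point: verifying that enlarging $S$ to include all $-1$-covered points (equivalently, working with the correct convex region "behind" the curve as seen from $-1$) keeps $S$ convex, while still excluding $-1$. The "convexity" hypothesis on $h$ is tailored to make this work, and the argument will closely parallel — and in fact re-derive — the two-points-at-a-time reduction in the proof of \cref{thm:z-criterion}, with \cref{lem:AMGM} replaced by the assumed $-1$-covering-convexity of the curve. Once $S$ is shown to be convex, contain $0$, exclude $-1$, and satisfy $f(z)$ is $-1$-covered by a point of $S$ for every $z\in S$, I would feed it into (a $-1$-covering-tolerant restatement of) \cref{thm:z-criterion}: namely, pass to $T=\conv{\{\log(1+z):z\in S\}}$ and check the hypotheses of \cref{thm:w-criterion} exactly as in that proof, using \cref{obv:covering} to absorb the covering slack into the nonnegative scalar $c\le 1$. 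This yields $-1\notin\tilde S_\lambda(d)$ and hence $Z_G(\lambda)\neq 0$ for all $G\in\cG_{d+1}$.
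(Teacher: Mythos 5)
Your final construction --- taking $S$ to be the set of all points $-1$-covered by some curve point $h(t)$, the ``shadow of the curve as seen from $-1$'' --- is exactly the paper's trapping region, and you correctly identify that the real content is showing this shadow is convex via the curve-convexity hypothesis. One small simplification over your closing remarks: no ``$-1$-covering-tolerant'' variant of \Cref{thm:z-criterion} is needed, because with the shadow definition $S$ is literally closed under $f$ --- the fourth bullet places $f(h(t))$ in $S$ directly, so $f(z) = \alpha f(h(t))$ with $\alpha \in [0,1]$ (from \Cref{obv:covering}) lies in $S$ by convexity and $0 \in S$, and \Cref{thm:z-criterion} applies as stated.
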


\begin{proof}
  By an affine reparameterization of the curve $h$, if necessary, we assume that
  $a = 0$ and $b = 1$.  Given the curve $h(t), t \in [0,1]$, we define a
  trapping region, in the sense of \Cref{thm:z-criterion}, as a ``shadow of the
  curve $h(t)$ when illuminated from the point $-1$'':
$$ S = \{z \in \CC: \exists t \in [0,1] \textrm{ such that } z \mbox{ is $-1$-covered by } h(t) \}.$$
This is a convex set, since for any $z_1, z_2 \in S$, $z_1$ and $z_2$ are
$-1$-covered by $h(t_1), h(t_2)$ respectively, $z = \alpha z_1 + (1-\alpha) z_2$
is $-1$-covered by $z'= \alpha' h(t_1) + (1-\alpha') h(t_2)$ (for some
$\alpha' \in [0, 1]$), and $z'$ in turn is covered by $h(t)$ for some
$t \in [t_1,t_2]$ by the convexity of $h(t)$.

Also, $0$ is contained in $S$ because $h(t) = 0$ for some $t \in [0,1]$; $-1$
is not contained in $S$, since $0$ is the only real value on the curve $h(t)$
(this follows since $\arg (1 + h(t))$ is assumed to be a strictly increasing
function of $t$), so that $S$ contains only non-negative real numbers.

To apply \Cref{thm:z-criterion} in order to conclude the proof, it remains to
prove that $S$ is closed under the map $f(z) = \frac{\lambda}{(1+z)^d}$. For any
$z \in S$, there is a $t \in [0,1]$ such that $z$ is $-1$-covered by
$h(t)$. \Cref{obv:covering} then implies that $f(z) = \alpha f(h(t))$ for some
$\alpha \in [0, 1]$.  By the assumptions of the theorem, we know that $f(h(t))$
is $-1$-covered by some point $h(t')$, $t' \in [0,1]$, which implies that
$f(h(t)) \in S$. Hence by convexity, since $0 \in S$, we also get
$f(z) = \alpha f(h(t)) \in S$, as required.
\end{proof}

Next, we formulate a more concrete sufficient condition which can be used in
numerical experiments.\footnote{This was the result stated
    in a talk at the Simons Institute for the Theory of Computing, UC Berkeley,
    on March 18, 2019.} The proof we give here highlights the connections of
this result with numerical exploration, even though alternative proofs may be
possible (see \cref{rem:numerical,rem:alternative} following the proof).
\begin{figure}[t]
  \centering
  \includegraphics[scale=0.8]{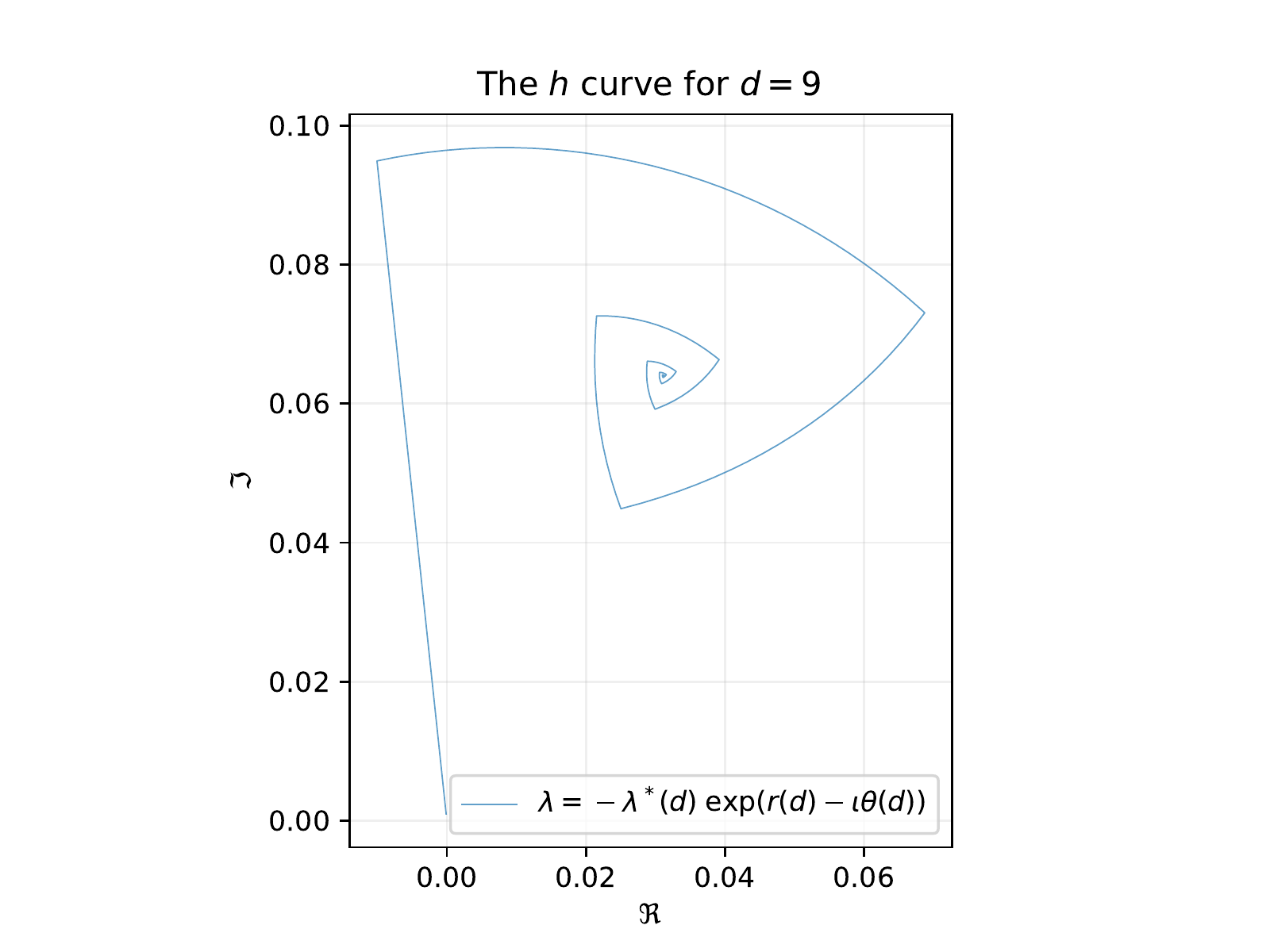}
  \caption{The $h$ curve from \cref{thm:Simons-result}. We set $d=9$, and
    $\lambda^*(d)$ is the Shearer threshold described in
    \Cref{thm:shearer}. $\theta(d)$ is chosen to be
    $\cos^{-1}\frac{1}{d + 0.5}$, while
    $r(d) \defeq \min\inb{d\log (1 + \frac{1}{d}),
      \frac{2d(d+1)\sin^2(\theta/2)}{d^2 + 4(d+1)\sin^2(\theta/2)}}$.  The
    choice of these parameters is based on their use in the later
    \cref{thm:critical-vicinity}, which applies \cref{thm:Simons-result}.}
  \label{fig:h-curve}
\end{figure}

\begin{theorem}
\label{thm:Simons-result}
For $\lambda \in \CC$, $\Im(\lambda) > 0$, define a curve
\begin{itemize}
\item $h(t) = t \lambda$ for $t \in [0,1]$,
\item $h(t) = \frac{\lambda}{(1+h(t-1))^d}$ for $t > 1$.
\end{itemize}
If $\Im(h(t)) \geq 0$ for all $t \geq 0$, then $Z_G(\lambda) \neq 0$ for all
$G \in \cG_{d+1}$.  (See also \Cref{fig:h-curve} for an example of the curve
$h$ in the statement of the theorem.  )
\end{theorem}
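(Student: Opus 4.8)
The natural plan is to use the orbit of the map $f(z)=\frac{\lambda}{(1+z)^d}$ along the curve $h$ as the ``skeleton'' of a trapping region, and then invoke \Cref{thm:z-criterion} (equivalently \Cref{thm:w-criterion} in the $w=\log(1+z)$ coordinate). Concretely, I would set
\[
  S \;:=\; \conv{\,h(t): t\ge 0\,}
\]
(or its closure, if compactness is convenient), and try to verify the three hypotheses of \Cref{thm:z-criterion}: that $0\in S$, that $-1\notin S$, and that $f(S)\subseteq S$. The first is immediate since $h(0)=0$. For $f$-closedness, the decisive structural feature is that the curve is \emph{forward invariant} under $f$: directly from the defining recursion, $f(h(t))=h(t+1)$ for every $t\ge 0$, so $f$ maps the generators of $S$ into $S$; the goal is to propagate this to all convex combinations.

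\textbf{Key steps.} First I would extract from the hypothesis $\Im h(t)\ge 0$ the geometric consequences we need: since every $1+h(t)$ then lies in the closed upper half-plane, any two of them have arguments differing by at most $\pi$, which is exactly the hypothesis required to apply \Cref{lem:AMGM}; moreover $t\mapsto\arg(1+h(t))$ is continuous and starts at $0$, so its range is an interval $[0,\phi_{\max}]$. This lets me handle $-1\notin S$ (one checks the only way a convex combination of upper-half-plane points can be real is to be a combination of the real orbit points, and the positivity of $\Im h$, together with an analysis of what happens one step later under $f$, rules out orbit points that are real and $\le -1$, so $S$ meets the real axis only at non-negative reals). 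For $f(S)\subseteq S$, the plan is to show that \emph{every $z\in S$ is $-1$-covered by some orbit point $h(t)$}: given that, \Cref{obv:covering} yields $f(z)=\alpha\,f(h(t))=\alpha\,h(t+1)$ for some $\alpha\in[0,1]$, and since $0\in S$, $h(t+1)\in S$ and $S$ is convex, $f(z)\in S$. To see the covering, write $z=\sum_i\alpha_i h(t_i)$; one passes to the geometric mean $\prod_i(1+h(t_i))^{\alpha_i}$ via \Cref{lem:AMGM} (legitimate by the upper-half-plane bound), which equals $\tfrac1\tau(1+\tilde z)$ for some $\tilde z\in\conv{h(t_i)}$ and $\tau\in[0,1]$, and matches the argument of $1+\tilde z$; combining this with continuity of $\arg(1+h(\cdot))$ and the identity $\arg h(s+1)=\arg\lambda-d\arg(1+h(s))\pmod{2\pi}$ produces an orbit point of the correct argument and no larger modulus of $1+\cdot$. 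The same computation reads more cleanly in the $w$-plane: with $T:=\conv{\log(1+h(t)):t\ge 0}$ (automatically convex, containing $0$), one has $g(\log(1+h(t)))=\log(1+h(t+1))\in T$, and for general $w=\sum_i\alpha_i\log(1+h(t_i))$ the quantity $\lambda e^{-dw}=\lambda\prod_i(1+h(t_i))^{-d\alpha_i}$ can be rewritten, via \Cref{lem:AMGM}, as a sub-unit scaling of $\lambda/(1+\tilde z)^d$, so that $g(w)$ lands in $T$.

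\textbf{Main obstacle.} The genuinely delicate point is the claim that the orbit $-1$-covers its own convex hull -- equivalently, that the curve $\{1+h(t)\}$, viewed from the origin, ``shields'' all the chords of the orbit (so that no corner-cutting convex combination slips closer to $-1$ than the curve itself). Unlike in the hypotheses of \Cref{thm:init-curve}, the orbit curve is \emph{not} monotone in $\arg(1+h(t))$ (for instance, for $d=2$, $\lambda=\iota$ it goes $0\to\tfrac\pi4\to 0$), so this convexity must be argued globally rather than by direct monotonicity. I expect this to require decomposing the orbit into its successive pieces -- each the image of the segment $[0,\lambda]$ under an iterate of $f$ -- and showing piece-by-piece, using the upper-half-plane constraint and \Cref{lem:AMGM} in the form already used above, that the relevant ``shadow'' region is convex; stitching the pieces together (and controlling the tail as $t\to\infty$, using that $\arg(1+h(t))$ is bounded) is where the bookkeeping will be heaviest. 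A secondary technical nuisance is keeping track of branch cuts of $\log$ and fractional powers when rewriting $\lambda e^{-dw}$ as a weighted geometric mean of later orbit points; the upper-half-plane hypothesis is what keeps all the relevant arguments in a range where these manipulations are valid.
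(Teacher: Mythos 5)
Your overall strategy — build a trapping region from the forward orbit of $h$ under $f$ and verify the hypotheses of \Cref{thm:z-criterion} — is the same as the paper's, and you correctly isolate the central obstacle: $\arg(1+h(t))$ is not monotone over the whole orbit, so convexity of $\conv{h(t):t\ge 0}$ and the ``$-1$-covering'' property cannot be read off directly. However, you stop at identifying the obstacle and do not resolve it, and this is a genuine gap rather than mere bookkeeping. The plan you sketch (decompose the orbit into iterated images of $[0,\lambda]$, prove shadow-convexity piece by piece, control the tail as $t\to\infty$) is exactly what one would like to avoid, and there is no evident reason it terminates; without some control on how late pieces of the orbit sit relative to early ones, piecewise shadow-convexity does not stitch into global convexity of the hull.

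The paper's missing ingredient, which you would need, is a cutoff that makes the region explicit and finite. One sets $\tau^\star := \sup\{t': \arg(1+h(t))\text{ is non-decreasing on }[0,t']\}$ (so $\tau^\star\ge 1$) and takes $S$ to be the region bounded by the two segments $[0,h(1)]$, $[0,h(\tau^\star+1)]$ and the arc $\{h(t+1):0\le t\le\tau^\star\}$. Convexity of this $S$ is then proved directly from the recursion: $\arg h(t)=\arg\lambda-d\arg(1+h(t-1))$ is non-increasing on $[1,\tau^\star+1]$ by the definition of $\tau^\star$, and the arc ``turns right'' via the derivative identity $h'(t)=\tfrac{-\lambda d}{(1+h(t-1))^{d+1}}h'(t-1)$ and an induction on $\lceil t\rceil$. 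With that convex $S$ in hand, the covering and $f$-closure follow cleanly from $f(h(t))=h(t+1)$ — in particular one never has to analyze the orbit past $t=\tau^\star+1$, which is precisely what defuses the ``main obstacle'' you flag. A secondary caveat: your argument that $-1\notin S$ is also not quite carried through; what actually pins this down in the paper is the a priori bound $\arg(1+h(t-1))\le\tfrac1d\arg\lambda$ (obtained by contradiction from $\Im h\ge 0$), which forces $\arg h(t)\in[0,\arg\lambda]$ for all $t$ and hence confines $\conv{h(t)}$ to a cone of aperture $<\pi$ not containing $-1$.
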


\begin{proof}
  Note that the curve $h(t)$ is continuous since it is continuous at $t\leq 1$:
  this is because at $t=1$ we have $\lim_{t\to 1-}h(t)=\lim_{t\to 1+}h(t)=\lambda$, and by the recursion $h$ is continuous at $t$ if it is continuous at $t-1$.
  By continuity and  the assumption
  $\Im(h(t)) \geq 0$ for all $t \geq 0$ this implies that the continuous
  functions $\arg (h(t))$ and $\arg(1 + h(t))$ are also non-negative for
  $t \geq 0$.  The identity $\arg(h(t)) = \arg(\lambda) - d \arg(1+h(t-1))$ for
  $t \geq 1$ (valid whenever the right hand side lies in $(-\pi, \pi]$) then
  implies that $\arg(1+h(t-1))$ cannot exceed $\frac{1}{d} \arg(\lambda)$,
  because if $t$ is the infimum of points for which
  $\arg(1+h(t-1)) > \frac{1}{d} \arg(\lambda)$ then the identity gives a
  contradiction to $\arg(h(t + \epsilon)) \geq 0$ for some small enough positive
  $\epsilon$.  Since $\arg(1+h(t-1)) \geq 0$, the identity then also implies
  that
  \begin{equation}
    \label{eq:8}
    0 \leq \arg(h(t)) = \arg(\lambda) - d \arg(1+h(t-1)) \leq \arg(\lambda).
 \end{equation}
 Hence the argument of any point of the curve is contained in
  $[0,\arg(\lambda)]$.

  In order to define a trapping region, we start by defining the following
  quantity:
\begin{equation}
    \label{eq:9}
    \tau^\star \defeq \sup
    \{
    t': \arg(1+h(t)) \text{ is non-decreasing for all } t
    \in [0,t']
    \}.
  \end{equation}
  Note that $\tau^\star \geq 1$.  We also allow $\tau^\star=\infty$, although this cannot
  really happen.  We will now show that the region $S$ in the upper half plane
  bounded by the line segments $[0, h(1)]$ and $[0, h(\tau^\star + 1)]$ and the curve
  $\inb{h(t + 1) | 0 \leq t \leq \tau^\star}$ is a trapping region in the sense of
  \Cref{thm:z-criterion}.  Note that by definition $0 \in S$, while
  $-1 \not\in S$ (since, from the observations above, $\arg z \in [0, \arg
  \lambda]$ for all $z \in S$).  It remains to show that (1) $S$ is convex, and
  (2) $f(z) = \lambda/(1+z)^d \in S$ for all $z$ in $S$.

  \newcommand{\Dr}[1][]{\ensuremath{D^+\notblank{#1}{\inp{#1}}{}}}
  \newcommand{\Dl}[1][]{\ensuremath{D^-\notblank{#1}{\inp{#1}}{}}} We start by
  proving that $S$ is convex. Since the curve $h$ lies in the upper half plane
  (see \cref{eq:8}), this will follow if we establish the following two facts:
  \begin{enumerate}
  \item $\arg (h(t))$ is non-increasing for $t \in [1, \tau^\star + 1]$.\label{item:5}
  \item The curve $\inb{h(t) :  t \in [1, \tau^\star + 1]}$ is ``turning to the
      right''.  More formally, for any $t \in [1, \tau^\star + 1)$, there is a small
      enough neighborhood $N_t$ of $t$ such that for $t_1 \leq t_2$ in $N_t$,
      $\arg (\Dl[h](t_1) / \Dr[h](t_2)) \geq 0$.  Here $\Dr$ and $\Dl$ denote the
      right and left one-sided derivatives.\label{item:6}
  \end{enumerate}

  We first prove \cref{item:5}.  This follows since in the interval
  $t \in [1,\tau^\star+1]$, we have $\arg(h(t)) = \arg(\lambda) - d \arg(1+h(t-1))$,
  which is non-increasing by the definition of $\tau^\star$.

  We now consider \cref{item:6}.  Note that $h(t)$ is continuously
  differentiable in the neighborhood of any $t$ which is not an integer.
  Further, for such a $t$, we have
  $h'(t) = \frac{-\lambda d}{(1 + h(t-1))^{d+1}}h'(t-1)$.  We now prove the
  claim for such $t$ (i.e., non-integral $t$) using an induction on $\ceil{t}$.
  In the base case, when $\ceil{t} = 1$, we have $h'(t) = \lambda$, so
  $\arg (\Dl[h](t_1)/\Dr[h](t_2)) = \arg(1) = 0$ for $t_1, t_2$ in any small enough
  neighborhood of $t$.  In the inductive case, we have, for $t_1 \leq t_2$ in a
  small enough neighborhood of $t$,
  \begin{equation}
    \arg \frac{\Dl[h](t_1)}{\Dr[h](t_2)} = (d+1) \arg \frac{1 + h(t_2 - 1)}{1 +
    h(t_1 - 1)} + \arg \frac{\Dl[h](t_1-1)}{\Dr[h](t_2-1)}.\label{eq:15}
  \end{equation}
  The claim now follows since the first term is non-negative due to the
  definition of $\tau^\star$, while the second is non-negative by the inductive
  hypothesis.

  We now consider the case of integral $t$.  Here, we find via a direct
  induction that
  \begin{equation}
    \arg \frac{\Dl[h](t)}{\Dr[h](t)} = \arg \frac{\Dl[h](1)}{\Dr[h](1)} = \arg
    \frac{\lambda}{(-d\lambda^2)} = \pi -
    \arg \lambda \geq 0.\label{eq:30}
  \end{equation}
  The proof for \cref{item:6} now follows from the already proved case of
  non-integral $t$ and the fact that the derivative $h'$ is a well-defined
  continuous function except at integral $t$.  As noted earlier, this proves
  that $S$ is convex.  In fact, from the definition of $\tau^\star$, we also obtain
  that $0 = \arg (1 + h(0)) \leq \arg (1 + z) \leq \arg (1 + h(\tau^\star))$ for all
  $z \in S$.  Since $\arg(1 + h(t))$ is non-decreasing for $t \in [0, \tau^\star]$ and
  $S$ is convex, it follows that if a line is drawn from $z \in S$ in the
  direction of $-1$, it will intersect the boundary of $S$ at some point $h(t)$
  for $0 \leq t \leq \tau^\star$.

  We can now prove that $S$ satisfies the remaining requirement for being a
  trapping region, which is, that it is closed under application of $f$.  Consider
  any point $z \in S$.  As noted above, if a line is drawn from $z$ towards
  $-1$, then it must intersect the boundary of $S$ on a point $\tilde{z}$ of the
  form $h(t)$ for $t \in [0,\tau^\star]$.  Hence, there is a point
  $\tilde{z} = h(t), t \in [0,\tau^\star]$ such that $\arg(1+h(t)) = \arg(1+z)$ and
  $|1+h(t)| \leq |1+z|$. By construction,
  \begin{equation}
    \label{eq:38}
    f(h(t)) = \frac{\lambda}{(1+h(t))^d} = h(t+1)
  \end{equation}
  which is still in $S$ (since $t \in [0,\tau^\star]$). Finally,
  $f(z) = \frac{\lambda}{(1+z)^d}$ has the same argument as $h(t+1)$, and
  possibly smaller modulus, hence $f(z) \in S$ by convexity (since $0 \in S$).

  Thus, $S$ as defined above is a trapping region, and this concludes the proof.
\end{proof}

\begin{remark}\label{rem:numerical}
  We note that the proof of \cref{thm:Simons-result} also indicates a numerical
  approach to check this criterion, see
  Figure~\ref{fig:Full_power_iterated_spiral}. We do not have to track the curve
  for $t \rightarrow \infty$. It is sufficient to compute $h(t)$ for
  $t \in [0,\tau^\star+1]$ as defined above. If
  $\arg(1+h(\tau^\star)) \leq \frac{1}{d} \arg(\lambda)$ and $\arg(1+h(t))$ is
  non-increasing for $t \in [\tau^\star,\tau^\star+1]$, the argument above
  implies that $S$ is a trapping region and the entire curve is contained in the
  upper half-plane.
\end{remark}
\begin{remark}\label{rem:alternative}
  We also remark that it is possible to prove \cref{thm:Simons-result} from
  \cref{thm:init-curve} by considering the curve from the proof of
  \cref{thm:Simons-result} on the interval $[0,\tau^\star]$. The curve in
  \cref{thm:init-curve} represents the portion of the curve in
  \cref{thm:Simons-result} ``visible from $-1$''; i.e., the points $h(t)$ that
  are not $-1$-covered by any other point $h(t')$.
\end{remark}

\begin{figure}
    \centering
    \includegraphics[width=0.8\textwidth]{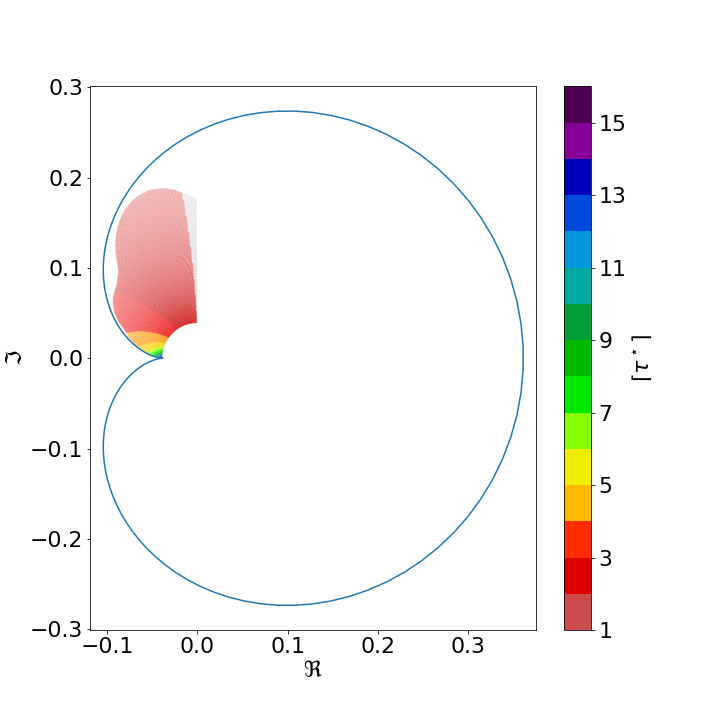}
    \caption{A numerical exploration of points that satisfy the condition of
      Theorem~\ref{thm:Simons-result} for $d=9$. Colors represent the value of
      $\ceil{\tau^\star}$.}
    \label{fig:Full_power_iterated_spiral}
\end{figure}

\section{Derivation of the Sokal conjecture} \label{sec:sokal-conjecture}

Here we provide a short proof using \cref{thm:init-curve} that there are no complex roots close to the positive real axis, up to the critical point $\lambda^* = \frac{d^d}{(d-1)^{d+1}}$. This was first proved by Peters and Regts \cite{peters_conjecture_2019}.

\begin{figure}[t]
  \centering
  \includegraphics[scale=0.8]{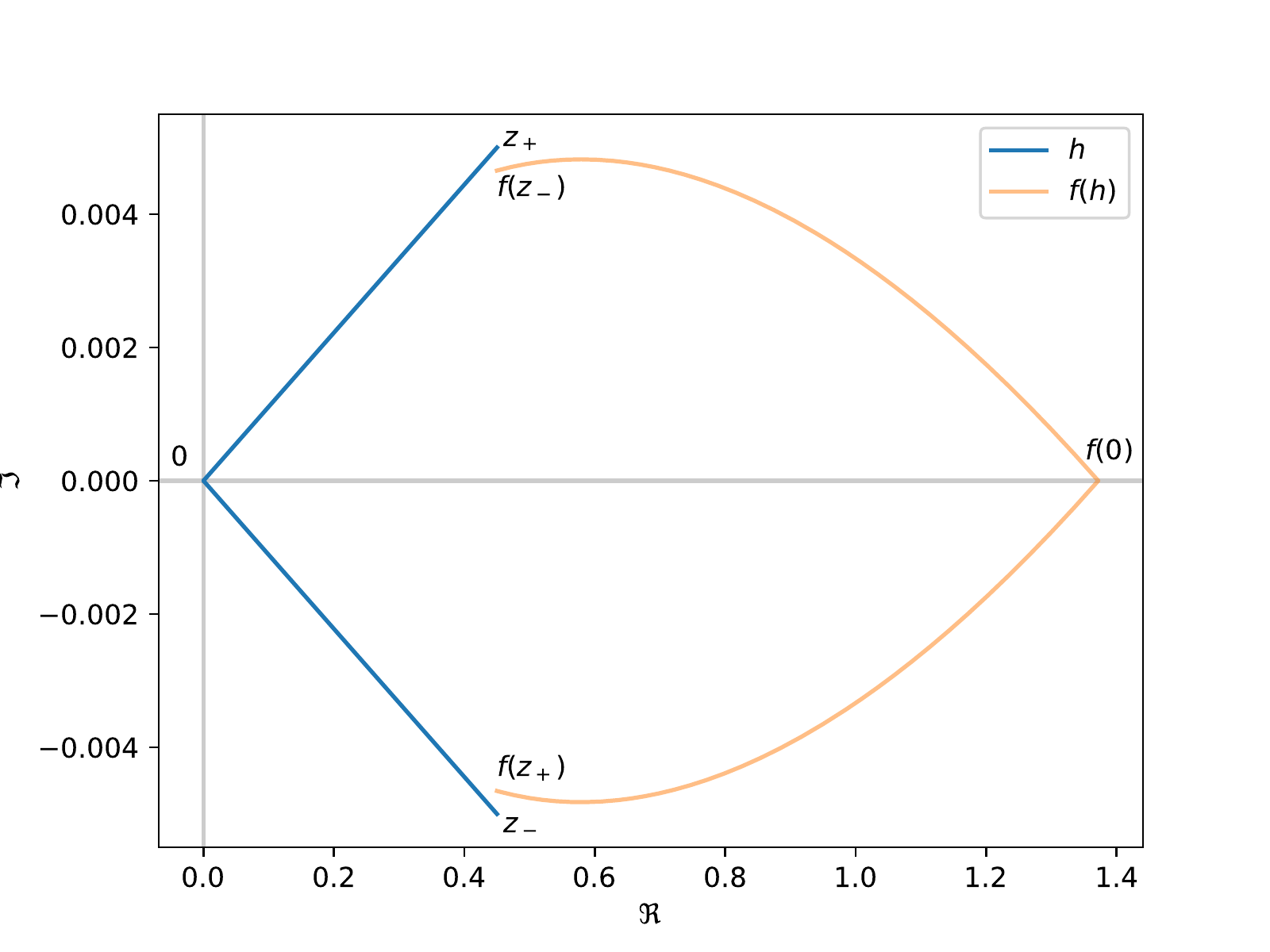}
  \caption{A sketch of the $h$ curve in the proof of \cref{thm:sokal-conj}.  In
    the notation of the theorem, the sketch corresponds to $d \defeq 3$,
    $\epsilon \defeq 0.1$ and $\delta \defeq 0.01$.  $\theta$ has been set to
    $0$ for simplicity. As before, $f$ is the map $z \mapsto \lambda/(1+z)^d$.
    The aspect ratio in the figure has been chosen to be different from $1$ to
    accentuate features close to the real line.}
  \label{fig:sokal-h}
\end{figure}

\begin{theorem}
  For every fixed $d \geq 2$, and every $\epsilon \in (0, 1)$, there is an
  $\epsilon' > 0$ such that $Z_G(\lambda) \neq 0$ for $G$ of maximum degree at
  most $d + 1$ when
  $\lambda = (1-\epsilon)\frac{(d-\epsilon)^d}{(d-1)^{d+1}}\exp(\iota \theta)$
  with $\abs{\theta} \leq \epsilon'$.\label{thm:sokal-conj}
\end{theorem}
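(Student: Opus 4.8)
The plan is to apply \cref{thm:init-curve} with a carefully chosen curve $h$ that starts near the origin, runs along (a perturbation of) the real axis out to a point slightly past the Peters--Regts value $\lambda^\star = \frac{(d-\epsilon)^d}{(d-1)^{d+1}}$, and is then closed up so that $f(h(t)) = \frac{\lambda}{(1+h(t))^d}$ is $-1$-covered by the curve itself. The key numerical fact to exploit is that the map $x \mapsto \frac{\lambda}{(1+x)^d}$ with $\lambda = (1-\epsilon)\frac{(d-\epsilon)^d}{(d-1)^{d+1}}$ (real case $\theta = 0$) has an attracting fixed point on the positive real line strictly below $\frac{1}{d-1}$ (this is exactly where $(1-\epsilon)$ and the shift from $d$ to $d-\epsilon$ buy room), and the derivative of $f$ at that fixed point has modulus $< 1$; this is what makes the whole construction stable under small complex perturbations of $\theta$.

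First I would treat the base case $\theta = 0$. Here the curve $h$ is essentially a real interval: take $h$ to run from a small value up to the attracting fixed point $x^\star$ of $f$ on $(0, \tfrac{1}{d-1})$. On such a real segment, $\arg(1 + h(t))$ is constantly $0$, so the ``strictly increasing argument'' hypothesis of \cref{thm:init-curve} fails as stated --- this is why the figure shows a curve that dips slightly \emph{below} the real axis near the origin and comes back. So the real plan is: perturb $h$ to a curve that leaves $0$ going into the lower half-plane (or upper, by symmetry), sweeps monotonically in argument, and returns to hug the real axis near $x^\star$; the convexity requirement (third bullet of \cref{thm:init-curve}) is met by making $h$ a genuinely convex arc as seen from $-1$. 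One then checks the fourth bullet: $f(h(t))$ for $t$ near the ``real'' part of the curve lands near $x^\star$ (because $f$ contracts towards $x^\star$), and is $-1$-covered by a point of $h$; for $t$ near the origin-end of $h$, $f(h(t))$ is close to $f(0) = \lambda \approx x^\star$ as well, so again covered. The quantitative heart is the contraction estimate $|f'(x)| < 1$ on a neighborhood of $[0, x^\star]$, which follows from $|f'(x^\star)| = \frac{d x^\star}{1 + x^\star} < \frac{d/(d-1)}{1 + 1/(d-1)} \cdot (\text{slack from } 1-\epsilon) < 1$.

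For general small $\theta$, I would take $\lambda = \lambda_0 e^{\iota\theta}$ with $\lambda_0 = (1-\epsilon)\frac{(d-\epsilon)^d}{(d-1)^{d+1}}$ and use a continuity/openness argument: the conditions in \cref{thm:init-curve} are open conditions on $h$ in a suitable $C^1$ sense, and the curve constructed for $\theta = 0$ (with all four bullet conditions holding with strict inequalities and some margin) deforms continuously with $\theta$. Concretely, the attracting fixed point $x^\star(\lambda)$ of $f$ depends holomorphically on $\lambda$ near $\lambda_0$ (implicit function theorem, since $f'(x^\star) \neq 1$), and $|f'(x^\star(\lambda))| < 1$ persists; I would rebuild $h$ as a curve from near $0$ to near $x^\star(\lambda)$ following the same recipe, now genuinely in the complex plane, and verify the four bullets survive for $|\theta| \le \epsilon'$ with $\epsilon' = \epsilon'(d, \epsilon)$ small enough. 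The main obstacle --- and the step that will take the real work in the full proof --- is verifying the third (convexity / $-1$-covered-ness) and fourth (self-covering under $f$) conditions simultaneously and uniformly in $\theta$: one must produce an explicit parametrization of $h$ (the paper's figure suggests a concatenation of a short ``entry'' arc, a near-real middle segment, and possibly an ``exit'' arc), compute $\arg(1 + h(t))$ and check monotonicity, and control how $f$ moves the whole curve. I expect the cleanest route is to choose $h$ on $[0,1]$ so that on the middle portion $h(t)$ is a straight segment at a tiny angle to the real axis, handle the entry arc by an explicit small circular or parabolic cap chosen so its $f$-image stays covered, and absorb everything else into the contraction estimate; the bookkeeping for the endpoints (where $f(h(t))$ must be covered by $h(t')$ for some $t' \in [a,b]$, not just approximately) is the fiddly part.
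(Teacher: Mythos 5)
Your high-level plan — invoke \Cref{thm:init-curve} with a curve that lives near the attracting fixed point of $z\mapsto\lambda/(1+z)^d$ and exploit the slack from the factor $(1-\epsilon)$ — is exactly the right idea, but the curve you describe would not satisfy the hypotheses of \Cref{thm:init-curve}, and the verification you defer is precisely where the proof lives. You propose a curve that starts at (or near) $0$, dips into the lower half-plane, and returns to the real axis near the fixed point $x^\star=\frac{1-\epsilon}{d-1}$; but $\arg(1+h(t))$ then equals $0$ at the start, becomes negative on the dip, and climbs back to roughly $0$ at the far end, so it is not strictly increasing on $[a,b)$ as the second bullet requires. The paper instead takes the two \emph{conjugate} perturbations $z_\pm=\frac{1-\epsilon\pm\iota\delta}{d-1}$ of the fixed point and uses the wedge $h(t)=-t z_-$ for $t\in[-1,0)$ and $h(t)=tz_+$ for $t\in[0,1]$. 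Then $0$ sits in the interior, $\arg(1+h(t))$ increases monotonically from $\arg(1+z_-)<0$ through $0$ to $\arg(1+z_+)>0$, and the convexity condition is immediate because $h$ is just two line segments meeting at the origin.

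The second substantive divergence is that the paper does not use a contraction estimate $|f'(x^\star)|<1$, the implicit function theorem, or a continuity/openness argument in $\theta$. It verifies the fourth bullet of \Cref{thm:init-curve} by direct computation: it evaluates $f(z_\pm)$ explicitly (giving $f(z_\pm)=\frac{1-\epsilon}{d-1}(1\mp\iota\delta/(d-\epsilon))^{-d}e^{\iota\theta}$), compares arguments and moduli to see that $f(z_\pm)$ are $-1$-covered by the opposite arms of the wedge, and then shows $\arg(1+f(tz_+))$ is monotone in $t$ by computing $\Im\bigl(\gamma'(t)/\gamma(t)\bigr)$ for $\gamma(t)=1+f(tz_+)$ and using a sign inequality (\cref{eq:27}) that holds because $d\geq 2$; the dependence on $\theta$ is carried through all these estimates explicitly and $\epsilon'$ is simply whatever bound on $|\theta|$ makes them valid. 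Your proposal leaves this step as ``the fiddly part'' and suggests openness would handle it; the lesson of the paper's proof is that the right choice of curve (a symmetric wedge with vertex at $0$, not an arc from $0$ to $x^\star$) makes the fiddly part tractable.
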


\begin{proof}
  Note that $z_0 \defeq \frac{1-\epsilon}{d-1}$ is a fixed point of the map
  $z \mapsto \frac{\lambda_0}{(1+z)^d}$ where
  $\lambda_0 = (1-\epsilon)\frac{(d-\epsilon)^d}{(d-1)^{d+1}}$.  Note also that
  $\lambda_0 \rightarrow \lambda^*$ as $\epsilon \rightarrow 0$. We consider two
  complex conjugate points $z_\pm = \frac{1-\epsilon \pm \iota \delta}{d-1}$ for
  some $\delta>0$ to be fixed later.  We have
  $\lambda = \lambda_0\exp(\iota\theta)$ where $\theta > 0$ is small enough (as
  a function of $\epsilon$ and $\delta$) to be fixed later.  We will now use
  \Cref{thm:init-curve} for the curve defined by
  \begin{equation}
    h(t)=
    \begin{cases}
      -t z_- & \text{if } t \in [-1, 0)\\
      t z_+ & \text{if } t \in [0, 1]
    \end{cases}.
  \end{equation}
  The first three conditions required of the curve $h$ in \Cref{thm:init-curve}
  are satisfied by construction (see \cref{fig:sokal-h} for an example sketch).
  We now proceed to verify the fourth condition.  To start with, a direct
  computation reveals that
  \begin{align}
    f(z_-) = \frac{\lambda}{(1+z_-)^d}
    &= (1-\epsilon)
      \frac{(d-\epsilon)^d}{(d-1)^{d+1}}
      \left(1 + \frac{1-\epsilon-\iota \delta}{d-1}\right)^{-d}\exp(\iota \theta)\\
    &= \frac{1-\epsilon}{d-1}
      \left(1
      - \frac{\iota \delta}{d-\epsilon}
      \right)^{-d} \exp(\iota \theta).\label{eq:13}
  \end{align}
  We have
  $\arg(1 - \frac{\delta i}{d-\epsilon}) = -\tan^{-1}
  (\frac{\delta}{d-\epsilon}) = -\frac{\delta}{d-\epsilon} + O(\delta^2)$. Therefore,
  \begin{equation}
    \arg(f(z_-)) = -d \arg \inp{1 - \frac{\iota \delta}{d-\epsilon}} + \theta
    = \frac{d \delta}{d-\epsilon} + O(\delta^2) + \theta.\label{eq:17}
  \end{equation}
  
  In comparison,
  \begin{equation}
    \arg z_+  = \tan^{-1} \left( \frac{\delta}{1-\epsilon} \right) =
    \frac{\delta}{1-\epsilon} + O(\delta^2).
     \label{eq:19}
  \end{equation}
  Thus, for all $\delta>0$ small enough (depending on $\epsilon$ and $d$) and
  all $\theta \geq 0$ small enough (depending on $d \geq 2$, $\delta$ and
  $\epsilon$), we have
  \begin{equation}
    \arg z_+ 
    > \arg f(z_-) > 0.\label{eq:16}
  \end{equation}
  Observe also that $|f(z_-)| < | \frac{1-\epsilon}{d-1} | < |z_+|$. This
  implies that $f(z_-)$ is $-1$-covered by some point on the line segment from
  $0$ to $z_+$, and in particular, $0 \leq \arg(1 + f(z_-)) \leq \arg(1 + z_+)$.
  An essentially symmetric argument shows that $f(z_+)$ is also $-1$-covered by
  some point on the line segment between $0$ and $z_-$, and in particular,
  $0 \geq \arg(1 + f(z_+)) \geq \arg(1 + z_-)$.  The inequality analogous to
  \cref{eq:16} for $f(z_+)$ is (again, provided that $\theta$ has small enough
  magnitude)
  \begin{equation}
    \arg z_- 
    < \arg f(z_+) < 0.\label{eq:20}
  \end{equation}
  For later use, we also record the following computation. For small positive
  $\theta$ and $\delta$, we have, by a direct computation,
  \begin{align}
    \label{eq:25}
    \arg \frac{f(z_+)}{1 + f(z_+)}
    &= -\frac{d(d-1)}{(d-\epsilon)^2}\delta +
      O(\delta^2) + O(\theta), \text{ and }\\
    \label{eq:26}
    \arg \frac{z_+}{1 + z_+}
    &= \frac{d-1}{(d-\epsilon)(1-\epsilon)}\delta +
      O(\delta^2).
  \end{align}
  In particular, since $d \geq 2$, we have
  \begin{equation}
    \label{eq:27}
    \arg \frac{f(z_+)}{1 + f(z_+)} +     \arg \frac{z_+}{1 + z_+} \geq 0
  \end{equation}
  for all small enough $\delta > 0$ and $\theta > 0$.  At this point, we specify
  our choice of $\theta$ and $\delta$: we choose $\delta < 1$ and
  $\theta < \pi/10$ positive and small enough that (i) \cref{eq:16}, its
  analogue \cref{eq:20} for $f(z_+)$, and \cref{eq:27} are all valid, and (ii)
  $\abs{\arg{z_+}} = \abs{\arg{z_-}} \leq \pi/10$.  In the following, we use
  these conditions imposed on $\delta$ and $\theta$ without comment.

  We now claim that the curve $\{ f(t z_+): 0 \leq t \leq 1 \}$ is also
  $-1$-covered by the curve $h$ defined above.  To prove this, we define
  $\gamma(t) \defeq 1 + f(tz_+)$, and note that we have (for $s \in (0, 1)$)
  $\diff{}{s} \arg\gamma(s)\vert_{s = t} = \Im \frac{\gamma'(t)}{\gamma(t)}$.
 
  We begin by noting that as $t$ increases from $0$ to $1$, $\arg f(tz_+)$
  decreases from $\theta = \arg f(0)$ to $\arg f(z_+) > \arg z_-$ (since
  $\arg\inp{1 + tz_+}$ increases as $t$ increases), while $\abs{f(tz_+)}$
  decreases from $\lambda_0$ to $\abs{f(z_+)} < \abs{z_-}$ (again, since
  $\abs{1 + tz_+}$ increases as $t$ increases). We now compute
  \begin{equation}
    \diff{}{t} \arg\gamma(t) = \Im \frac{\gamma'(t)}{\gamma(t)}
    = -d\Im \inp{\frac{z_+}{1 + tz_+}\cdot\frac{f(tz_+)}{1 + f(tz_+)}}.\label{eq:18}
  \end{equation}
  We will now show that $\diff{}{t} \arg \gamma(t) \leq 0$ for all
  $t \in (0, 1)$.  For any particular $t$, if $\arg f(tz_+) \geq 0$, then
  \cref{eq:18} immediately implies that $\diff{}{t} \arg \gamma(t) < 0$ (since
  $\arg z_+ > 0$).  For any other $t \in (0, 1)$, we must have
  $\arg f(z_+) \leq \arg f(t z_+) \leq 0$.  We
  then get
  \begin{align}
    \arg{\frac{f(tz_+)}{1 + f(tz_+)}}
    &= \tan^{-1} \frac{
      \sin \arg f(tz_+)
      }{
      \abs{f(tz_+)} + \cos \arg f(tz_+)
      }\\
    &\ge \tan^{-1} \frac{
      \sin \arg f(z_+)
      }{
      \abs{f(tz_+)} + \cos \arg f(z_+)
      }\quad \text{ since } \arg f(t{z_+}) \geq \arg f(z_+),\\
    &\ge \tan^{-1} \frac{
      \sin \arg f(z_+)
      }{
      \abs{f(z_+)} + \cos \arg f(z_+)
      }\quad \text{ since } \abs{f(t{z_+})} \geq \abs{f(z_+)} \text{ and } \arg
      f(z_+) \leq 0,\\
    & = \arg \frac{f(z_+)}{1 + f(z_+)} \geq  -\arg\frac{z_+}{1 + z_+}.
  \end{align}
  Here, the last inequality comes from \cref{eq:27}.  Combining this with the
  observation that $\arg \frac{z_+}{1 + t z_+}$ is strictly decreasing in $t$
  for $t \in (0, 1)$, and substituting in \cref{eq:18}, we get the required
  claim that $\diff{}{t} \arg \gamma(t) \leq 0$ for all $t$ in $(0, 1)$.  Thus,
  $\arg(1 + f(tz_+))$ \emph{decreases} as $t$ increases from $0$ to $1$.  An
  essentially symmetrical argument shows that $\arg(1 + f(tz_-))$
  \emph{increases} as $t$ increases from $0$ to $1$.  Since we already
  established that $f(z_+)$ and $f(z_-)$ are $-1$-covered by $h$, and also that
  $\arg z_-  < \arg f(h(t))  < \arg z_+ $ for all $t \in [-1, 1]$, this
  establishes that the whole curve $\inb{f(h(t)) \; \vert \; t \in [-1, 1]}$ is
  $-1$-covered by $h$.

  We thus see that the fourth condition of \Cref{thm:init-curve} is also
  satisfied for the curve $h$.  We conclude therefore that $Z_G(\lambda) \neq 0$
  for all graphs $G$ of maximum degree at most $d+1$.

\end{proof}

\section{A new zero-free region in the vicinity of the critical point}
\label{sec:zero-free-region-negative-real-line}
 In this section, we use \Cref{thm:Simons-result} to establish a new zero-free
region for the independence polynomial in the vicinity of the negative real
line.  The result in this section applies more generally to points in the left
half-plane away from the imaginary axis; we consider points close to the
imaginary axis in \Cref{sec:zero-free-region-imaginary-axis}.

We recall that $\lambda^* = \lambda^*(d) \defeq \frac{d^d}{(d+1)^{d+1}}$ is the
Shearer threshold.  Consider the boundary $\partial U_d$ of the
``cardioid-shaped'' region $U_d$ (\cref{eq:12}) of Peters and
Regts~\cite{peters_conjecture_2019}.  Near the negative real line, one can
calculate that the curve $\partial U_d$ follows a power law of the following
form.  Let $R_U(\theta)$ denote the polar equation of $\partial U_d$, and let
$(X_U(\theta), Y_U(\theta))$ denote the corresponding Cartesian coordinates
$(R_U(\theta)\cos\theta, R_U(\theta)\sin\theta)$.  In the vicinity of the point
$-\lambda^*(d)$ on $\partial U_d$, a somewhat tedious but straightforward
calculation shows that for small $\phi$,
\begin{equation}
  X_U(\pi + \phi) = -\lambda^*(d) - c_d \cdot \abs{\phi}^{2/3} +
  o(\abs{\phi)}^{2/3}),\label{eq:39}
\end{equation}
where $c_d$ is a positive constant depending only on $d$.  While we cannot prove
that the true root-free region matches this exact power law, we have the
following result which gives a weaker power law (see \cref{fig:1} for a
pictorial description).

\begin{theorem}
  \label{thm:critical-vicinity}
  Fix an integer $d \geq 2$.  If $\lambda = -\lambda^*\exp(r - \im \theta)$,
  where $\theta \in (0, \cos^{-1}\frac{1}{d + 0.5}]$ and
  $0 \leq r \leq \min\inb{d\log (1 + \frac{1}{d}),
    \frac{2d(d+1)\sin^2(\theta/2)}{d^2 + 4(d+1)\sin^2(\theta/2)}}$, then
  $Z_G(\lambda) \neq 0$ for any graph $G$ of degree at most $d +
  1$.
\end{theorem}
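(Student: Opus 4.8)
The plan is to apply \cref{thm:Simons-result}. Writing $\lambda = -\lambda^*\exp(r-\iota\theta) = \lambda^* e^{r}\exp(\iota(\pi-\theta))$, and noting that $\theta \le \cos^{-1}\frac{1}{d+0.5} < \pi/2$, we have $\arg\lambda = \pi-\theta \in (\pi/2,\pi)$ and $\Im\lambda = \lambda^* e^{r}\sin\theta > 0$, so the hypothesis $\Im\lambda>0$ of \cref{thm:Simons-result} holds. Hence it suffices to prove that the curve $h$ from that theorem --- $h(t)=t\lambda$ on $[0,1]$ and $h(t)=\lambda/(1+h(t-1))^{d}$ for $t>1$ --- satisfies $\Im h(t)\ge 0$ for every $t\ge 0$.

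To prove this I would establish, by induction along the (continuous) curve $h$, the stronger invariant that $\arg h(t)\in[0,\pi-\theta]$ and $\arg(1+h(t))\in[0,\frac{\pi-\theta}{d}]$ for every $t\ge 0$. The two halves are linked by the identity $\arg h(t)=\arg\lambda-d\,\arg(1+h(t-1))$, valid for $t\ge 1$ whenever its right-hand side lies in $(-\pi,\pi]$: from $\arg(1+h(t-1))\in[0,\frac{\pi-\theta}{d}]$ one gets $\arg h(t)\in[0,\pi-\theta]$, so in particular $1+h(t)$ lies in the open upper half-plane. The real content is thus to propagate the bound $\arg(1+h(t))\le\frac{\pi-\theta}{d}$, and this is where the two constraints on $r$ enter.

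On the base interval $t\in[0,1]$ one has $1+h(t)=1+t\lambda$; since $\Re\lambda<0$ its argument increases in $t$, so it is maximised at $t=1$, and a short estimate using $|\lambda|=\lambda^* e^{r}\le\frac1{d+1}$ --- which is exactly the first constraint $r\le d\log(1+\frac1d)$ rewritten, since $\lambda^*(1+\frac1d)^{d}=\frac1{d+1}$ --- gives $\arg(1+\lambda)\le\frac{\pi-\theta}{d}$ with room to spare. For $t>1$ the recursion requires more care: writing $\psi=\arg(1+h(t-1))$ and $\rho=|h(t)|=|\lambda|/|1+h(t-1)|^{d}$, one computes (while $\rho\cos(\theta+d\psi)<1$) that $\arg(1+h(t))=\arctan\bigl(\rho\sin(\theta+d\psi)/(1-\rho\cos(\theta+d\psi))\bigr)$, and must bound this by $\frac{\pi-\theta}{d}$. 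For this one needs a uniform \emph{upper} bound on the modulus $\rho=|h(t)|$ along the curve, equivalently a uniform \emph{lower} bound on $|1+h(t-1)|$. The crude bound $|1+h(t-1)|\ge\sin\theta$ (valid since $\arg h(t-1)\le\pi-\theta$) is too weak when $\theta$ is small; instead one must follow the curve and show that, under the second constraint $r\le\frac{2d(d+1)\sin^{2}(\theta/2)}{d^{2}+4(d+1)\sin^{2}(\theta/2)}$ and the bound $\theta\le\cos^{-1}\frac{1}{d+0.5}$, it spirals into a small neighbourhood of the fixed point of $z\mapsto\lambda/(1+z)^{d}$ near $z=-\frac1{d+1}$ --- a perturbation of the parabolic (multiplier-one) fixed point present when $\lambda=-\lambda^*$ --- while staying inside the sector $\{0\le\arg z\le\pi-\theta\}$ and bounded away from $-1$. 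The precise shape of the $r$-constraint and the value $\cos^{-1}\frac1{d+0.5}$ are tuned so that this perturbed fixed point has argument at most $\pi-\theta$ and so that the inductive modulus bound closes; concretely, as in \cref{rem:numerical}, it is enough to track $h$ on $[0,\tau^{\star}+1]$ and verify that $\arg(1+h(\tau^{\star}))\le\frac1d\arg\lambda$ and that $\arg(1+h(t))$ is non-increasing on $[\tau^{\star},\tau^{\star}+1]$.

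The main obstacle is precisely this last step --- the simultaneous control of the argument and modulus of $h(t)$ under iteration of $z\mapsto\lambda/(1+z)^{d}$. Because the relevant fixed point degenerates to a parabolic (multiplier-one) fixed point as $\lambda$ approaches $-\lambda^*$, its local contraction factor tends to $1$ there, so no contraction estimate with a uniform rate is available; one must instead exhibit (or, via \cref{thm:Simons-result}, merely certify the existence of) an explicit forward-invariant region sitting in the upper half-plane. Carrying out the required inequalities uniformly in $d$ --- in particular that $\arg(1+h(t))$ never overshoots $\frac1d\arg\lambda$ --- is where essentially all of the technical effort lies, and it explains why the admissible range of $r$ shrinks to $0$ as $\theta\to 0$, i.e.\ as $\lambda$ approaches the negative real axis, where \cref{thm:shearer} is already tight.
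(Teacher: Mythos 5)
Your proposal correctly identifies the strategy (apply \cref{thm:Simons-result}, establish $\Im h(t)\ge 0$ by induction along the curve) and correctly pins the role of the first $r$-constraint (the base case $|\lambda| \le 1/(d+1)$). However, there is a genuine gap at precisely the point you flag as ``the main obstacle'': you never actually carry out the inductive modulus bound. Saying that the curve ``spirals into a small neighbourhood of the fixed point'' is a heuristic, not an argument, and the fallback to \cref{rem:numerical} (checking $h$ on $[0,\tau^\star+1]$) is a semi-rigorous numerical certificate for a \emph{particular} $(d,\lambda)$, not a proof valid for all $d\ge 2$ and all $\lambda$ in the stated region. As written, the proposal does not establish the theorem.

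What the paper does to close the gap is exhibit an explicit forward-invariant bound on the modulus, not just on the argument. It proves by induction on $\lceil t\rceil$ that $|h(t)|\le\tau\defeq\frac{1}{d+1-\delta}$ (for a small fixed $\delta\ge 0$) together with $\arg h(t)\in[0,\pi-\theta]$. The inductive step uses (i) a lower bound $|1+h(t-1)|\ge\frac{d-\delta}{d+1-\delta}\sqrt{1+\frac{4(d+1)\sin^2(\theta/2)}{d^2}}$ that crucially exploits $\tau\le\cos\theta$ (this is exactly where the constraint $\theta\le\cos^{-1}\frac{1}{d+0.5}$ is used, and is not in your proposal); (ii) a clean argument bound $\arg(1+h(t-1))\le 1/d$ obtained from $d\sin\alpha+\cos\alpha\le\sqrt{d^2+1}$, which is sharper than your proposed $\arg(1+h(t))\le\frac{\pi-\theta}{d}$ and suffices because $\theta\le\pi/2$ makes $\pi-\theta-1\ge 0$; and (iii) a short logarithmic computation that plugs the lower bound from (i) into $|h(t)|=|\lambda|/|1+h(t-1)|^d$ and shows the result is $\le\tau$ exactly when $r\le\frac{2d(d+1)\sin^2(\theta/2)}{d^2+4(d+1)\sin^2(\theta/2)}$. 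Step (iii) is the quantitative content your proposal is missing; without an explicit computation of this kind the inductive hypothesis on the modulus does not close, and there is no contraction rate to lean on (as you yourself observe, the fixed point becomes parabolic as $\theta\to 0$). So the strategy is right, but the proof is not complete.
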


\begin{figure}[t]
  \centering
  \includegraphics[width=0.6\textwidth]{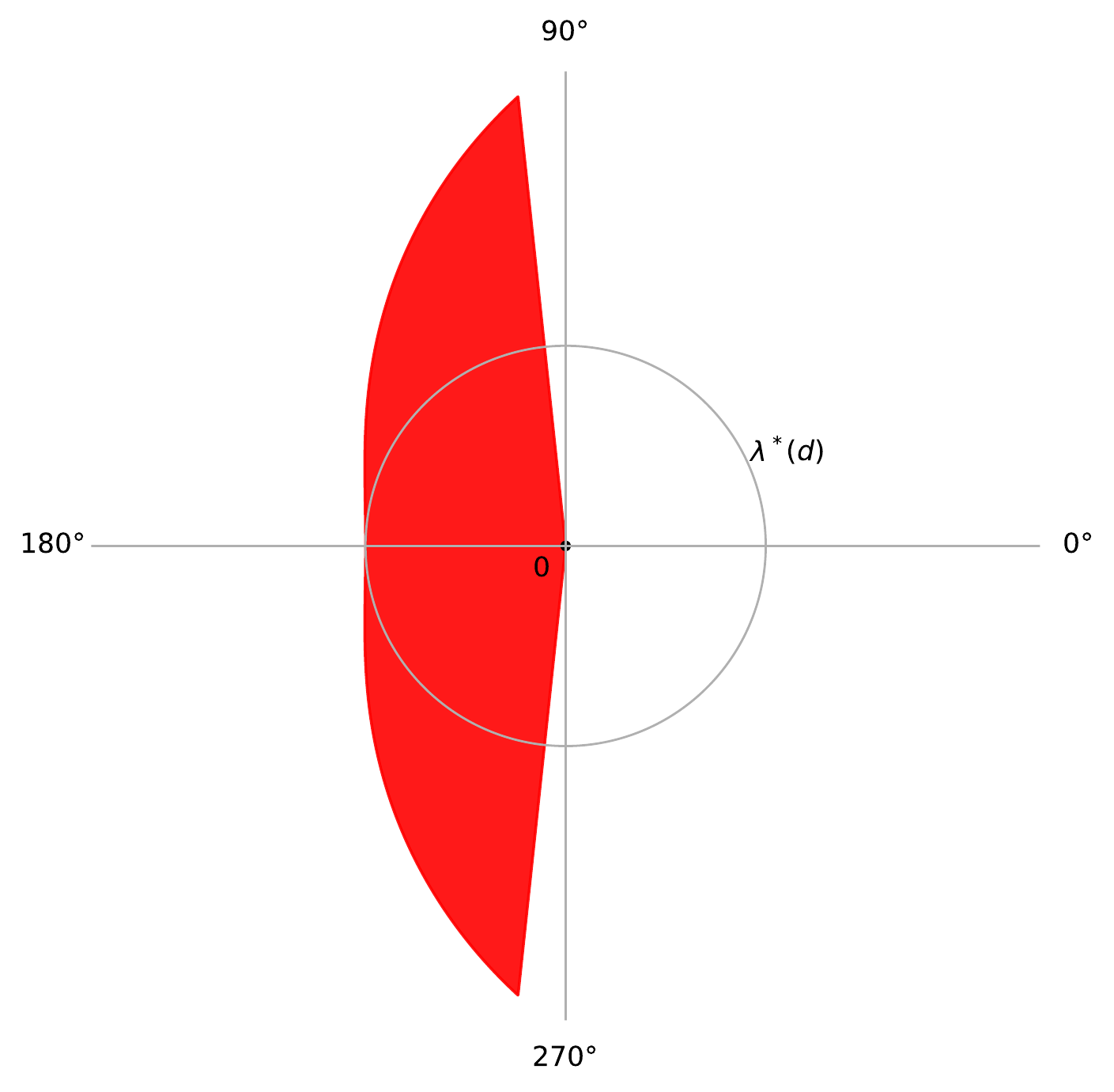}
  \caption{The red region is the zero-free region in
    \cref{thm:critical-vicinity}, plotted here for $d = 9$ (i.e., for graphs of
    degree at most $10$).  The black circle around the origin has radius
    $\lambda^*(d)$, and the markings are according to polar coordinates.}
  \label{fig:1}
\end{figure}

Before proving the theorem, we briefly describe the power law (analogous to the
one stated above for $\partial U_d$) that the region described in the theorem
follows.  Again, we denote by $\tilde{R}(\theta)$ the polar equation of the
boundary of the region described by the theorem, and let
$(\tilde{X}(\theta), \tilde{Y}(\theta))$ denote the corresponding Cartesian
coordinates $(\tilde{R}(\theta)\cos\theta, \tilde{R}(\theta)\sin\theta)$.  In
the vicinity of the point $-\lambda^*(d) = R_U(\pi) = \tilde{R}(\pi)$, a similar
computation as above then shows that for small $\phi$,
\begin{equation}
   \tilde{X}(\pi + \phi) = -\lambda^*(d) - \tilde{c}_d\cdot \abs{\phi}^{2} +
  o(\abs{\phi}^{2}),
\end{equation}
where $\tilde{c}_d = \frac{\lambda_*(d)}{2d}$ is a positive constant depending
only on $d$.

\begin{proof}[Proof of \Cref{thm:critical-vicinity}]
  We use \Cref{thm:Simons-result}.  In particular, we will show that the curve
  $h(t)$ defined there lies in the upper half plane $\inb{z | \Im z \geq 0}$ for
  every $t \geq 0$. (See \Cref{fig:h-curve} for an example of this curve for a
  particular setting of the parameters $d, r$ and $\theta$.) In fact, we will
  prove by an induction on $\ceil{t}$ that for all $t \geq 0$,
  \begin{enumerate}
  \item \label{item:1}
    $\abs{h(t)} \leq \tau \defeq \frac{1}{d + 1 - \delta} \leq \frac{1}{d + 0.5}
    \leq \frac{1}{2}$, and
  \item \label{item:2} $\arg{h(t)} \in [0, \pi - \theta]$,
  \end{enumerate}
  where $\delta = \delta(d, r, \theta) < 1/2$ is a fixed non-negative constant.  We
  first verify these for the base case $\ceil{t} = 1$.  In this case, we have
  $\abs{h(t)}$ =
  $t\abs{\lambda} \leq \exp(r)\frac{d^d}{(d+1)^{d+1}} \leq \frac{1}{d+1} \leq
  \tau$ since $r \leq d\log(1+1/d)$.  Further,
  $\arg{h(t)} = \arg{\lambda} = \pi - \theta$.

  We now proceed with the induction.  For ease of notation, we denote
  $\abs{h(t-1)}$ as $\rho$ and $\arg h(t-1)$ as $\pi - \alpha$.  From the
  induction hypothesis, we have $\rho \leq \tau$ and
  $\alpha \in [\theta, \pi]$.  This gives
  \begin{align}
    \abs{1 + h(t-1)}
    &= \sqrt{1 + \rho^2  - 2\rho\cos\alpha}
      \geq \sqrt{1 + \rho^2  - 2\rho\cos\theta} \nonumber\\
    &\geq\sqrt{1 + \tau^2  - 2\tau\cos \theta} = (1-\tau)\sqrt{1 + 2\tau\cdot\frac{2\sin^2(\theta/2)}{(1-\tau)^2}}\nonumber\\
    &\geq \frac{d-\delta}{d+1-\delta}\cdot\sqrt{1 + \frac{4(d+1)\sin^2(\theta/2)}{d^2}}.\label{eq:3}
  \end{align}
  Here, for the second inequality we use the fact that the quantity inside the
  square-root is decreasing in $\rho$ since
  $\rho \leq \tau \leq \frac{1}{d + 0.5} \leq \cos \theta$, since
  $\theta \in [0, \cos^{-1}(1/(d+0.5))]$.  Similarly, the last inequality uses
  $\tau \geq \frac{1}{d+1}$. Now, note that since $\arg h(t-1) \in [0, \pi]$,
  $\abs{h(t-1)} \leq \tau \leq 1$ and $\abs{1 + h(t-1)} > 0$, we have
  \begin{equation}
    \arg (1 + h(t-1)) \geq 0,\label{eq:2}
  \end{equation}
  and also
  \begin{align}
    \arg(1+h(t-1)) &\leq \frac{\Im h(t-1)}{1 + \Re h(t-1)} = \frac{\rho\sin
                     \alpha}{1 - \rho \cos\alpha} \leq \frac{\tau\sin
                     \alpha}{1 - \tau \cos\alpha}\nonumber\\
    &=\frac{\sin \alpha}{d + 1 - \delta - \cos \alpha}.
  \end{align}
  From this, using the fact that
  $d \sin \alpha + \cos \alpha \leq \sqrt{d^2 + 1}$ for $\alpha \in [0, \pi]$,
  we deduce that
  \begin{equation}
    \arg(1+h(t-1)) \leq 1/d,\label{eq:1}
  \end{equation}
  provided that $\delta \leq 1/2$.  Now, we have
  \begin{equation}
    \arg h(t) = \arg \lambda - d\arg (1 + h(t-1)) = \pi - \theta - d\arg(1+h(t-1)),
  \end{equation}
  so that \cref{eq:1,eq:2} imply \cref{item:2} of the induction hypothesis
  (since $\theta \leq \pi/2$ so that $\pi - \theta - 1 \geq 0$).  For
  \cref{item:1}, we use \cref{eq:3} to calculate
  \begin{align}
    \log \abs{h(t)} + \log (d + 1 - \delta)
    & = r + d\log d - (d+1)\log(d+1) - d \log \abs{1 + h(t-1)} + \log (d + 1 - \delta)
    \nonumber\\
    &\leq r + (d+1)\log\inp{1 -
      \frac{\delta}{d+1}} - d\log \inp{1 - \frac{\delta}{d}}\\
    &\qquad\qquad - \frac{d}{2}\log \inp{1 + \frac{4 (d+1) \sin ^2 (\theta/2)}{d^2}}\\
    &< r - \delta + \frac{d\delta}{d - \delta} -
      \frac{2d(d+1)\sin^2(\theta/2)}{d^2 + 4(d+1)\sin^2(\theta/2)}\\
    &= r - \frac{2d(d+1)\sin^2(\theta/2)}{d^2
      + 4(d+1)\sin^2(\theta/2)} + \frac{\delta^2}{d-\delta} \leq 0,
  \end{align}
  provided $r \leq \frac{2d(d+1)\sin^2(\theta/2)}{d^2 + 4(d+1)\sin^2(\theta/2)}$
  and $\delta$ is chosen to be a small enough non-negative constant depending only
  upon $r, d$ and $\theta$ (note that the second inequality above is strict when
  $\delta$ is positive).
\end{proof}

\section{A zero-free region close to the imaginary axis}
\label{sec:zero-free-region-imaginary-axis}
The analysis in the previous section was devoted to understanding the behavior
of the zero-free region close to the negative real line.  We now turn to
understanding the behavior of the zero region close to the imaginary axis.  The
theorem below, while it covers all arguments in the third argument, is most
interesting when the argument of the activity $\lambda$ is closer to $\pi/2$
than to $\pi$.

  \begin{figure}[t]
    \centering
    \includegraphics[scale=0.8]{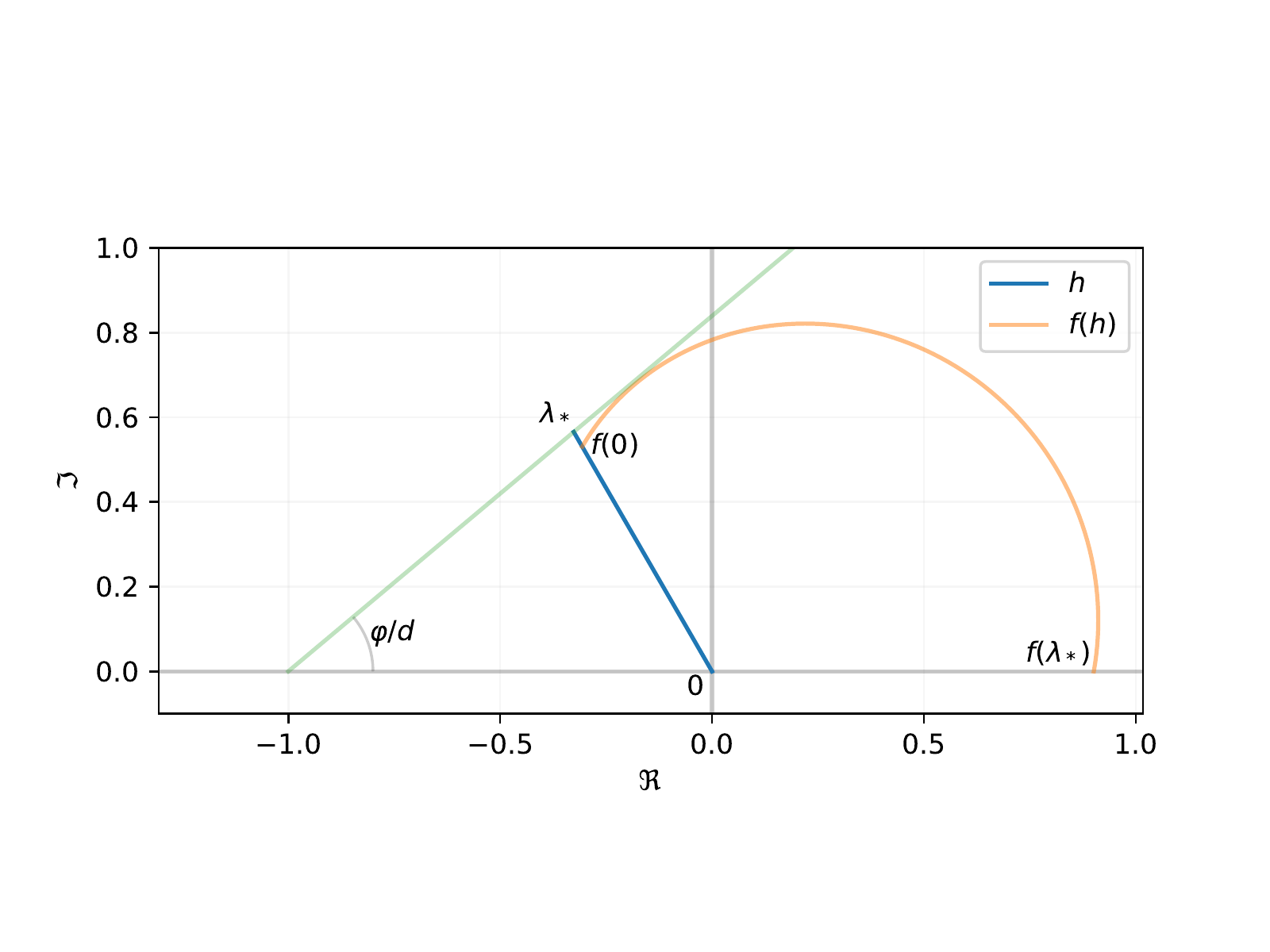}
    \caption{A sketch of the $h$ curve in the proof of \cref{thm:lhp}.  In the
      notation of the theorem, the sketch corresponds to $d \defeq 3$ and
      $\varphi \defeq 2\pi/3$, and $r$ has been chosen to be $0.99$ times the
      value on the right hand side of \cref{eq:b}. As before, $f$ is the map
      $z \mapsto \lambda/(1+z)^d$.}
    \label{fig:close-to-imaginary-h}
  \end{figure}

\begin{theorem}
Let $d \geq 2$.  Suppose that $\lambda = r e^{\iota \varphi}$ where
  $\varphi \in [\frac{\pi}{2}, \pi)$ and
  \begin{align}\label{eq:b}
    r < \frac{\sin(\varphi/d)\sin^d(\varphi)}{\sin((d-1)\varphi/d-d\psi^*)\sin^d(\varphi-\psi^*)},
 \end{align}
 where $\psi^*=  \max\left(\frac{1}{d+1}\left((2-1/d)\varphi-\pi\right),0\right)$.
 Then $Z_G(\lambda) \neq 0$ for any graph of degree at most   $d+1$.\label{thm:lhp}
\end{theorem}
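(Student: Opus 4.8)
The plan is to apply the geometric criterion \Cref{thm:init-curve} to an explicitly constructed, $V$-shaped broken line $h$ through the origin, tailored to $\lambda = re^{\iota\varphi}$. Since $\varphi\in[\pi/2,\pi)$, the point $\lambda=f(0)$ already sits well inside the upper half plane and iterating $f$ makes $\arg(1+z)$ oscillate, so the natural trapping curve consists of a ``main'' segment from $0$ to a point $z^+$ above the real axis, chosen so that $\arg(1+z^+)$ attains a prescribed value $\beta^+$, together with a short segment from a point $z^-$ just below the real axis, with $\arg(1+z^-)=-\psi^*$, back to $0$. When $\varphi\le\tfrac{d\pi}{2d-1}$ one has $\psi^*=0$ and the lower segment collapses to the single point $0$, so $h=[0,z^+]$; the quantity $\psi^*$ is exactly the amount of ``room below the real axis'' one must allow for the image $f(h)$ to remain $-1$-covered by $h$, and $\tfrac{d\pi}{2d-1}$ is where this first becomes necessary. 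A useful bookkeeping identity is $(d-1)\varphi/d - d\psi^* = \pi-(\varphi-\psi^*)$, so $\sin\!\bigl((d-1)\varphi/d - d\psi^*\bigr)=\sin(\varphi-\psi^*)$ and \eqref{eq:b} reads $r<\sin(\varphi/d)\,\sin^{d}\varphi/\sin^{d+1}(\varphi-\psi^*)$ when $\psi^*>0$, and $r<\sin(\varphi/d)/\sin((d-1)\varphi/d)$ when $\psi^*=0$ (which at $\varphi=\pi/2$ is exactly $\tan(\pi/(2d))$, matching \cite{bencs18:_note}).

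The verification follows the four bullets of \Cref{thm:init-curve}. The first two (the curve meets $0$; $\arg(1+h(t))$ is strictly increasing) are built into the construction once one checks, via $\tfrac{d}{dt}\arg(1+tz)=\Im z/\abs{1+tz}^2$, that $\arg z^+\in(0,\pi)$ and $\arg z^-\in(-\pi,0)$. The ``convexity'' of $h$ is automatic for a broken line with vertex $0$: a convex combination of two curve points lies either on one of the two segments (hence is some $h(t)$, which $-1$-covers itself) or in the triangle $\mathrm{conv}\{z^-,0,z^+\}$, where it is $-1$-covered by the curve point lying on the ray through it from $-1$. This also shows that the shadow $S$ of $h$ from $-1$ contains $0$, consists only of points with $\arg(1+z)\in[-\psi^*,\beta^+]\ni 0$, and hence excludes $-1$.

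The substance is the fourth bullet: $f(h(t))=\lambda/(1+h(t))^d$ is $-1$-covered by $h$ for every $t$. This splits into (i) an argument estimate, that $\arg(1+f(h(t)))$ never leaves $[-\psi^*,\beta^+]$, and (ii) a modulus estimate, that at the matching argument $\abs{1+f(h(t))}$ is at least the value of $\abs{1+\cdot}$ along $h$. For (i) I would use exactly the computation from the proof of \Cref{thm:sokal-conj}: writing $\gamma(t)=1+f(h(t))$,
\[
  \frac{d}{dt}\arg\gamma(t)\;=\;\Im\frac{\gamma'(t)}{\gamma(t)}\;=\;-d\,\Im\!\left(\frac{h'(t)}{1+h(t)}\cdot\frac{f(h(t))}{1+f(h(t))}\right),
\]
so the sign of the derivative is governed by a sum of arguments, and one argues this sum stays in $[0,\pi]$ (with a symmetric statement on the lower segment). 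In the range $\psi^*=0$ this yields monotonicity of $\arg(1+f(h(t)))$ between its endpoint values $\arg(1+\lambda)$ and $\arg(1+f(z^+))$, both of which the choice of $z^+$ and the bound on $r$ place inside $[0,\beta^+]$. For (ii), every relevant modulus — $\abs{1+z^\pm}$ and $\abs{1+f(z^\pm)}$ — is read off by the law of sines from the triangle with vertices $-1$, $0$, $z^\pm$, and the demand that the image modulus dominate the curve modulus at the worst point reduces, after the simplification above, to precisely \eqref{eq:b}. Feeding the verified curve into \Cref{thm:init-curve} then gives $Z_G(\lambda)\neq 0$ for all $G\in\cG_{d+1}$.

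The main obstacle will be step (i) when $\psi^*>0$: there $\arg(1+f(h(t)))$ is genuinely non-monotone — it first rises above $\arg(1+\lambda)$ and then falls — so one cannot merely test the endpoints but must bound its unique interior maximum, and it is exactly this bound that dictates the value of $\psi^*$. Choosing the matching $\beta^+=\arg(1+z^+)$ so that the short lower segment $[z^-,0]$ actually catches the image of the top of the curve, while keeping the trigonometric inequalities clean, is the delicate part. A secondary, purely bookkeeping, difficulty is gluing the $\psi^*=0$ and $\psi^*>0$ regimes so that the single inequality \eqref{eq:b} covers both, in particular checking that the two expressions agree at $\varphi=\tfrac{d\pi}{2d-1}$.
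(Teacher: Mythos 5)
Your proposal takes a different route from the paper, and unfortunately I believe it contains a genuine gap.

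The paper's proof does \emph{not} use a $V$-shaped broken line. It applies \Cref{thm:init-curve} to the single segment $h(t)=t\lambda_*$, $t\in[0,1]$, where $\lambda_*$ is the point on the ray $\arg z=\varphi$ with $\arg(1+\lambda_*)=\varphi/d$. Since $\arg f(t\lambda_*)=\varphi-d\arg(1+t\lambda_*)$ decreases from $\varphi$ to $0$, the image of $h$ stays in the closed upper half plane, and $-1$-covering reduces entirely to showing $\arg(1+f(t\lambda_*))\le \varphi/d$ for all $t\in[0,1)$. There is never any need for the image to dip below $\arg(1+\cdot)=0$, and correspondingly no need for a lower leg on the curve. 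The quantity $\psi^*$ plays a purely analytic role: it is the lower bound on $\alpha=\arg(1+t_*\lambda_*)$ that one extracts, in a proof by contradiction, from the simultaneous requirements $\arg\gamma(t_*)=\varphi/d$ and $\tfrac{d}{dt}\arg\gamma(t)\big|_{t_*}\le 0$ (equivalently $\arg\gamma'(t_*)\le\varphi/d$, i.e.\ $2\varphi-\pi-(d+1)\alpha\le\varphi/d$). It is not ``room below the real axis.''

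The concrete flaw in your argument is the monotonicity claim. You assert that when $\psi^*=0$ the sum of arguments $2\varphi-(d+1)\alpha(t)-\arg\gamma(t)$ stays in $[0,\pi]$ for all $t$, which would force $\arg\gamma(t)$ to be monotone decreasing from $\arg(1+\lambda)$ to $0$. But evaluate at $t=0$: $\alpha(0)=0$ and $\arg\gamma(0)=\arg(1+\lambda)$, so the sum is $2\varphi-\arg(1+\lambda)$. For $\varphi>\pi/2$ and $r$ small we have $2\varphi-\pi>0$ yet $\arg(1+\lambda)=O(r)$ is tiny, so the sum exceeds $\pi$. Hence $\Im\bigl(\tfrac{h'(0)}{1+h(0)}\cdot\tfrac{f(h(0))}{1+f(h(0))}\bigr)<0$ and the derivative $\tfrac{d}{dt}\arg\gamma(t)\big|_{t=0}=-d\,\Im(\dots)$ is \emph{positive}: $\arg(1+f(h(t)))$ rises above its initial value $\arg(1+\lambda)$ before coming back down. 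This non-monotonicity is present already for $\psi^*=0$, not only for $\psi^*>0$ as your sketch suggests, and it is precisely what the paper's contradiction argument is designed to handle (locate the last crossing $t_*$ of the level $\varphi/d$ with nonpositive derivative, bound $\alpha\in[\psi^*,(1-1/d)\varphi/d]$, and contradict the hypothesis on $r$ via the law of sines in the triangle $0,1,\gamma(t_*)$ and the strict monotonicity of $x\mapsto\sin^d(\varphi-x)\sin((1-1/d)\varphi-dx)$). Your bookkeeping identity $(d-1)\varphi/d-d\psi^*=\pi-(\varphi-\psi^*)$ and the check that the $\psi^*=0$ formula gives $\tan(\pi/(2d))$ at $\varphi=\pi/2$ are both correct, but they do not repair the monotonicity step, and the modulus estimate you gesture at in (ii) has no precise ``worst point'' to plug into unless you carry out the crossing-point analysis.
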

\begin{proof}
  For a given $\varphi$, let us choose $r_*,\lambda_*$ so that
  $\arg(1+\lambda_*)=\varphi/d$ and $\arg(\lambda_*)=\arg(\lambda)=\varphi$.  Note that for
  $t\in[0,1]$, $\varphi/d\ge \arg(1+t\lambda_*)\ge 0$. As before denote
  $f(z)=\frac{\lambda}{(1+z)^d}$.
  
  We claim that the function $h(t)=t\lambda_*$ satisfies the conditions of
  \cref{thm:init-curve}. The first three conditions of the theorem are satisfied
  trivially, thus we only have to show that the points $f(t\lambda_*)$,
  $t \in [0, 1]$ are $-1$-covered by the segment
  \begin{equation}
    \{t\lambda_* ~:~t\in[0,1] \}.\label{eq:28}
  \end{equation}
  Further, as $\arg(f(t\lambda_*))=\varphi-d\arg(1+t\lambda_*)$ decreases
  monotonically from $\varphi$ to $0$ as $t$ goes from $0$ to $1$, it would be
  sufficient to prove that $\arg(1+f(t\lambda_*))$ is at most $\varphi/d$ for
  all $t\in [0,1)$.  (See the example sketch in
  \cref{fig:close-to-imaginary-h}.)

  To prove this, we investigate the curve $\gamma(t)=1+f(t\lambda_*)$ for
  $t\in[0,1]$.  Note first that we have
  $0 \leq \arg(\gamma(t)) \leq \arg(f(t\lambda_*)) \leq \varphi$ for all
  $t \in [0,1]$.  Further, for all $t \in [0, 1)$, we have (here, we denote by
  $\gamma'(t)$ the right one-sided derivative of $\gamma$ at $t$)
  \begin{equation}
    \gamma'(t)=-\frac{d\lambda\lambda_*}{(1+t\lambda_*)^{d+1}}.\label{eq:29}
  \end{equation}
  Since $\varphi/d \ge \arg(1+t\lambda_*)\ge 0$ and
  $\arg(-\lambda\lambda_*)=2\varphi-\pi$, we have that as $t$ increases from $0$
  to $1$,
  \begin{equation}
    \arg \gamma'(t) =2\varphi-\pi-(d+1)\arg(1+t\lambda_*)\label{eq:35}
  \end{equation}
  decreases monotonically from $2\varphi-\pi \in [0, \pi)$ to $(1-1/d)\varphi-\pi \geq -\pi+\varphi/d$. 

  Next, we compute that
  $\diff{}{s} \arg\gamma(s)\vert_{s = t} = \Im \frac{\gamma'(t)}{\gamma(t)}$
  has the same sign as $\sin(\arg \gamma'(t) - \arg \gamma(t))$: note that the
  existence of this derivative follows since $\gamma(t)$ and $\gamma'(t)$ are
  non-zero, and since $\arg \gamma(t) \in [0, \pi)$ for $t \in [0, 1)$.

  We now claim that $\arg(\gamma(t)) \leq \varphi/d$ for all $t \in [0, 1]$.
  For the sake of contradiction let us assume that $\arg\gamma(t)$ can be bigger
  than $\varphi/d$. As $\arg\gamma(1)=0$, we then see that there must exist a
  $t_* \in [0, 1)$ such that $\arg \gamma(t_*) =\varphi/d$ and
  $\diff{}{s} \arg \gamma(s) \vert_{s = t_*} \leq 0$.  Using the fact (noted
  just below \cref{eq:35}) that $\pi > \arg \gamma'(t_*) \geq -\pi + \varphi/d$
  and the expression for the sign of $\diff{}{s} \arg(\gamma(s))\vert_{s = t_*}$
  noted above, these conditions can be written as
  \begin{equation}
    \arg \gamma(t_*) =\varphi/d \qquad \textrm{and}\qquad \arg\gamma'(t_*)
    \leq \varphi/d.\label{eq:36}
  \end{equation}
  Define $\alpha\defeq\arg(1+t_*\lambda_*)\geq 0$. \Cref{eq:36}, along with the
  expression for $\arg \gamma'(t)$ in \cref{eq:35}, and the fact
  $\arg \gamma(t_*) \leq \arg f(t_*\lambda_*) = \phi - d\alpha$ noted above,
  gives
  \begin{equation}
    \alpha \leq \frac{\varphi}{d}(1-1/d) \qquad \textrm{and}\qquad \alpha \geq
    \psi^*.\label{eq:32}
  \end{equation}
  The standard sine rule applied to the triangle with vertices $0,1,\gamma(t_*)$
  gives us that (see \cref{fig:triangle-sine-rule})
  \begin{equation}
    \frac{\abs{\gamma(t_*) - 1}}{\sin \arg \gamma(t_*)} =
    \frac{1}{\sin \inp{\arg \inp{\gamma(t_*) - 1} - \arg \gamma(t_*)} }. \label{eq:31}
  \end{equation}
  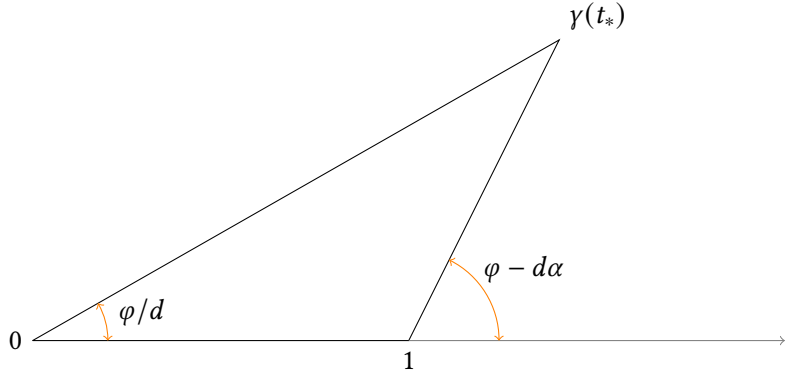
\begin{figure}[t]
    \centering
    \begin{tikzpicture}
      \draw
      (0, 0) coordinate (a) node[left] {$0$} -- (5,0) coordinate (b) node[below] {$1$} -- (7,4) coordinate (c) node[above right] {$\gamma(t_*)$} -- (a) pic["$\varphi/d$",draw=orange,<->,angle eccentricity=1.5,angle radius=1cm]
      {angle=b--a--c};
\draw [->, gray] (b)
      -- (10,0) coordinate (d) node[right] {};
\draw pic["$\varphi-d\alpha$",draw=orange,<->,angle eccentricity=1.5,angle
      radius=1.2cm] {angle=d--b--c};
    \end{tikzpicture}
    \caption{Angles appearing in \cref{eq:32}.}
    \label{fig:triangle-sine-rule}
  \end{figure}
  Using the facts that (i)
  $\abs{\gamma(t_*) - 1} = \abs{f(t_*\lambda_*)} = r/\abs{1 + t_*\lambda_*}^d$,
  (ii) $\arg\inp{\gamma(t_*) - 1} = \arg f(t_*\lambda_*) = \varphi - d\alpha$,
  and (iii) $\arg \gamma(t_*) = \varphi/d$, we get
  \begin{align}
    \nonumber \sin(\varphi/d)&=\frac{r\sin^d(\varphi-\alpha)}{\sin^d\varphi}\sin((1-1/d)\varphi-d\alpha)\\
                             &< \sin(\varphi/d)\frac{\sin^d(\varphi-\alpha)\sin((1-1/d)\varphi-d\alpha)}{\sin^d(\varphi-\psi^*)\sin((1-1/d)\varphi-d\psi^*)}. \label{eq:ineq}
  \end{align}
  But, in conjunction with \cref{eq:32}, this contradicts the fact that the
  function
  \begin{equation}
    x \mapsto \sin^d(\varphi-x)\sin((1-1/d)\varphi-dx)\label{eq:33}
  \end{equation}
  is a strictly decreasing function on $(\psi^*,(1-1/d)\varphi/d)$, since its
  derivative is
  \begin{equation}
    -{d\sin^{d-1}(\varphi-x)}\cdot\sin\inp{(2-1/d)\varphi - (d+1)x}<0.
    \label{eq:34}
  \end{equation}
  Here, we use the condition $x \in (\psi^*,(1-1/d)\varphi/d)$ and the
  definition of $\psi^*$ as
  $\max\left(\frac{1}{d+1}\left((2-1/d)\varphi-\pi\right),0\right)$ to deduce
  the last inequality.
\end{proof}

\section{Zero free regions in  the right  half plane}
\label{sec:right-half-plane}
In this section, we use the framework of \Cref{sec:crit-orig-compl} to establish
a zero free region for the independence polynomial in the right half plane.  The
results here improve upon those in the manuscript~\cite{bencs18:_note} when
$\lambda$ is close to the real axis and match those results when $\lambda$ is on
the imaginary axis: see \Cref{rem:right-improvement} for a more detailed
discussion.

We start with some notation.  For any integer $d \geq 2$, let
$\theta_d\in(\pi/(2(d+1)),\pi/2)$ be the unique solution of
\begin{equation}
  \label{eq:40}
  {\tan(2x/d)}=\frac{\tan((\pi/2-x)/d)}{1-\frac{\tan((\pi/2-x)/d)}{\tan(x)}}.
\end{equation}
To see that $\theta_d$ exists and is unique, we first note that the left hand
side of the above equation is monotone increasing while the right hand side is
monotone decreasing, so that it has at most one solution in the given interval.
To show existence, we note that as $x \downarrow \pi/(2(d+1))$, we have
\begin{equation}
  \label{eq:41}
  \lim_{x \downarrow \pi/(2(d+1))} \tan (2x/d) = \tan(\pi/(d(d+1))) < \lim_{x
    \downarrow \pi/(2(d+1))}
  \frac{\tan((\pi/2-x)/d)}{1-\frac{\tan((\pi/2-x)/d)}{\tan(x)}}=\infty,
\end{equation}
while as $x \uparrow \pi/2$ we have
\begin{equation}
  \label{eq:42}
  \lim_{x\uparrow\pi/2} \tan(2x/d) = \tan(\pi/d) >\lim_{x\uparrow\pi/2} \frac{
    \tan((\pi/2-x)/d) }{ 1 -\frac{\tan((\pi/2-x)/d)}{\tan(x)}} = 0.
\end{equation}
Together, these show that there is a unique solution $\theta_d$, such that for
all $x$ such that $\pi/(2(d+1)) < x < \theta_d$, we have
\begin{equation}
  \label{eq:10}
  \frac{\tan(2x/d)}{\sin x} < \frac{\tan ((\pi/2-x)/d)}{\sin(x) - \cos(x)\tan((\pi/2-x)/d)},
\end{equation}
while for $\theta_d < x < \pi/2$,
\begin{equation}
  \label{eq:11}
  \frac{\tan(2x/d)}{\sin x} > \frac{\tan ((\pi/2-x)/d)}{\sin(x) - \cos(x)\tan((\pi/2-x)/d)},
\end{equation}
For later comparison with results of \cite{bencs18:_note}, we also note that at
$x=\pi/6$ we have (assuming $d \geq 3$)

\begin{equation}
  \frac{\tan(2x/d)}{\sin x} = \frac{\tan(\pi/(3d))}{\sin (\pi/6)} <
  \frac{\tan(\pi/(3d))}{\sin(\pi/6)-\cos(\pi/6){\tan(\pi/(3d))}},\label{eq:37}
\end{equation}
which implies that $\pi/6<\theta_d$ for all $d \geq 2$ (this conclusion is
trivially true for $d = 2$).  We are now ready to state the main result
describing zeros in the right half plane.
\begin{theorem}\label{lem:righthalfplane_better}
  Let $\theta\in(0,\pi/2]$ and $0\le r\le r_{1,d}(\theta)$, where
  \begin{equation}
    \label{eq:43}
    r_{1,d}(\theta)=\left\{
      \begin{array}{cc}
        \frac{\tan(2\theta/d)}{\sin(\theta)}
        & \textrm{if $\theta \leq \theta_d$}\\
        \frac{\tan((\theta+\beta^*)/d)}{\sin(\theta)}
        & \textrm{if $\theta > \theta_d$}
      \end{array}
    \right. ,
  \end{equation}
  and where $\beta^*\in (0,\theta)$ is defined as the unique solution of
  \begin{equation}
    \label{eq:44}
    \frac{\tan((\theta+x)/d)}{\sin
      (\theta)}=\frac{\tan((\pi/2-\theta)/d)}{\sin(x)-\cos(x)\tan((\pi/2-\theta)/d)},
  \end{equation}
  when $\theta \in [\theta_d, \pi/2)$ and as $\beta^* \defeq 0$ when
  $\theta = \pi/2$.  If $\lambda=r\exp(\iota\theta)$, then $Z_G(\lambda)\neq 0$
  for any graph $G$ with degree at most $d+1$.
\end{theorem}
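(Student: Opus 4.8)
The plan is to apply \Cref{thm:init-curve} with an explicitly constructed, piecewise-linear curve $h$, in the same spirit as the proofs of \Cref{thm:sokal-conj} and \Cref{thm:lhp}. Write $\lambda = re^{\iota\theta}$. Since the fourth condition of \Cref{thm:init-curve} forces the image curve $\{f(h(t))\}$ to sweep out arguments on both sides of the real axis once $\theta$ is not tiny, the curve $h$ will in general be two-sided: it passes through $0$, has $\arg(1+h(t))$ strictly increasing along it, dips slightly below the real axis on one side and rises above it on the other. In the ``easy'' regime $\theta\le\theta_d$ the curve consists of two line segments meeting at $0$; in the ``hard'' regime $\theta>\theta_d$ an extra linear piece is appended on the upper side, and the location of the new breakpoint is controlled by the angle $\beta^*$ of \Cref{eq:44}. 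The breakpoints and lengths of the segments will be chosen so that (i) $\lambda$ lies on, or is $-1$-covered by, the piece of $h$ of argument $\theta$ --- this is where $r\le r_{1,d}(\theta)$ first gets used --- and (ii) the fourth bullet of \Cref{thm:init-curve} can be verified, with $r_{1,d}(\theta)$ being precisely the largest $r$ for which (ii) still goes through.

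Conditions (1)--(3) of \Cref{thm:init-curve} --- that $h$ meets $0$, that $\arg(1+h(t))$ is strictly increasing, and the ``$-1$-covered convexity'' --- should hold essentially by construction: the first two are immediate from the layout of the segments, and the third amounts to checking that the piecewise-linear curve ``turns the right way'' at each breakpoint, a one-sided-derivative bookkeeping identical to \Cref{eq:15} in the proof of \Cref{thm:Simons-result}. Here the definitions of $\theta_d$ (via \Cref{eq:40}) and of $\beta^*$ (via \Cref{eq:44}), whose existence and uniqueness were settled in the discussion preceding the theorem, are what guarantee that the breakpoint angles are mutually consistent and that $\arg(1+h(\cdot))$ genuinely increases across the bend.

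The substance is the fourth condition: for every $t$, $f(h(t)) = \lambda/(1+h(t))^d$ must be $-1$-covered by some $h(t')$. Because $\arg f(h(t)) = \theta - d\arg(1+h(t))$ is monotone along $h$ and sweeps the full range of arguments attained by points of $h$, this reduces --- exactly as in the reductions opening the proofs of \Cref{thm:lhp} and \Cref{thm:sokal-conj} --- to controlling $\arg(1+f(h(t)))$ and showing it never leaves the interval of values of $\arg(1+h(\cdot))$. Following the template of \Cref{thm:lhp}, set $\gamma(t) = 1 + f(h(t))$, so that $\frac{d}{dt}\arg\gamma(t) = \Im(\gamma'(t)/\gamma(t))$ has the sign of $\sin(\arg\gamma'(t) - \arg\gamma(t))$, where from $\gamma'(t) = -d\lambda\,h'(t)/(1+h(t))^{d+1}$ one reads off $\arg\gamma'(t) = \arg(-\lambda h'(t)) - (d+1)\arg(1+h(t))$, which is monotone on each linear piece. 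Assuming for contradiction that $\arg\gamma$ first reaches the boundary of the allowed interval at some $t_*$ with the derivative pointing outward yields both an upper bound on $\alpha \defeq \arg(1+h(t_*))$ (from the boundary value) and a lower bound $\alpha \ge \beta^*$ (from the sign constraint on $\arg\gamma'(t_*)$, just as $\psi^*$ arose in \Cref{thm:lhp}). Applying the sine rule to the triangle with vertices $0,1,\gamma(t_*)$, as in \Cref{eq:31}--\Cref{eq:ineq}, then produces an identity which, combined with the monotonicity of a sine-product function of the shape $x \mapsto \sin^d(\theta-x)\sin((1-1/d)\theta - dx)$ on the interval with endpoint $\beta^*$ (cf.\ \Cref{eq:33}--\Cref{eq:34}), contradicts $r \le r_{1,d}(\theta)$. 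The portions of $\{f(h(t))\}$ above and below the real axis are handled by two essentially mirror-image runs of this argument, as in the proof of \Cref{thm:sokal-conj}.

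I expect the main obstacle to be the regime $\theta>\theta_d$: there a two-segment curve no longer suffices, one must design the extra linear piece correctly, and the parameter $\beta^*$ must be carried through the fourth-condition analysis, showing that the ``worst case'' breakpoint $t_*$ is pinned at angle $\beta^*$ rather than at $0$. This is exactly where the precise form of \Cref{eq:44} is forced --- it is the equation making the sine-product monotonicity argument tight at $x = \beta^*$ --- and where the case split at $\theta = \theta_d$ must be checked to be seamless. As a sanity check, the degenerate endpoint $\theta = \pi/2$ (where $\beta^* = 0$ and $r_{1,d}(\pi/2) = \tan(\pi/d)$) should recover, and match, the corresponding bound of \cite{bencs18:_note}.
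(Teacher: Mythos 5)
Your proposal diverges from the paper's actual proof in two structural respects, and at least one of those divergences would cause the plan as stated to fail or become substantially harder. First, the curve design: you propose a two-segment curve for $\theta\le\theta_d$ and a \emph{three}-segment curve for $\theta>\theta_d$, with $\beta^*$ controlling ``the location of the new breakpoint.'' The paper in fact uses a two-segment piecewise-linear curve in \emph{both} regimes, of the form $h(t)=-t\,r_2 e^{-\iota\beta}$ for $t\in[-1,0]$ and $h(t)=t\,\iota\tan\psi$ for $t\in[0,1]$ (see \cref{lem:generic-right-half-plane}); the regime split merely changes the \emph{parameters} ($\beta=\theta$ in the easy case versus $\beta=\beta^*<\theta$ in the hard case, with corresponding $\psi$ and $r_2$). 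Thus $\beta^*$ is the opening angle of the \emph{lower} ray of the same two-piece curve, not the angle of an appended upper segment, and your reading of where \cref{eq:44} comes from --- ``the equation making the sine-product monotonicity argument tight at $x=\beta^*$'' --- does not match its actual role, which is to equate the two candidate expressions for $r_2$ (namely $\gamma_1$ and $\gamma_2$ in \cref{eq:22,eq:23}) that arise from the $\arg\max$ definition and from the formula for $r_{1,d}(\theta)$.

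Second, your plan to verify condition (4) of \cref{thm:init-curve} by the contradiction-plus-sine-rule template of \cref{thm:lhp} is not what the paper does here, and it is not clear it goes through for a curve with a kink at the origin: the argument in \cref{thm:lhp} leans on the image curve $\gamma(t)=1+f(t\lambda_*)$ having $\gamma'$ given by one clean formula along a single ray, whereas on a two-sided curve $\arg\gamma'(t)$ jumps at $t=0$ and you would need to track the ``first boundary crossing'' $t_*$ across pieces and on both sides of the real axis. The paper instead verifies condition (4) directly: it identifies the regions $L_1$ and $L_2$ (\cref{eq:46,eq:47}) that are $-1$-covered by the two segments, bounds $\arg f(h(t))$ and $\abs{f(h(t))}$ outright for $t\in[0,1]$ to place $f(h(t))$ in $L_1\cup L_2$, and for $t\in[-1,0]$ shows $\Im f(h(t))$ is non-decreasing by computing $\arg g'(t)\in[0,\pi)$ --- a shorter, purely monotonicity-based argument with no sine-rule step and no sine-product function. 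Your plan could conceivably be pushed through but it is not established, and as written it rests on a wrong picture of the curve. Finally, a small but telling arithmetic slip: at $\theta=\pi/2$ with $\beta^*=0$ one has $r_{1,d}(\pi/2)=\tan\bigl((\pi/2)/d\bigr)=\tan(\pi/(2d))$, not $\tan(\pi/d)$; the former is what matches \cite{bencs18:_note}.
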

The proof of this theorem is based on the following technical lemma, which
employs the framework of \Cref{thm:init-curve}.

\begin{lemma}
  \label{lem:generic-right-half-plane}
  Let $\lambda=r\exp(\iota \theta)$ with $\theta\in(0,\pi/2]$ and $r > 0$.
  Suppose that there exist $r_2 \geq 0$ and $\beta, \psi \in[0,\pi/2)$ satisfying
 \begin{enumerate}
 \item $\theta-d\psi\ge -\beta$,\label{item:7}
 \item $r_2 \geq r$, \label{item:8}
 \item $r \sin(\theta) \le \tan \psi$,\label{item:9}
 \item $\theta+d\arg(1+r_2\exp(\iota \beta))\le \pi/2$,\label{item:10} and
 \item $\theta \geq \beta$.\label{item:11}
 \end{enumerate}
 Then the curve
 \begin{equation}
   \label{eq:45}
   h(t) \defeq
   \begin{cases}
     -t\cdot r_2\exp(-\iota\beta) &\textrm{if $t\in[-1,0]$}, and\\
     t\cdot \tan(\psi)\iota &\textrm{if $t\in[0,1]$}
   \end{cases}
 \end{equation}
 satisfies the conditions of \Cref{thm:init-curve}.
\end{lemma}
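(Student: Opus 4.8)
The plan is to verify directly that the curve $h$ in \eqref{eq:45} meets the four bullets of \Cref{thm:init-curve} on $[a,b]=[-1,1]$. The first bullet holds since $h(0)=0$. For the second, compute $\arg(1+h(t))$ piecewise: on $[-1,0]$ the point $1+h(t)=1-t\,r_2e^{-\iota\beta}$ runs along a line segment from $1+r_2e^{-\iota\beta}$ (whose argument is $-\arg(1+r_2e^{\iota\beta})\le0$) to $1$, so $\arg(1+h(t))$ increases from $-\zeta_{\max}$ to $0$, where $\zeta_{\max}\defeq\arg(1+r_2e^{\iota\beta})$ satisfies $0\le\zeta_{\max}<\beta$; on $[0,1]$, $1+h(t)=1+\iota t\tan\psi$ has argument increasing from $0$ to $\psi$ (and $\psi>0$ by~(\ref{item:9})). (When $\beta=0$ the left piece degenerates onto the positive real axis; one then passes to the part of the curve ``visible from $-1$'' as in \Cref{rem:alternative} and the rest goes through unchanged.) For the third bullet, observe that for a line not through the origin the modulus of its points, viewed as a function of their argument, has the form $c\sec(\cdot-\cdot)$ and is hence convex; at the vertex $h(0)=0$ the left piece has left-derivative $-\cot\beta\le0$ while the right piece (the vertical line $\Re=1$) has right-derivative $0$, so modulus-as-a-function-of-argument is convex along the entire curve. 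Therefore the curve lies no farther from $-1$ than any of its chords, and since a convex combination of two curve points lies on a chord with argument between those of its endpoints, it is $-1$-covered by the curve point of equal argument, which is precisely the required convexity.

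The substance of the lemma is the fourth bullet: $f(h(t))=\lambda/(1+h(t))^d$ must be $-1$-covered by $h$ for all $t\in[-1,1]$. I would compute $f$ explicitly on each linear piece. On the left piece, parametrizing by $\zeta\defeq\arg(1+(-t)r_2e^{\iota\beta})\in[0,\zeta_{\max}]$, the sine rule applied to the triangle $0,1,1+h(t)$ gives $|1+h(t)|=\sin\beta/\sin(\beta-\zeta)$, whence $f(h(t))=r\bigl(\tfrac{\sin(\beta-\zeta)}{\sin\beta}\bigr)^{d}e^{\iota(\theta+d\zeta)}$. On the right piece, parametrizing by $\eta\defeq\arg(1+h(t))\in[0,\psi]$, one gets $|1+h(t)|=\sec\eta$ and $f(h(t))=r\cos^{d}\eta\,e^{\iota(\theta-d\eta)}$. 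In both cases the argument of $f(h(t))$ lies in $(-\pi/2,\pi/2]$: on the left, $\theta+d\zeta\le\theta+d\zeta_{\max}\le\pi/2$ by~(\ref{item:10}) and $\theta+d\zeta\ge\theta>0$; on the right, $\theta-d\eta\le\theta\le\pi/2$ and $\theta-d\eta\ge\theta-d\psi\ge-\beta>-\pi/2$ by~(\ref{item:7}) and~(\ref{item:11}). In particular $\Re f(h(t))\ge0$ everywhere on the curve.

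The sign condition $\Re f(h(t))\ge0$ gives for free the modulus half of ``$-1$-covered'' whenever the covering point is on the right segment: writing $\chi\defeq\arg(1+f(h(t)))$, if $\chi\ge0$ then $|1+f(h(t))|=(1+\Re f(h(t)))\sec\chi\ge\sec\chi$, and $\sec\chi$ is exactly the modulus of the right-segment point of argument $\chi$ (provided $\chi\le\psi$). So it remains to show that $\chi$ always lies in the range $[-\zeta_{\max},\psi]$ swept by $\arg(1+h(\cdot))$, and to supply the matching modulus bound when $\chi<0$ (covering by the left segment). Writing $f(h(t))=Ae^{\iota\Phi}$, the bound $\chi\le\psi$ is equivalent to $A\sin(\Phi-\psi)\le\sin\psi$, which is vacuous when $\Phi\le\psi$. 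On the right piece this reads $r\cos^{d}\eta\,\sin(\theta-d\eta-\psi)\le\sin\psi$; the left-hand side has $\eta$-derivative $-rd\cos^{d-1}\eta\,\cos\bigl((d+1)\eta-(\theta-\psi)\bigr)\le0$ on the relevant range, so it suffices to check $\eta=0$, i.e.\ $r\sin(\theta-\psi)\le\sin\psi$; this follows from~(\ref{item:9}) since $\sin(\theta-\psi)=\sin\theta\cos\psi-\cos\theta\sin\psi\le\sin\theta\cos\psi$, hence $r\sin(\theta-\psi)\le r\sin\theta\cos\psi\le\tan\psi\cos\psi=\sin\psi$.

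The step I expect to be the main obstacle is the rest of the fourth bullet, namely the two ``cross-covering'' cases: (i) points $f(h(t))$ coming from the left piece, which lie in the closed first quadrant and so must be $-1$-covered by the right segment, forcing one to bound $\chi\le\psi$ while exploiting the coupling of $A$ and $\Phi$ (the crude estimate $A\le r$ does not suffice); and (ii) points $f(h(t))$ from the right piece with $\theta-d\eta<0$, which lie in the lower half-plane and so must be $-1$-covered by the left segment, forcing one to bound both $\chi\ge-\zeta_{\max}$ and $|1+f(h(t))|\ge\sin\beta/\sin(\beta-|\chi|)$. Here modulus and argument cannot be decoupled and all five hypotheses enter together --- in particular~(\ref{item:8}) and~(\ref{item:10}). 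The technical core is the behaviour of one-variable functions of the shape $\sin^{d}(\beta-x)\sin(\theta+dx-\psi)$ and their right-piece analogues; the identity
\[
\frac{d}{dx}\Bigl[\sin^{d}(\beta-x)\sin(\theta+dx-\psi)\Bigr]=d\sin^{d-1}(\beta-x)\,\sin\bigl((d+1)x-(\beta+\psi-\theta)\bigr)
\]
shows that such a function attains its minimum in the interior and its maximum at an endpoint $x\in\{0,\zeta_{\max}\}$. The endpoint $x=0$ is handled exactly as in the previous paragraph; the endpoint $x=\zeta_{\max}$ requires combining the definition of $\zeta_{\max}$ via~(\ref{item:10}) (which forces $\theta+d\zeta_{\max}\le\pi/2$, so that $\sin(\theta+d\zeta_{\max}-\psi)\le\cos\psi$) with~(\ref{item:8}), and verifying this endpoint bound is the crux of the argument. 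Once these estimates are assembled, all four bullets of \Cref{thm:init-curve} hold for $h$, and the theorem yields $Z_G(\lambda)\ne0$ for every $G\in\cG_{d+1}$.
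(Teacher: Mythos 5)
There is a genuine gap, and you flag it yourself: the proposal does not complete the fourth (and only substantial) condition of \Cref{thm:init-curve}. You say explicitly that ``the step I expect to be the main obstacle'' is the pair of cross-covering cases, and that ``verifying this endpoint bound is the crux of the argument.'' Without that, the lemma is not proved. The partial calculations you do carry out are correct (the sine-rule parametrizations, the equivalence $\chi\le\psi \iff A\sin(\Phi-\psi)\le\sin\psi$, and the monotonicity of $\eta\mapsto\cos^{d}\eta\,\sin(\theta-d\eta-\psi)$ on the relevant range $\eta\le(\theta-\psi)/d$), but they only handle one of several sub-cases, and the combination of all five hypotheses in the remaining sub-cases is exactly where the work lies.

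More importantly, the paper's proof avoids these trigonometric gymnastics entirely, by a change of target: instead of proving that $f(h(t))$ is $-1$-covered by a \emph{specific} point of the curve (which forces one to match arguments exactly and then compare moduli, hence the sine-rule formulas), it first identifies two easily-described sets $L_1=\{z:\Re z\ge 0,\ 0\le\Im z\le\tan\psi\}$ and $L_2=\{z:-\beta\le\arg z\le 0,\ \Im z\ge -r_2\sin\beta\}$ that the two legs of $h$ already $-1$-cover, and then shows $f(h(t))\in L_1\cup L_2$ for every $t$. For $t\in[0,1]$ this reduces to the crude bounds $\arg f(h(t))\in[-\beta,\theta]$ and $|f(h(t))|\le r\le r_2$ plus hypothesis (\ref{item:9}); for $t\in[-1,0]$, the paper notes $\arg f(h(t))\in[\theta,\pi/2]$ by (\ref{item:10}), observes $f(h(0))=\lambda\in L_1$, and then shows $\Im f(h(t))$ is non-increasing as $t$ decreases by checking $\arg\!\big((f\circ h)'(t)\big)\in[0,\pi)$ directly from (\ref{item:10}) and (\ref{item:11}). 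This monotonicity step completely replaces the endpoint-extremum analysis you anticipate as the crux, and it is the conceptual difference between the two routes. I would suggest adopting the $L_1\cup L_2$ decomposition: your direct approach can likely be forced through, but the case analysis you are embarking on (in particular the lower-half-plane case (ii), where you must bound $|1+f(h(t))|$ from below by $\sin\beta/\sin(\beta-|\chi|)$) will be significantly messier than the paper's two-line derivative check.
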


\begin{figure}[t]
  \centering
  \includegraphics[scale=0.6]{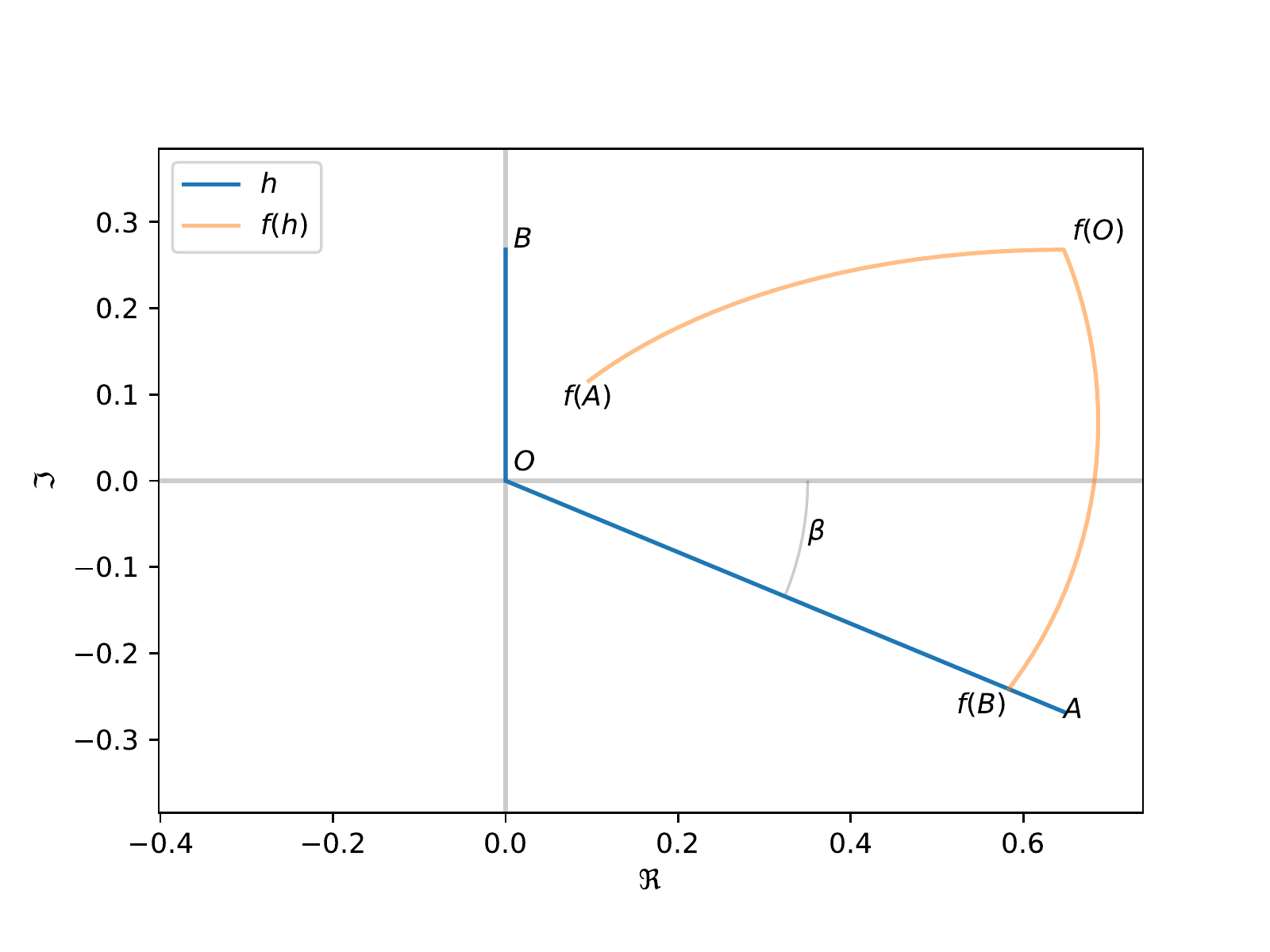}
  \caption{An example sketch of the $h$ curve from
    \cref{lem:generic-right-half-plane}.  In the notation of the lemma, the
    sketch corresponds to $d \defeq 3$,
    $\beta \defeq \theta \defeq \pi/(2(d+1))$, $\psi \defeq 2\theta/d$,
    $r_2 \defeq r \defeq \tan(2\theta/d)/\sin \theta$ and
    $\lambda \defeq r\exp(\iota \theta)$. As before, $f$ is the map
    $z \mapsto \lambda/(1+z)^d$.}
  \label{fig:right-half-plane-h}
\end{figure}

\begin{proof}
  Let $A$ denote the point $r_2\exp(-\iota\beta)$, $B$ the point
  $\iota \tan \psi$, and $O$ the origin. (See \cref{fig:right-half-plane-h} for
  an example sketch.)  Since $h(0) = 0$, $h$ satisfies the first condition of
  \Cref{thm:init-curve}.  Further, the curve $h(t)$ traverses the directed line
  segment $AO$ as $t$ varies from $-1$ to $0$ and the line segment $OB$ as $t$
  varies from $0$ to $1$, and this establishes the second condition of
  \Cref{thm:init-curve} (since $\beta \in [0, \pi/2)$ and $\psi \geq 0$).

  A convex combination of any two points $h(t_1)$ and $h(t_2)$, where
  $t_1 < t_2$, lies either on the curve $h$ (when $0 \not\in (t_1, t_2)$), or
  on the boundary of the triangle with vertices $h(t_1) \in AO, h(0) = 0$ and
  $h(t_2) \in OB$ (when $0 \in (t_1, t_2)$).  It is therefore $-1$ covered by
  $h(t)$ for some $t \in [t_1, t_2$].  This establishes the third condition of
  \Cref{thm:init-curve}.

  Note that the segment $OB$ of the curve $h$ $-1$-covers every point in the set
  \begin{equation}
    \label{eq:46}
    L_1\defeq \inb{z: \Re z \geq 0 \text{ and } 0 \leq \Im z \leq \Im B = \tan
      \psi \geq 0},
  \end{equation}
  while the segment $AO$ $-1$-covers every point in the set
  \begin{equation}
    \label{eq:47}
    L_2\defeq \inb{z: -\pi/2 < -\beta \leq \arg z \leq 0 \text{ and } \Im z \geq
      \Im A = -r_2\sin \beta \leq 0}.
  \end{equation}

  It remains to verify the fourth condition of \Cref{thm:init-curve}, which is
  that for every $t \in [0, 1]$, $f(h(t))$ is $-1$-covered by some point on the
  curve $h$.  We do so by proving that for all $t \in [-1, 1]$,
  $f(h(t)) \in L_1 \cup L_2$.

  Consider first a point $f(h(t))$ for $t \in [0, 1]$.  Note that as $t$
  increases from $0$ to $1$, $\arg(1 + h(t))$ increases from $0$ to $\psi$.
  From \cref{item:7} in the statement of the lemma, we thus get that
  $\arg f(h(t)) \in [-\beta, \theta]$, while \cref{item:8} gives
  $\abs{f(h(t))} \leq r \leq r_2$.  Together with \cref{item:9}, these imply
  that for $t \in [0, 1]$,
  \begin{enumerate}
  \item $0 \leq \Im f(h(t)) \leq r\sin \theta \leq \tan \psi$ (when
    $\arg f(h(t)) \geq 0$), and
  \item $0 \geq \Im f(h(t)) \geq -r\sin\beta \geq -r_2\sin\beta$ (when
    $\arg f(h(t)) \leq 0$).
  \end{enumerate}
  Thus, for all $t \in [0, 1]$, $f(h(t)) \in L_1 \cup L_2$, and thus is
  $-1$-covered by the curve $h$.

  Now, consider a point $f(h(t))$ for $t\in [-1, 0]$.  Define
  $g(t) \defeq f(h(t))$.  From \cref{item:10}, we get that
  $\arg g(t) \in [\theta, \pi/2]$ for all $t \in [-1, 0]$. We also have
  $g(0) = \lambda \in L_1$ (where the last inclusion follows from
  \cref{item:9}).  Thus, in order to establish that $g(t) \in L_1$ for all
  $t \in [-1, 0]$, it suffices to prove that $k(t) \defeq \Im g(t)$ has a
  non-negative right derivative at every $t \in [-1, 0)$.  The latter in turn
  would follow if we establish that $\arg g'(t) \in [0, \pi)$ for all
  $t \in [-1, 0)$, where $g'(t)$ denotes the right derivative of $g$ at $t$.

  We now compute, for $t \in [-1, 0)$,
  \begin{equation}
    \label{eq:48}
    g'(t) = d \cdot r_2\cdot\frac{\lambda}{(1 - t\cdot r_2\exp(-\iota \beta))^d}
    \frac{\exp(-\iota\beta)}{(1 - t\cdot r_2\exp(-\iota \beta))}
  \end{equation}
  so that (after multiplying denominators with conjugates and ignoring positive
  real factors)
  \begin{equation}
    \label{eq:49}
    \arg g'(t) = \arg\inp{\underbrace{\insq{\exp(\iota \theta) \cdot (1 - t\cdot
          r_2\exp(\iota \beta))^d}}_{=: \mu} \cdot \underbrace{ \insq{\exp(-\iota
          \beta) - t \cdot r_2} }_{=:\nu} }.
 \end{equation}
 Since $t \in [-1, 0)$, \cref{item:10} in the statement of the lemma then
 implies that $\arg \mu \in [\theta, \pi/2]$.  Further,
 $\arg \nu \in [-\beta, 0]$.  Together with $\theta \geq \beta$ (\cref{item:11}
 in the statement of the lemma), this implies that
 $\arg g'(t) = \arg\inp{\mu \cdot \nu} \in [0, \pi/2] \subseteq [0, \pi)$.
 Given the above discussion about the relationship between the functions $k$ and
 $g$, this completes the proof.
\end{proof}

With the above lemma, we can now complete the proof of
\Cref{lem:righthalfplane_better}.

\begin{proof}[Proof of \Cref{lem:righthalfplane_better}]
  We will prove that if $r\leq r_{1,d}(\theta)$, then we can find
  $\beta, \psi\in[0,\pi/2)$ and $r_2>0$, such that the conditions of
  \Cref{lem:generic-right-half-plane} hold. By applying
  \cref{thm:init-curve} for the curve $h$ obtained from the lemma, we get
  the desired statement.
  \begin{enumerate}
  \item Consider first the case $\theta \leq \theta_d$. Then, let
    \begin{equation}
      \beta=\theta,~\qquad \psi=\frac{2\theta}{d}, \qquad
      r_2=\argmax\limits_{t\ge 0} \left(\arg(1+t\exp(\iota \theta))\le
        \frac{\pi/2-\theta}{d}\right)\le\infty.\label{eq:21}
    \end{equation}
    \Cref{item:7,item:9,item:10,item:11} in \Cref{lem:generic-right-half-plane}
    are satisfied by construction (as discussed below, we might have to redefine
    $r_2$ to make sure it is finite). We now show that \cref{item:8} holds:
    \begin{itemize}
    \item If $\theta \leq \pi/(2(d+1))$, then $r_2=\infty\ge r$.  In this case,
      we redefine $r_2 = r$, and all of the conditions continue to hold.
    \item Otherwise $\theta> \pi/(2(d+1))$.  In this case, we have
      $r_2=\frac{\tan((\pi/2-\theta)/d)}{\sin(\theta)-\cos(\theta)\tan((\pi/2-\theta)/d)}
      \geq \frac{\tan(2\theta/d)}{\sin(\theta)}=r_{1,d}(\theta) \geq r$, where
      the first inequality follows from \cref{eq:10} since
      $\pi/(2(d+1)) < \theta \leq \theta_d$.
    \end{itemize}
  \item Consider now the case $\theta >\theta_d$. Then let $\beta$ be
    $\beta^*\in [0,\theta]$ as described in the statement of the theorem.  By
    definition, $\beta^* = 0$ when $\theta = \pi/2$, so we first show that even
    when $\theta \in (\theta_d, \pi/2)$, this $\beta^*$ exists and is unique. To
    see this, note that
    \begin{equation}
      \gamma_1(x) \defeq \frac{\tan((\theta+x)/d)}{\sin (\theta)}\label{eq:22}
    \end{equation}
    is continuous, monotone increasing and positive on $[0,\theta]$ (when
    $0 < \theta < \pi/2$).  On the other hand,
    \begin{equation}
      \gamma_2(x) \defeq
      \frac{\tan((\pi/2-\theta)/d)}{\sin(x)-\cos(x)\tan((\pi/2-\theta)/d)}\label{eq:23}
    \end{equation}
    is continuous in
    $\left[0, (\pi/2-\theta)/d)\right) \cup \left((\pi/2-\theta)/d,
      \theta\right]$, negative in $\left[0, (\pi/2-\theta)/d)\right)$, and
    monotone decreasing in $\left((\pi/2-\theta)/d,\theta\right]$.  Further, in
    the interval $\left((\pi/2-\theta)/d,\theta\right]$ we also have
    \begin{equation}
      \gamma_1((\pi/2-\theta)/d) <\infty,
      \qquad
      \textrm{and}
      \qquad
      \lim_{x\downarrow (\pi/2-\theta)/d}\gamma_2(x) = \infty,\label{eq:24}
    \end{equation}
    at the left endpoint, while at the right endpoint, $\theta>\theta_d$ implies
    $\gamma_1(\theta) > \gamma_2(\theta)$ (due to \cref{eq:11}).  The above
    observations imply that when $\theta \in (\theta_d, \pi/2)$,
    $\gamma_1(x) = \gamma_2(x)$ has exactly one solution
    $\beta^* \in [0, \theta]$, which lies in $((\pi/2-\theta)/d, \theta]$.

    Then, we define
    \begin{equation}
      \begin{gathered}
        \beta = \beta^*, \qquad \psi=(\theta+\beta)/d,\\
        r_2=
        \begin{cases}
          \argmax\limits_{t\ge 0} \left(\arg(1+t\exp(\iota \beta))\le
            \frac{\pi/2-\theta}{d}\right)<\infty & \text{ when } \theta \in
          (\theta_d, \pi/2),\\
          \tan (\pi/(2d)) & \text{ when } \theta = \pi/2.
        \end{cases}
      \end{gathered}\label{eq:14}
  \end{equation}
  Note that $r_2$ is finite when $\theta \in (\theta_d, \pi/2)$ since
    $\beta = \beta^* > (\pi/2-\theta)/d$.

    Again, \cref{item:7,item:9,item:10,item:11} in
    \Cref{lem:generic-right-half-plane} are satisfied by construction. We now
    show that \cref{item:8} holds.  To see this, we first note that when
    $\theta_d < \theta < \pi/2$, \cref{item:8} holds since in that case,
    \cref{eq:14} and the definition of $\beta^*$ give
    $r_2=\frac{\tan((\pi/2-\theta)/d)}{\sin(\beta^*)-\cos(\beta^*)\tan((\pi/2-\theta)/d)}=
    \frac{\tan((\theta+\beta^*)/d)}{\sin(\theta)}=r_{1,d}(\theta) \geq r$. In
    the remaining case $\theta = \pi/2$, \cref{item:8} holds since in that case,
    \cref{eq:14} gives again $r_2 = \tan(\pi/(2d)) = r_{1,d}(\pi/2) \geq r$.
    \qedhere
 \end{enumerate}
\end{proof}

\begin{remark}\label{rem:right-improvement}
  We remark that the zero-free region established in
  \Cref{lem:righthalfplane_better} contains the zero-free region described in
  the manuscript \cite{bencs18:_note} when $\arg \lambda = \theta \leq \theta_d$
  (recall also from the paragraph just before the statement of
  \Cref{lem:righthalfplane_better} that $\theta_d$ is always greater than
  $\pi/6$).  For such $\theta$, the above theorem gives zero-freeness for all
  $\lambda$ with $\abs{\lambda} < \tan(2\theta/d)/\sin(\theta)$ and
  $\arg(\lambda) = \theta$.  On the other hand, the zero-free region in Theorem
  1.4 of~\cite{bencs18:_note} requires at least that
  $\abs{\lambda} \leq \tan(\pi/(2d))$.  But when $d \geq 2$ and
  $\theta \in (0, \pi/2)$, elementary arguments involving the convexity of the
  function $\theta \mapsto \tan(2\theta/d) - \sin(\theta)\tan(\pi/(2d))$ in the
  interval $(0, \pi/ 2)$ imply that
  $\tan(\pi/(2d))< \tan(2\theta/d)/\sin(\theta)$, showing that
  \cref{lem:righthalfplane_better} gives a larger zero-free region.  For the
  case $\theta = \pi/2$, we compute directly that
  $r_{1, d}(\pi/2) = \tan(\pi/(2d))$.

  For the case $\theta_d < \arg \lambda < \pi/2$, the zero-free region in
  \cite{bencs18:_note} has only an implicit description, and numerical
  calculations show that even in this case, the zero free region described in
  \cref{lem:righthalfplane_better} is better than the one in
  \cite{bencs18:_note}, except possibly in the close vicinity of
  $\theta = \pi/2$.

\end{remark}

\bibliography{zeros}
\bibliographystyle{acm}

\end{document}